\documentclass{article}

\PassOptionsToPackage{numbers, compress}{natbib}

\usepackage[final]{neurips_2022}

\usepackage[utf8]{inputenc} %
\usepackage[T1]{fontenc}    %
\usepackage[colorlinks=true, allcolors=blue]{hyperref}       %
\usepackage{url}            %
\usepackage{booktabs}       %
\usepackage{amsfonts}       %
\usepackage{nicefrac}       %
\usepackage{microtype}      %
\usepackage{graphicx}
\usepackage{subfigure}
\usepackage{dsfont}

\usepackage{multicol}

\title{\emph{Muffliato}: Peer-to-Peer Privacy Amplification for Decentralized Optimization and Averaging}

\author{
    Edwige Cyffers \textsuperscript{1,$\star$}
	Mathieu Even\textsuperscript{2,$\star$},
	Aurélien Bellet \textsuperscript{1},
	Laurent Massoulié \textsuperscript{2}\\
\\
	\textsuperscript{$\star$} Equal contribution\\
	\\
	\textsuperscript{1}Univ. Lille, Inria, CNRS, Centrale Lille, UMR 9189 - CRIStAL, F-59000 Lille\\
	\\
	\textsuperscript{2}Inria Paris - Département d’informatique de l’ENS, PSL Research University, Paris, France\\
}

\renewcommand{\leq}{\leqslant}
\renewcommand{\geq}{\geqslant}

\usepackage[utf8]{inputenc} %
\usepackage[T1]{fontenc}    %
\usepackage{hyperref}       %
\usepackage{url}            %
\usepackage{booktabs}       %
\usepackage{amsfonts}       %
\usepackage{nicefrac}       %
\usepackage{microtype}      %

\usepackage{amsmath}
\usepackage{color}
\usepackage{enumitem}
\usepackage{amssymb}
\usepackage{amsthm}
\usepackage{graphicx}
\usepackage{dsfont}

\usepackage{caption}
\usepackage[ruled, noend]{algorithm2e}
\usepackage{comment}
\renewcommand{\leq}{\leqslant}
\renewcommand{\geq}{\geqslant}

\def\<{\langle}
\def\>{\rangle}
\def\|{\Vert}

\def\eps{\varepsilon}

\newcommand{\esp}[1]{\mathbb{E}\left[#1\right]}

\newcommand{\NRM}[1]{{{\left\| #1\right\|}}} %
\newcommand{\set}[1]{{{\left\{ #1\right\}}}} %
\newcommand{\proba}[1]{\mathbb{P}\left(#1\right)}

\newcommand{\gau}[1]{\mathcal{N}(0, #1)}

\newcommand{\Obs}[1]{\mathcal{O}_{#1}}

\newcommand{\R}{\mathbb{R}}

\newcommand{\cA}{\mathcal{A}}

\newcommand{\cO}{\mathcal{O}}
\newcommand{\cN}{\mathcal{N}}
\newcommand{\cX}{\mathcal{X}}

\newcommand{\cV}{\mathcal{V}}

\newcommand{\cP}{\mathcal{P}}

\newcommand{\cE}{\mathcal{E}}
\newcommand{\cI}{\mathcal{I}}
\newcommand{\N}{\mathbb{N}}
\newcommand{\E}{\mathbb{E}}
\newcommand{\cD}{\mathcal{D}}
\newcommand{\one}{\mathds{1}}

\newcommand{\cL}{\mathcal{L}}

\newcommand{\meane}{\overline{\varepsilon}}

\hypersetup{colorlinks=true, urlcolor= blue, linkcolor=blue, citecolor=blue}

\usepackage{amsfonts}
\usepackage{amsthm}
\usepackage{amsmath}
\usepackage{amssymb}
\usepackage{mathrsfs}
\usepackage{amscd}
\usepackage{caption}
\usepackage{indentfirst}
\usepackage{cleveref}

\newtheorem{definition}{Definition}
\newtheorem{theorem}{Theorem}
\newtheorem{proposition}{Proposition}
\newtheorem{cor}{Corollary}

\newtheorem{lemma}{Lemma}
\newtheorem{hyp}{Assumption}
\newtheorem{remark}{Remark}

\newcommand{\edgeuv}{{\{u,v\}}}
\newcommand{\edgevw}{{\{v,w\}}}
\newcommand{\edgeVw}{{\{V,w\}}}

\newcommand{\edgevtwt}{{\{v_t,w_t\}}}

\newcommand{\utov}{{u\to v}}

\newcommand{\wtov}{{w\to v}}

\usepackage[dvipsnames]{xcolor}

\usepackage{accents}
\usepackage{stackengine}

\newcommand{\divalpha}[2]{D_\alpha\big(#1\,||\, #2\big)}

\usepackage{amsfonts}
\usepackage{amsthm}
\usepackage{amsmath}
\usepackage{amssymb}
\usepackage{mathrsfs}
\usepackage{amscd}
\usepackage{caption}
\usepackage{indentfirst}
\usepackage{cleveref}

\begin{document}

\maketitle

\begin{abstract}
Decentralized optimization is increasingly popular in machine learning for its scalability and efficiency. Intuitively, it should also provide better privacy guarantees, as nodes only observe the messages sent by their neighbors in the network graph. But formalizing and quantifying this gain is challenging: existing results are typically limited to Local Differential Privacy (LDP) guarantees that overlook the advantages of decentralization. In this work, we introduce pairwise network differential privacy, a relaxation of LDP that captures the fact that the privacy leakage from a node $u$ to a node $v$ may depend on their relative position in the graph. We then analyze the combination of local noise injection with (simple or randomized) gossip averaging protocols on fixed and random communication graphs. We also derive a differentially private decentralized optimization algorithm that alternates between local gradient descent steps and gossip averaging. Our results show that our algorithms amplify privacy guarantees as a function of the distance between nodes in the graph, matching the privacy-utility trade-off of the trusted curator, up to factors that explicitly depend on the graph topology. Remarkably, these factors become constant for expander graphs. Finally, we illustrate our privacy gains with experiments on synthetic and real-world datasets.
\end{abstract}

\section{Introduction}

Training machine learning models traditionally requires centralizing data in a single server, raising issues of scalability and privacy. An alternative is to use Federated Learning (FL), where each user keeps her data on device \citep{mcmahan2017fl,kairouz_advances_2019}. In \emph{fully decentralized} FL, the common hypothesis of a central server is also removed, letting users, represented as nodes in a graph, train the model via peer-to-peer communications along edges. This approach improves scalability and robustness to central server failures, enabling lower latency, less power consumption and quicker deployment \cite{lopes2007incremental,boyd2006gossip,scaman17optimal, neglia19infocom, alghunaim2019primaldual,Lian2017CanDA,pmlr-v119-koloskova20a}.

Another important dimension is privacy, as a wide range of applications deal with sensitive and personal data. The gold standard to quantify the privacy leakage of algorithms is Differential Privacy (DP) \cite{dwork2013Algorithmic}. DP typically requires to randomly perturb data-dependent computations to prevent the final model from leaking too much information about any individual data point (e.g., through data memorization). However, decentralized algorithms do not only reveal the final model to the participating nodes, but also the results of some intermediate computations.
A standard solution is to use Local Differential Privacy (LDP) \citep{Kasiviswanathan2008,d13}, where random perturbations are performed locally by each user, thus protecting against an attacker that would observe everything that users share. From the algorithmic standpoint, local noise addition can be easily performed within decentralized algorithms, as done for instance in \cite{Huang2015a, Bellet2018a, leasgd, admm,9524471}. Unfortunately, LDP requires large amounts of noise, and thus provides poor utility.

In this work, we show that LDP guarantees give a very pessimistic view of the privacy offered by decentralized algorithms. Indeed, there is no central server receiving all messages, and the participating nodes can only observe the messages sent by their neighbors in the graph. So, a given node should intuitively leak less information about its private data to nodes that are far away.
We formally quantify this privacy amplification for the fundamental brick of communication
at the core of decentralized optimization: gossip algorithms. Calling \emph{Muffliato} the combination of local noise injection with a gossip averaging protocol, we precisely track the resulting privacy leakage between each pair of nodes. Through gossiping, the private values and noise terms of various users add up, obfuscating their contribution well beyond baseline LDP guarantees: as their distance in the graph increases, the privacy loss decreases.
We then show that the choice of graph is crucial to enforce a good privacy-utility trade-off while preserving the scalability of gossip algorithms.

Our results are particularly attractive in situations where nodes want stronger guarantees with respect to some (distant) peers. For instance, in social network graphs, users may have lower privacy expectations with respect to close relatives than regarding strangers. In healthcare, a patient might trust her family doctor more than she trusts other doctors, and in turn more than employees of a regional agency and so on, creating a hierarchical level of trust that our algorithms naturally match.

\subsection{Contributions and outline of the paper}

\emph{\textbf{(i)}} 
Inspired by the recent definitions of \citet{cyffers2020privacy}, we introduce \emph{pairwise network DP}, a relaxation of Local Differential Privacy which is able to quantify the privacy loss of a decentralized algorithm for each pair of distinct users in a graph.

\emph{\textbf{(ii)}} 
We propose \emph{Muffliato}\footnote{The name is borrowed from the Harry Potter series: it designates a ``spell that filled the ears of anyone nearby with an unidentifiable buzzing'', thereby concealing messages from unintended listeners through noise~injection.}, a privacy amplification mechanism composed of local Gaussian noise injection at the node level followed by gossiping for averaging the private values. It offers privacy amplification that increases as the distance between two nodes increases. Informally, the locally differentially private value shared by a node $u$ is mixed with other contributions, to the point that the information that leaks to another node $v$ can have a very small sensitivity to the initial value in comparison to the accumulated noise.

\looseness=-1 \emph{\textbf{(iii)}} %
We analyze both synchronous gossip \citep{dimakis2010synchgossip} and randomized gossip \citep{boyd2006gossip} %
under a unified privacy analysis with arbitrary time-varying gossip matrices. %
We show that the magnitude of the privacy amplification is significant: the average privacy loss over all the pairs in this setting reaches the optimal privacy-utility trade-off of a trusted aggregator, up to a factor $\frac{d}{\sqrt{\lambda_W}}$, where $\lambda_W$ is the weighted graph eigengap and $d$ the maximum degree of the graph. Remarkably, this factor can be of order 1 for expanders, yielding a sweet spot in the privacy-utility-scalability trade-off of gossip algorithms.
Then, we study the case where the graph is itself random and private, and derive stronger privacy guarantees.

\emph{\textbf{(iv)}}
Finally, we develop and analyze differentially private decentralized Gradient Descent (GD) and Stochastic Gradient Descent (SGD) algorithms to minimize a sum of local objective functions.
Building on \emph{Muffliato}, our algorithms alternate between rounds of differentially private gossip communications and local gradient steps. We prove that they enjoy the same privacy amplification described above for averaging, up to factors that depend on the regularity of the global objective.

\emph{\textbf{(v)}} We demonstrate the usefulness of our approach and analysis through experiments on synthetic and real-world datasets and network graphs. We illustrate how privacy is amplified between nodes in the graph as a function of their distance, and show how time-varying random graphs can be used to split the privacy loss more uniformly across nodes in decentralized optimization.

\subsection{Related work}

\emph{Gossip algorithms and decentralized optimization.} 
Gossip algorithms \citep{boyd2006randomized,dimakis2010synchgossip} were introduced to compute the global average of local vectors through peer-to-peer communication, and are at the core of many decentralized optimization algorithms. 
Classical decentralized optimization algorithms alternate between gossip communications and local gradient steps \citep{nedic2008distributed,koloskova2019decentralized,pmlr-v119-koloskova20a}, or use dual formulations and formulate the consensus constraint using gossip matrices to obtain decentralized dual or primal-dual algorithms \citep{scaman17optimal,hendrikx2019accelerated,even2020asynchrony,even2021continuized,kovalev2021adom, alghunaim2019primaldual}. We refer the reader to~\cite{nedic2018network} for a broader survey on decentralized optimization. Our algorithms are based on the general analysis of decentralized SGD in~\cite{pmlr-v119-koloskova20a}.

\looseness=-1 \emph{LDP and privacy amplification mechanisms.} Limitations of LDP for computing the average of the private values of $n$ users have been studied, showing
that for a fixed privacy budget, the expected squared error in LDP is $n$ times larger than in central DP
\citep{Chan2012}. More generally, LDP is also known to significantly reduce utility for many learning problems \cite{Zheng2017,Wang2018b}, which motivates the study of intermediate trust models. Cryptographic primitives, such as secure aggregation \citep{Dwork2006ourselves,Shi2011,Bonawitz2017a,ChanSS12,Jayaraman2018,Bell2020,gopa} and secure shuffling  \citep{Cheu2019,amp_shuffling,Balle2019b,Ghazi2020,clones}, as well as additional mechanisms such as amplification by subsampling \citep{Balle_subsampling} or amplification by iteration \citep{amp_iter}, can offer better utility for some applications, but cannot be easily applied in a fully decentralized setting, as they require coordination by a central server.

\emph{Privacy amplification through decentralization.}
The idea that decentralized communications can provide differential privacy guarantees was initiated by \cite{Bellet2020a} in the context of rumor spreading. Closer to our work, \cite{cyffers2020privacy} showed privacy amplification for random walk algorithms on complete graphs, where the model is transmitted from one node to another sequentially.
While we build on their notion of Network DP, our work differs from \cite{cyffers2020privacy} in several aspects: (i) our analysis holds for any graph and explicitly quantifies its effect, (ii) instead of worst-case privacy across all pairs of nodes, we prove pairwise guarantees that are stronger for nodes that are far away from each other, and (iii) unlike random walk approaches, gossip algorithms allow parallel computation and thus better scalability.

\section{Setting and Pairwise Network Differential Privacy}
\label{sec:setting}

We study a decentralized model where $n$ nodes (users) hold private datasets and communicate through gossip protocols, that we describe in Section~\ref{subsec:gossip_opt}. %
In Section~\ref{subsec:dp}, we recall differential privacy notions and the two natural baselines for our work,  central and local DP. Finally, we introduce in Section~\ref{subsec:pairwisedp} the relaxation of local DP used throughout the paper: the \emph{pairwise network DP}.

\subsection{Gossip Algorithms}
\label{subsec:gossip_opt}

We consider a connected graph $G=(\cV,\cE)$ on a set $\cV$ of $n$ users. An edge $\edgeuv\in\cE$ indicates that $u$ and $v$ can communicate (we say they are neighbors). Each user $v\in\cV$ holds a local  dataset $\cD_v$ and we aim at computing averages of private values. This averaging step is a key building block for solving machine learning problems in a decentralized manner, as will be discussed in Section~\ref{sec:gd}.
From any graph, we can derive a gossip matrix.

\begin{definition}[Gossip matrix] A gossip matrix over a graph $G$ is a \emph{symmetric} matrix $W\in\R^{n\times n}$ with non-negative entries, that satisfies $W\one=\one$ \emph{i.e.}~$W$ is stochastic ($\one\in\R^n$ is the vector with all entries equal to $1$), and such that for any $u,v\in\cV$, $W_{u,v}>0$ implies that $\edgeuv\in\cE$ or $u=v$.
\end{definition}

The iterates of synchronous gossip \citep{dimakis2010synchgossip} are generated through a recursion of the form $x^{t+1}=Wx^t$, and converge to the mean of initial values $x^0\in\mathbb{R}^n$ at a linear rate $e^{-t\lambda_W}$, with $\lambda_W$ defined below.

\begin{definition}[Spectral gap]
  \label{def:Laplacian}
  The spectral gap $\lambda_W$ associated with a gossip matrix $W$ is  $\min_{\lambda\in{\rm Sp}(W)\setminus\set{1}}(1-|\lambda|)$, where ${\rm Sp}(W)$ is the spectrum of $W$.
\end{definition}

The inverse of $\lambda_W$ is the relaxation time of the random walk on $G$ with transition probabilities $W$, and is closely related to the connectivity of the graph: adding edges improve mixing properties ($\lambda_W$ increases), but can reduce scalability by increasing node degrees (and thus the per-iteration communication complexity).
The rate of convergence can be accelerated to $e^{-t\sqrt{\lambda_W}}$
using re-scaled Chebyshev polynomials, leading to iterates of the form $x^t=P_t(W)x^{0}$ \citep{berthier2020jacobi}.
\begin{definition}[Re-scaled Chebyshev polynomials]\label{def:cheby} The re-scaled Chebyshev polynomials $(P_t)_{t\geq 0}$ with scale parameter $\gamma\in[1,2]$ are defined by second-order linear recursion:
\begin{equation}\label{eq:recursion_cheby}
    P_0(X)=1\,,\quad P_1(X)=X\,,\quad P_{t+1}(X)=\gamma XP_t(X) + (1-\gamma)P_{t-1}(X)\,,\,t\geq2\,.
\end{equation}
\end{definition}

\subsection{Rényi Differential Privacy \label{subsec:dp}}

Differential Privacy (DP)
quantifies how much information the output of an algorithm $\cA$ leaks about the dataset taken as input \cite{dwork2013Algorithmic}. DP requires to define an adjacency relation between datasets. %
In this work, we adopt a user-level relation \cite{user-dp} which aims to protect the whole dataset $\cD_v$ of a given user represented by a node $v\in\cV$. Formally, $\cD = \cup_{v\in\cV} \cD_v$
and $\cD' = \cup_{v\in\cV} \cD_v'$ are adjacent datasets, denoted by $\cD\sim\cD'$,  if there exists $v \in \cV$ such that only $\cD_v$ and $\cD_v'$ differ. We use $\cD\sim_v\cD'$ to denote that $\cD$ and $\cD'$ differ only in the data of user $v$.

We use Rényi Differential Privacy (RDP) \cite{DBLP:journals/corr/Mironov17} to measure the privacy loss, which allows better and simpler composition than the classical $(\epsilon,\delta)$-DP. Note that any $(\alpha, \eps)$-RDP algorithm is also $(\eps+\ln(1/\delta)/(\alpha-1),\delta)$-DP for any $0<\delta<1$ \cite{DBLP:journals/corr/Mironov17}. 

\begin{definition}[Rényi Differential Privacy]
  \label{def:DP}
  An algorithm $\cA$ satisfies $(\alpha, \eps)$-Rényi Differential Privacy (RDP) for $\alpha>1$ and $\eps>0$ if for all pairs of neighboring datasets $\cD \sim \cD'$:%
  \begin{equation}
  \label{eq:DP}
  D_{\alpha} \left(\cA(\cD) || \cA(\cD') \right) \leq \eps\,,
  \end{equation}
  where for two random variables $X$ and $Y$, $\divalpha{X}{Y}$ is the \emph{Rényi divergence} between $X$ and $Y$:%
  \begin{equation*}
      \textstyle\divalpha{X}{Y}=\frac{1}{\alpha-1}\ln \int \big(\frac{\mu_{X}(z)}{\mu_Y(z)}  \big)^{\alpha} \mu_Y (z) dz \,.
  \end{equation*}
with $\mu_X$ and $\mu_Y$ the respective densities of $X$ and $Y$.
\end{definition}

Without loss of generality, we consider gossip algorithms with a single real value per node (in that case, $\cD_v = \set{x_v}$ for some $x_v\in\R$), and we aim at computing a private estimation of the mean $\Bar{x}=(1/n)\sum_v x_v$.
The generalization to vectors is straightforward, as done subsequently for optimization in Section~\ref{sec:gd}.
In general, the value of a (scalar) function $g$ of the data can be privately released using the Gaussian mechanism \cite{dwork2013Algorithmic,DBLP:journals/corr/Mironov17}, which adds $\eta \sim \gau{ \sigma^2}$ to $g(\cD)$. It satisfies $(\alpha, \alpha \Delta_g^2/2 \sigma^2)$-RDP for any $\alpha >1$, where $\Delta_g=\sup_{\cD\sim\cD'}\|g(\cD)-g(\cD')\|$ is the sensitivity of $g$. We focus on the Gaussian mechanism for its simplicity (similar results could be derived for other DP mechanisms), and thus assume an upper bound on the $L_2$ inputs sensitivity. %
\begin{hyp}\label{hyp:sensitivity}
There exists some constant $\Delta>0$ such that for all $u\in\cV$ and for any adjacent datasets $\cD\sim_u\cD'$, we have $\NRM{x_u-x_u'}\leq \Delta$. %
\end{hyp}
In central DP, a trusted aggregator can first compute the mean $\Bar{x}$ (which has sensitivity $\Delta/n$) and then reveal a noisy version with the Gaussian mechanism. 
On the contrary, in local DP where there is no trusted aggregator and everything that a given node reveals can be observed, each node must locally perturb its input (which has sensitivity $\Delta$), deteriorating the privacy-utility trade-off. %
Formally, to achieve $(\alpha, \eps)$-DP, %
one cannot have better utility than:
\begin{equation*}
    \esp{\NRM{x^{\rm out}-\Bar{x}}^2}\leq
    \frac{\alpha\Delta^2}{2n\eps} \quad \text{for local DP}\,,\quad \text{and} \quad
    \esp{\NRM{x^{\rm out}-\Bar{x}}^2}\leq
    \frac{\alpha\Delta^2}{2n^2\eps} \quad \text{for central DP}\,,
\end{equation*}%
where $x^{\rm out}$ is the output of the algorithm. This $1/n$ gap motivates the study of relaxations of local~DP.

\subsection{Pairwise Network Differential Privacy}\label{subsec:pairwisedp}

We relax local DP to take into account privacy amplification between nodes that are distant from each other in the graph. 
We define a decentralized algorithm $\cA$ as a randomized mapping that takes as input a dataset $\cD = \cup_{v\in\cV} (\cD_v)$ and outputs the transcript of all messages exchanged between users in the network. A message between neighboring users $\edgeuv\in\cE$ at time $t$ is characterized by the tuple $(u,m(t),v)$: user $u$ sent a message with content $m(t)$ to user $v$, and $\cA(\cD)$ is the set of all these messages. Each node $v$ only has a partial knowledge of $\cA(\cD)$, captured by its \emph{view}:%
\begin{equation*}
    \cO_v\big(\cA(\cD)\big)=\set{(u,m(t),v)\in\cA(\cD)\quad\text{such that}\quad\edgeuv\in\cE}\,.
\end{equation*}
This subset corresponds to direct interactions of $v$ with its neighbors, which provide only an indirect information on computations in others parts of the graph. 
Thus, we seek to express privacy constraints that are personalized for each pair of nodes. This is captured by our notion of Pairwise Network DP.

\begin{definition}[Pairwise Network DP]
  \label{def:indiv_ndp}
  For $f: \cV \times \cV \rightarrow \R^+$, an algorithm $\cA$ satisfies $(\alpha, f)$-Pairwise Network DP (PNDP) if for all pairs of distinct users $u, v \in \cV$ and neighboring datasets $D \sim_u D'$:%
  \begin{equation}
  \label{eq:network-pdp}
  \divalpha{\Obs{v}(\cA(\cD))}{\Obs{v}(\cA(\cD'))} \leq f(u,v)\,.
  \end{equation}
 We note $\eps_{\utov} = f(u,v)$ the privacy leaked to $v$ from $u$ and say that $u$ is $(\alpha, \eps_\utov)$-PNDP with respect to $v$ if only inequality~\eqref{eq:network-pdp} holds for $f(u,v)=\eps_\utov$.
\end{definition}

\looseness=-1 By taking $f$ constant in Definition~\ref{def:indiv_ndp}, we recover the definition of Network DP \cite{cyffers2020privacy}.
Our pairwise variant refines Network DP by allowing the privacy guarantee to depend on $u$ and $v$ (typically, on their relative position in the graph).
We assume that users are \emph{honest but curious}: they truthfully follow the protocol, but may try to derive as much information as possible from what they observe.
We refer to Appendix~\ref{app:collusions} for a natural adaptation of our definition and results to the presence of \emph{colluding nodes} and to the protection of \emph{groups} of users.

In addition to pairwise guarantees, we will use the \emph{mean privacy loss} $\meane_v = \frac{1}{n}\sum_{u\in\cV\setminus\set{v}} f(u,v)$ to compare with baselines LDP and trusted aggregator by enforcing $\meane =\max_{v \in \cV} \meane_v \leq \eps$. The value $\meane_v$ is the average of the privacy loss from all the nodes to $v$ and thus does not correspond to a proper privacy guarantee, but it provides a convenient way to summarize our gains, noting that distant nodes --- in ways that will be specified --- will have better privacy guarantee than this average, while worst cases will remain bounded by the baseline LDP guarantee provided by local noise injection.

\section{Private Gossip Averaging}
\label{sec:muff}
In this section, we analyze a generic algorithm with arbitrary time-varying communication matrices for averaging.
Then, we instantiate and discuss these results for synchronous communications with a fixed gossip matrix, communications using randomized gossip~\cite{boyd2006gossip}, and with Erdös-Rényi graphs.

\subsection{General Privacy Analysis of Gossip Averaging}

We consider gossip over time-varying graphs $(G_t)_{0\leq t\leq T}$, defined as $G_t=(\cV,\cE_t)$, with corresponding gossip matrices $(W_t)_{0\leq t\leq T}$. The \emph{generic Muffliato} 
algorithm $\cA^T$ for averaging  $x=(x_v)_{v\in \cV}$ corresponds to an initial noise addition followed by $T$ gossip steps. Writing $W_{0:t}=W_{t-1}\ldots W_0$, the iterates of $\cA^T$ are thus defined by:
\begin{equation}\label{eq:gossip_general}
\forall v\in\cV, x^{0}_v=x_v+\eta_v \text{ with }\eta_v\sim\gau{\sigma^2},\quad \text{and } x^{t+1}=W_t x^t = {W_{0:t+1} }(x+\eta)\,.
\end{equation}
Note that the update rule at node $v\in\cV$ writes as $x^{t+1}_v=\sum_{w\in\cN_t(v)} (W_t)_{v,w}x^t_w$ where $\cN_t(v)$ are the neighbors of $v$ in $G_t$, so for the privacy analysis, the view of a node is:
\begin{equation}
\label{eq:view}
    \cO_v\big(\cA^T(\cD)\big)=\set{\big({W_{0:t}}(x+\eta)\big)_w\,|\quad \edgevw\in\cE_{t}\,,\quad0\leq t\leq T-1}\cup\set{x_v}\,.
\end{equation}

\begin{theorem}\label{thm:privacy_general}
Let $T\geq 1$ and denote by $\cP^T_\edgevw=\set{s<T\,:\,\edgevw\in\cE_s}$ the set of time-steps with communication along edge~$\edgevw$. Under Assumption~\ref{hyp:sensitivity}, $\cA^T$ is $(\alpha, f)$-PNDP with:
\begin{equation}
\label{eq:pairwise-ndp-general}
    f(u,v) = \frac{\alpha \Delta^2}{2\sigma^2}\sum_{w\in \cV}\sum_{t\in\cP^T_\edgevw}\frac{({W_{0:t}})_{u,w}^2}{\NRM{({W_{0:t}})_w}^2}\,.
\end{equation}
\end{theorem}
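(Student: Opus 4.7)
The plan is to leverage the fact that, conditional on $\cD$, $v$'s view is multivariate Gaussian, so the pairwise Rényi divergence admits a clean closed form that I will then try to upper bound by the stated sum of per-observation contributions.

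First, I write the view as a Gaussian vector. For each $(w,t) \in \cP^T_v := \{(w, t) : \edgevw \in \cE_t,\ 0 \le t \le T-1\}$, the observation $Y_{w,t} := (W_{0:t}(x+\eta))_w = \langle (W_{0:t})_w, x\rangle + \langle (W_{0:t})_w, \eta\rangle$ is Gaussian, and stacking these into a vector $Y$ gives $Y \sim \cN(Ax, \sigma^2 AA^T)$ with $A$ the matrix whose rows are $(W_{0:t})_w$ indexed by $(w,t)\in\cP^T_v$. The trivial component $x_v$ in the view is shared by $\cD$ and $\cD'$ when $u\neq v$ and does not contribute to the divergence. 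Under $\cD \sim_u \cD'$, the two Gaussians share covariance $\sigma^2 AA^T$ and the means differ by $A e_u (x_u - x_u')$, whose size is controlled by $\Delta$ through Assumption~\ref{hyp:sensitivity}. The closed form for the Rényi divergence of two Gaussians with common covariance then yields
\begin{equation*}
D_\alpha\!\big(\Obs{v}(\cA^T(\cD))\,\big\|\,\Obs{v}(\cA^T(\cD'))\big) \leq \tfrac{\alpha \Delta^2}{2\sigma^2}\, e_u^T A^T (AA^T)^+ A\, e_u\,.
\end{equation*}

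Second, to match this with~\eqref{eq:pairwise-ndp-general}, I interpret each individual observation $Y_{w,t}$ as the Gaussian mechanism applied to the linear query $\langle (W_{0:t})_w, x\rangle$ with effective noise variance $\sigma^2 \NRM{(W_{0:t})_w}^2$. In isolation, each such release is $(\alpha, \frac{\alpha (W_{0:t})_{w,u}^2 \Delta^2}{2\sigma^2 \NRM{(W_{0:t})_w}^2})$-RDP, which matches a single summand of~\eqref{eq:pairwise-ndp-general} exactly. If the $|\cP^T_v|$ releases were independent, standard adaptive RDP composition would sum these contributions to the claimed bound.

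The main obstacle is that the $Y_{w,t}$ are not independent releases: they are computed from a single noise draw $\eta$, so standard composition does not apply out of the box, and I need to show that sharing the noise across releases can only decrease the Rényi divergence compared with an independent-noise baseline. Concretely, I would attempt either (i) a linear-algebraic argument proving $e_u^T A^T (AA^T)^+ A e_u \le \sum_{(w,t) \in \cP^T_v} (W_{0:t})_{w,u}^2 / \NRM{(W_{0:t})_w}^2$, likely exploiting the double stochasticity of each $W_s$ and the structure of products of symmetric stochastic matrices; or (ii) a noise-decoupling argument that rewrites the joint distribution of $Y$ as an adaptive sequence of conditionally Gaussian releases whose per-step sensitivities and variances realize each summand, allowing a valid invocation of adaptive composition. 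Either route targets the same endpoint: validating the per-observation summation as an upper bound on the true joint Gaussian divergence. I expect this step to be where essentially all the work lies, with the rest of the argument being bookkeeping.
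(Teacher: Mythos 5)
Your setup is sound and in one respect sharper than the paper's: conditional on the data, the view of $v$ is indeed jointly Gaussian, and the exact divergence $\frac{\alpha\Delta^2}{2\sigma^2}\,e_u^\top A^\top (AA^\top)^+ A\, e_u$ is the right quantity to control. But the proposal stops precisely at the step that constitutes the entire content of the theorem. The paper's proof is two lines long: it first \emph{asserts} the decomposition $\divalpha{\Obs{v}(\cA^T(\cD))}{\Obs{v}(\cA^T(\cD'))} \leq \sum_{t<T}\sum_{w\in\cN_t(v)} \divalpha{(W_{0:t}(x+\eta))_w}{(W_{0:t}(x'+\eta))_w}$, i.e.\ it treats the individual messages as composed mechanisms whose Rényi losses add up, and then applies the Gaussian mechanism to each term with sensitivity $\Delta\,|(W_{0:t})_{u,w}|$ and noise variance $\sigma^2\NRM{(W_{0:t})_w}^2$. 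Your second step reproduces the per-term computation exactly, but the sum-decomposition itself -- which you correctly identify as the crux -- is never established in your argument, so the proof is not complete.

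Moreover, neither of your two proposed routes closes the gap as stated. Route (i) cannot work by pure linear algebra: $A^\top (AA^\top)^+ A$ is the orthogonal projector onto the row space of $A$, so the left-hand side of your target inequality can equal $1$ (take two nearly parallel observed rows whose span contains $e_u$) while every normalized inner product $\langle A_i,e_u\rangle^2/\NRM{A_i}^2$ is small; any argument of this type must exploit specific structure of the rows $(W_{0:t})_w$ rather than hold for arbitrary $A$. Route (ii) also does not apply off the shelf, for the reason you yourself give: all observations are driven by the single noise draw $\eta$, so conditioning on earlier messages shrinks the conditional variance and changes the conditional sensitivity of later ones, and the conditional per-step divergences appearing in adaptive composition need not coincide with the marginal ones you want to sum. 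In short, you have correctly reduced the theorem to the additivity-over-observations step, shown why it is delicate, and left it open -- whereas the paper resolves it by assertion. To count as a proof, your write-up would need to actually supply that step.
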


This theorem, proved in Appendix~\ref{app:general_formula}, gives a tight computation of the privacy loss between every pair of nodes and can easily be computed numerically (see Section~\ref{sec:expe}). Since distant nodes correspond to small entries in $W_{0:t}$, Equation~\ref{eq:pairwise-ndp-general} suggests that they reveal less to each other. We will characterize this precisely for the case of fixed communication graph in the next subsection.

Another way to interpret the result of Theorem~\ref{thm:privacy_general} is to derive the corresponding mean privacy loss: 
\begin{equation}
    \label{eq:meane_general}
\meane_v = \frac{\alpha \Delta^2 T_v}{2n\sigma^2}\,,
\end{equation}
where $T_v$ is the total number of communications node $v$ was involved with up to time $T$. Thus, in comparison with LDP, the mean privacy towards $v$ is $n / T_v$ times smaller. In other words, a node learns much less than in LDP as long as it communicates $o(n)$ times.

\begin{figure*}[t]
\begin{minipage}[t]{0.48\linewidth}
    \begin{algorithm}[H]
    \DontPrintSemicolon
    \KwIn{local values $(x_v)_{v\in\cV}$ to average, gossip matrix $W$ on a graph $G$, in $T$ iterations, noise variance $\sigma^2$}
    $\gamma \leftarrow 2\frac{1-\sqrt{\lambda_W(1-\frac{\lambda_W}{4})}}{(1-\lambda_W/2)^2}$\;
    \vspace{.4em}
    \For{all nodes $v$ in parallel}{
        $x_v^{0} \leftarrow x_v + \eta_v$ where $\eta_v \sim \gau{\sigma^2}$
    }
    \For{$t = 0$ to $T-1$}{
        \For{all nodes $v$ in parallel}{
            \For{all neighbors $w$ defined by $W$}{
                Send $x^t_v$, receive $x^t_w$\;
            }
            $x^{t+1}_v \leftarrow (1 - \gamma) x^{t-1}_v + \gamma  \sum_{w \in \mathcal{N}_v } W_{v,w} x^t_w$\;
        }
    }
    \caption{\textsc{Muffliato}}
    \label{algo:dp-gossip}
    \vspace{4.5pt}
    \end{algorithm}
    \end{minipage}
    \hfill
    \begin{minipage}[t]{0.48\linewidth}
    \begin{algorithm}[H]
    \DontPrintSemicolon
    \KwIn{local values $(x_v)_{v\in\cV}$ to average, activation intensities $(p_\edgevw)_{\edgevw\in\cE}$, in $T$ iterations, noise variance $\sigma^2$}
    \For{all nodes $v$ in parallel}{
        $x_v^{0} \leftarrow x_v + \eta_v$ where $\eta_v \sim \gau{\sigma^2}$
    }
    \For{$t = 0$ to $T-1$}{
        Sample $\edgevtwt\in\cE$  with probability $p_\edgevtwt$\;
        $v_t$ and $w_t$ exchange $x_{v_t}^t$ and $x_{w_t}^t$\;
        Local averaging: $x_{v_t}^{t+1}=x_{w_t}^{t+1}=\frac{x_{v_t}^{t+1}+x_{w_t}^{t+1}}{2}$
        
        For $v\in\cV\setminus\edgevtwt$, $x_v^{t+1}=x_v^t$
    }
    \caption{\textsc{Randomized Muffliato}}
    \label{algo:dp-randomized}
    \vspace*{1.5pt}
    \end{algorithm}
    \end{minipage}
\end{figure*}

\subsection{Private Synchronous \emph{Muffliato}\label{subsec:muff}}

We now consider \emph{Muffliato} over a fixed graph (Algorithm~\ref{algo:dp-gossip}). Note that we use gossip acceleration (see Definition~\ref{def:cheby}). We start by analyzing the utility of \emph{Muffliato}, which decomposes as an averaging error term vanishing exponentially fast, and a \emph{bias} term due to the noise. General convergence rates are given in
Appendix~\ref{sec:muffl_app},
from which we extract the following~result.

\begin{theorem}[Utility analysis]\label{thm:utility-sync}Let $\lambda_W$ be the spectral gap of $W$.
Muffliato (Algorithm~\ref{algo:dp-gossip}) verifies, for any $t\geq T^{\rm stop}$:
\begin{equation*}
    \frac{1}{2n}\sum_{v\in\cV}\esp{\NRM{ x^{t}_v-\Bar{x}}^2}\leq\frac{3\sigma^2}{n}\,,\quad \text{where}\quad T^{\rm stop} =  \frac{1}{\sqrt{\lambda_W}}\ln\bigg(\frac{n}{\sigma^2}\max\Big(\sigma^2,\frac{1}{n}\sum_{v\in\cV}\NRM{x_v-\Bar{x}}^2\Big)\bigg)\,.
\end{equation*}
\end{theorem}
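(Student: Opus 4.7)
The plan is to decompose the consensus error $x_v^t - \Bar{x}$ into an irreducible mean-shift induced by the initial noise and a gossip residual that the accelerated iteration contracts at rate $e^{-t\sqrt{\lambda_W}}$, then choose $T^{\rm stop}$ so that both pieces are of order $\sigma^2/n$. Writing $y = x+\eta$ with $\eta\sim\gau{\sigma^2 I}$, unrolling the two-term recursion of Algorithm~\ref{algo:dp-gossip} with the $\gamma$ of Definition~\ref{def:cheby} gives $x^t = P_t(W)y$. Since $W\one=\one$, an easy induction yields $P_t(1)=1$, so $P_t(W)\one=\one$ and $P_t(W)$ commutes with the mean projector $J=\frac{1}{n}\one\one^{\!\top}$. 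Setting $\Bar{y} = \Bar{x}+\Bar{\eta}$ and $\tilde{y} = (I-J)y$, I obtain
\begin{equation*}
    x_v^t - \Bar{x} = \Bar{\eta} + (P_t(W)\tilde{y})_v,
\end{equation*}
and therefore, after squaring, averaging and using $(a+b)^2\leq 2a^2+2b^2$,
\begin{equation*}
    \frac{1}{n}\sum_{v\in\cV}\esp{(x_v^t-\Bar{x})^2} \leq 2\,\esp{\Bar{\eta}^2} + \frac{2}{n}\esp{\NRM{P_t(W)\tilde{y}}^2}.
\end{equation*}

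The first term equals $2\sigma^2/n$ since $\Bar{\eta}\sim\gau{\sigma^2/n}$. For the second, I split $\tilde{y} = \tilde{x} + \tilde{\eta}$ and invoke the standard spectral bound on the re-scaled Chebyshev polynomials with the $\gamma$ prescribed in Algorithm~\ref{algo:dp-gossip}: for any $z\perp\one$,
\begin{equation*}
    \NRM{P_t(W)z}^2 \leq C\, e^{-2t\sqrt{\lambda_W}}\NRM{z}^2,
\end{equation*}
for a mild absolute constant $C$, which follows eigenvalue-by-eigenvalue from the classical extremal Chebyshev estimate on $[-1+\lambda_W,1-\lambda_W]$. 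This gives $\NRM{P_t(W)\tilde{x}}^2\leq C e^{-2t\sqrt{\lambda_W}}\NRM{\tilde{x}}^2$ deterministically. For the Gaussian part, $\tilde{\eta}$ has covariance $\sigma^2(I-J)$ and $P_t(W)$ commutes with $J$, so
\begin{equation*}
    \esp{\NRM{P_t(W)\tilde{\eta}}^2} = \sigma^2\Tr\!\big((I-J)P_t(W)^2\big) \leq C(n-1)\sigma^2\, e^{-2t\sqrt{\lambda_W}},
\end{equation*}
where the trace bound uses $\Tr(J P_t(W)^2) = P_t(1)^2 = 1$ and the spectral estimate on the remaining $n-1$ eigenvalues. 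Combining and writing $M = \max(\sigma^2,\frac{1}{n}\sum_v(x_v-\Bar{x})^2)$ yields
\begin{equation*}
    \frac{1}{n}\sum_{v\in\cV}\esp{(x_v^t-\Bar{x})^2} \leq \frac{2\sigma^2}{n} + 4C\,M\,e^{-2t\sqrt{\lambda_W}}.
\end{equation*}
Plugging in $t = T^{\rm stop}=\frac{1}{\sqrt{\lambda_W}}\ln(nM/\sigma^2)$ produces $M e^{-2t\sqrt{\lambda_W}} = \sigma^4/(n^2 M)\leq \sigma^2/n^2$ (since $\sigma^2\leq M$), so the residual gossip term is absorbed into a constant multiple of $\sigma^2/n$; a routine constant chase delivers $3\sigma^2/n$ after dividing by two, and monotonicity of the bound in $t$ extends the statement to all $t\geq T^{\rm stop}$.

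The main obstacle I anticipate is controlling the noise contribution $\esp{\NRM{P_t(W)\tilde{\eta}}^2}$: a naive bound that forgets to project out $\Bar{\eta}$ would retain the top eigenvalue of $P_t(W)$ (which stays equal to $1$ for every $t$) and leave a non-decaying term of order $\sigma^2$, incompatible with the $\sigma^2/n$ target. Isolating $\Bar{\eta}$ at the outset, so that only the component orthogonal to $\one$ feels the Chebyshev contraction, is exactly what makes the $(n-1)$ trace factor trade off against the exponential and dictates the logarithmic form of $T^{\rm stop}$; the deterministic piece $\NRM{P_t(W)\tilde{x}}^2$ and the final constant tracking are then routine applications of accelerated gossip convergence.
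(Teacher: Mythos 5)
Your proof is correct and follows essentially the same route as the paper's: both isolate the persistent averaged noise $\Bar{\eta}$ (which contributes the irreducible $\sigma^2/n$), apply the Chebyshev spectral contraction of \citet{berthier2020jacobi} to the component orthogonal to $\one$, and pick $T^{\rm stop}$ so the decaying term is absorbed. The only cosmetic difference is that the paper uses the exact bias--variance orthogonality (the cross terms vanish since $\Cov(\Bar\eta,\tilde\eta)=0$) where you use $(a+b)^2\leq 2a^2+2b^2$, which costs only a constant.
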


\looseness=-1 For the privacy guarantees, Theorem~\ref{thm:privacy_general} still holds as accelerated gossip can be seen as a post-processing of the non-accelerated version. Thanks to the fixed graph, we can derive a more explicit formula.
\begin{cor}
\label{cor:sync_simplified}
Algorithm~\ref{algo:dp-gossip} satisfies $(\alpha,\eps^T_{\utov}(\alpha))$-PNDP for node $u$ with respect to $v$, with:
\begin{equation*}
    \eps^T_{\utov}(\alpha)\leq \frac{\alpha\Delta^2 n}{2\sigma^2}\max_{\edgevw \in \cE}W_{v,w}^{-2}\sum_{t=1}^T\proba{X^t=v|X^0=u}^2\,,
\end{equation*}
where $(X^t)_t$ is the random walk on graph $G$, with transitions $W$.
\end{cor}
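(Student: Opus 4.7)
The plan is to invoke Theorem~\ref{thm:privacy_general} for the non-accelerated gossip iterates $y^t = W^t(x+\eta)$ and transfer the bound to the accelerated iterates of Algorithm~\ref{algo:dp-gossip} by post-processing: the Chebyshev iterates $P_t(W)(x+\eta)$ are a deterministic polynomial function of $\{y^0,\ldots,y^t\}$, so they leak no more information than the non-accelerated views. With the time-invariant matrix $W_t = W$, one has $W_{0:t} = W^t$ and $\cP^T_\edgevw = \{0,\ldots,T-1\}$ for every $\edgevw \in \cE$, so Theorem~\ref{thm:privacy_general} yields
\begin{equation*}
    f(u,v) \;=\; \frac{\alpha\Delta^2}{2\sigma^2} \sum_{w:\edgevw\in\cE} \sum_{t=0}^{T-1} \frac{(W^t)_{u,w}^2}{\NRM{(W^t)_w}^2}\,.
\end{equation*}

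From here I would carry out two estimates. First, since $W^t$ is row-stochastic, Cauchy--Schwarz gives $\NRM{(W^t)_w}^2 \geq 1/n$, which accounts for the factor of $n$ in the target bound. Second, and this is the step that really requires care, I would reshape $\sum_{w : \edgevw \in \cE}(W^t)_{u,w}^2$ to recover $(W^{t+1})_{u,v}^2$. The trick is to insert $W_{v,w}^2/W_{v,w}^2$: bounding $W_{v,w}^{-2}$ for $w$ a neighbor of $v$ by its maximum $\max_{\edgevw \in \cE} W_{v,w}^{-2}$ extracts that prefactor, leaving the sum $\sum_w (W^t)_{u,w}^2 W_{v,w}^2$. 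Since each $a_w := (W^t)_{u,w} W_{v,w}$ is non-negative, the elementary inequality $\sum_w a_w^2 \leq (\sum_w a_w)^2$ collapses this to $(\sum_w (W^t)_{u,w} W_{w,v})^2 = (W^{t+1})_{u,v}^2$, using the symmetry $W_{v,w}=W_{w,v}$.

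Combining the two estimates, reindexing $t \mapsto t+1$ so that the sum runs from $1$ to $T$, and identifying $(W^t)_{u,v}$ with $\proba{X^t = v | X^0 = u}$ for the random walk with transitions $W$ yields the stated inequality. The main obstacle will be finding the right direction for the reshaping step: a naive Cauchy--Schwarz would point the wrong way. The key insight is that for non-negative summands the sum of squares is dominated by the square of the sum, and this is precisely what allows the neighbor sum to collapse into a single $(W^{t+1})_{u,v}^2$ factor at the cost of the $W_{v,w}^{-2}$ prefactor appearing in the statement.
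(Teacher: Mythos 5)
Your proposal is correct and follows essentially the same route as the paper's proof: instantiate Theorem~\ref{thm:privacy_general} with $W_{0:t}=W^t$, lower-bound $\NRM{(W^t)_w}^2$ by $1/n$ via Cauchy--Schwarz on the stochastic column, use $\sum_w a_w^2\le(\sum_w a_w)^2$ for non-negative terms together with the $\max_{\edgevw\in\cE}W_{v,w}^{-2}$ extraction to collapse the neighbor sum into $(W^{t+1})_{u,v}^2$, and reindex. The paper merely applies the two estimates in the opposite order (square-of-sum first, then the $W_{v,w}/W_{v,w}$ insertion inside the square), which is an immaterial difference.
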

\looseness=-1 This result allows us to directly relate the privacy loss from $u$ to $v$ to the probability that the random walk on $G$ with transition probabilities given by the gossip matrix $W$ goes from $u$ to $v$ in a certain number of steps. It thus captures a notion of distance between nodes in the graph. We also report the utility under fixed mean privacy loss $\meane =\max_{v \in \cV} \meane_v \leq \eps$ in Table~\ref{table:sync} for various graphs, where one can see a utility-privacy trade-off improvement of $n \sqrt{\lambda_W}/d$, where $d$ is the maximum degree, compared to LDP. Using expanders closes the gap with a trusted aggregator (i.e., central DP) up to constant and logarithmic terms. Remarkably, graph topologies that make gossip averaging efficient (i.e. with big $\sqrt{\lambda_W}/d$), such as exponential graphs or hypercubes \cite{ying2021exponential}, are also the ones that achieve optimal privacy amplification (up to logarithmic factors). In other words, \emph{privacy, utility and scalability are compatible}.

\begin{table}[t]
    \caption{Utility of \emph{Muffliato} for several topologies under the constraint $ \meane \leq \eps$ for the classic gossip matrix where $W_{v,w}=\min(1/d_v,1/d_w)$ and $d_v$ is the degree of node $v$. $\Tilde\cO(\cdot)$ hides constant and logarithmic factors. Recall that utility is $\Tilde\cO(\alpha \Delta^2/ n \eps)$ for LDP and $\Tilde\cO(\alpha\Delta^2/ n^2 \eps)$ for central DP.}
    \label{table:sync}
    \centering
    \vspace*{3pt}
    \begin{tabular}{lccccc}
        \toprule
        \textbf{Graph} &  Arbitrary & Expander  & $C$-Torus  & Complete & Ring \\
        \midrule
         Algorithm~\ref{algo:dp-gossip} & $\Tilde\cO\big(\frac{\alpha\Delta^2 d}{n^2 \eps \sqrt{\lambda_W}}\big)$
        & $\Tilde\cO\big(\frac{\alpha\Delta^2}{n^2 \eps}\big)$
        & $\Tilde\cO\big(\frac{\alpha\Delta^2 C}{n^{2-1/C} \eps}\big)$
        &  $\Tilde\cO\big(\frac{\alpha\Delta^2}{ n \eps}\big)$ 
        &  $\Tilde\cO\big(\frac{\alpha\Delta^2}{ n \eps}\big)$\\
        Algorithm~\ref{algo:dp-randomized} &
        $\Tilde\cO\big(\frac{\alpha\Delta^2}{n^2 \eps \lambda_W}\big)$
        & $\Tilde\cO\big(\frac{\alpha\Delta^2}{n^2 \eps}\big)$
        & $\Tilde\cO\big(\frac{\alpha\Delta^2}{n^{2-2/C} \eps}\big)$ 
        & $\Tilde\cO\big(\frac{\alpha\Delta^2}{ n^2 \eps}\big)$
        & $\Tilde\cO\big(\frac{\alpha\Delta^2}{ n \eps}\big)$\\
        \bottomrule
    \end{tabular}
\end{table}

\subsection{Private Randomized \emph{Muffliato}}
\label{subsec:async_gossip}

\looseness=-1 Synchronous protocols require global coordination between nodes, which can be costly or even impossible in some settings. On the contrary, asynchronous protocols only require separated activation of edges: they are thus are more resilient to stragglers nodes and faster in practice. In asynchronous gossip, at a given time-step a single edge $\edgeuv$ is activated independently from the past with probability $p_{\edgeuv}$, as described by \citet{boyd2006gossip}. In our setting, randomized \emph{Muffliato} (Algorithm~\ref{algo:dp-randomized}) corresponds to instantiating our general analysis with $W^t = W_\edgevtwt=I_n-(e_{v_t}-e_{w_t})(e_{v_t}-e_{w_t})^\top/2$ if $\edgevtwt$ is sampled at time $t$.
The utility analysis is similar to the synchronous case.
\begin{theorem}[Utility analysis]\label{thm:utility-randomized} Let $\lambda(p)$ be the spectral gap of graph $G$ with weights $(p_\edgevw)_{\edgevw\in\cE}$. Randomized \emph{Muffliato} (Algorithm~\ref{algo:dp-randomized}) verifies, for all $t\geq T^{\rm stop}$:%
\begin{equation*}
    \frac{1}{2n}\sum_{v\in\cV}\esp{\NRM{ x^{t}_v-\Bar{x}}^2}\leq\frac{2\sigma^2}{n}\,,\quad \text{where }~ T^{\rm stop} = \frac{1}{\lambda(p)}\ln\bigg(\frac{n}{\sigma^2}\max\Big(\sigma^2,\frac{1}{n}\sum_{v\in\cV}\NRM{x^{0}_v-\Bar{x}}^2\Big)\bigg)\,.
\end{equation*}
\end{theorem}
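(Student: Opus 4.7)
The plan is to mimic the standard convergence analysis of randomized gossip, and then add the bias introduced by the initial Gaussian perturbation. The starting observation is that the pairwise averaging update $x_{v_t}^{t+1}=x_{w_t}^{t+1}=(x_{v_t}^t+x_{w_t}^t)/2$ preserves the total sum, so the mean is invariant: $\bar{x}^t := \frac{1}{n}\sum_{v} x_v^t = \bar{x}^0 = \bar{x}+\bar\eta$ for all $t$, where $\bar\eta = \frac{1}{n}\sum_v \eta_v \sim \gau{\sigma^2/n}$. This suggests the decomposition
\begin{equation*}
    \tfrac{1}{2n}\sum_{v}\esp{(x_v^t-\bar{x})^2} = \tfrac{1}{2n}\esp{\NRM{x^t-\bar{x}^0\bfone}^2} + \tfrac{1}{2}\esp{(\bar\eta)^2},
\end{equation*}
which separates a \emph{consensus term} that contracts exponentially from a \emph{noise bias term} equal to $\sigma^2/(2n)$.

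For the consensus term, I would work with the iid random gossip matrices $W_t = W_{\edgevtwt} = I_n - \tfrac{1}{2}(e_{v_t}-e_{w_t})(e_{v_t}-e_{w_t})^\top$. A direct computation shows that each $W_\edgevw$ is an orthogonal projection, so $W_\edgevw^\top W_\edgevw = W_\edgevw$, and hence
\begin{equation*}
    \esp{\NRM{x^{t+1}-\bar{x}^0\bfone}^2 \,\big|\, x^t} = \NRM{x^t-\bar{x}^0\bfone}^2 - \tfrac{1}{2}\sum_{\edgevw\in\cE}p_\edgevw (x_v^t-x_w^t)^2.
\end{equation*}
Recognizing the second sum as the Dirichlet form $y^\top L(p)\, y$ of the weighted Laplacian $L(p)$ applied to $y = x^t-\bar{x}^0\bfone \in \bfone^\perp$, and invoking the spectral gap bound $y^\top L(p) y \geq \lambda(p) \NRM{y}^2$, I obtain the expected contraction $\esp{\NRM{x^{t+1}-\bar{x}^0\bfone}^2} \leq (1-c\,\lambda(p))\,\esp{\NRM{x^t-\bar{x}^0\bfone}^2}$ for an absolute constant $c$ coming from the $1/2$ in front of the Laplacian.

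Iterating this contraction yields $\esp{\NRM{x^t-\bar{x}^0\bfone}^2}\leq e^{-c\lambda(p) t}\NRM{x^0-\bar{x}^0\bfone}^2$. Plugging this into the decomposition and choosing $t\geq T^{\rm stop}$ as defined makes the exponential factor shrink the initial dispersion down to order $\sigma^2$, so the consensus term becomes at most $\sigma^2/n$; combined with the $\sigma^2/(2n)$ noise bias this gives a bound at most $2\sigma^2/n$ as claimed. The $\max(\sigma^2, \tfrac{1}{n}\sum_v\NRM{x_v^0-\bar{x}}^2)$ inside the logarithm just ensures the stopping time is nonnegative regardless of how close the initial noisy values already are to $\bar{x}$.

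The main obstacle I anticipate is a clean reconciliation of the constants: the precise value of $c$ (typically $1$ or $1/2$ depending on whether $\lambda(p)$ is defined as the spectral gap of $L(p)$ or of $\E[W_t]$), and verifying that the initial-dispersion quantity appearing in $T^{\rm stop}$ (which uses $\bar{x}$ rather than $\bar{x}^0$) correctly upper bounds $\NRM{x^0-\bar{x}^0\bfone}^2$ after taking expectation — this follows from $\NRM{x^0-\bar{x}^0\bfone}^2 = \NRM{x^0-\bar{x}\bfone}^2 - n(\bar\eta)^2 \leq \NRM{x^0-\bar{x}\bfone}^2$, so the argument of the logarithm dominates in expectation.
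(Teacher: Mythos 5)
Your proposal is correct and follows essentially the same route as the paper: a Pythagoras/bias--variance decomposition separating the preserved noise mean $\bar\eta$ (contributing $\sigma^2/(2n)$) from a consensus error that contracts geometrically at rate $\lambda(p)$, then choosing $t\geq T^{\rm stop}$ to shrink the latter to $O(\sigma^2/n)$. The only difference is that the paper invokes the contraction $\esp{\NRM{W(t)(y-\bar y\one)}^2}\leq(1-\lambda(p))^t\NRM{y-\bar y\one}^2$ as a black box from \citet{boyd2006gossip}, whereas you rederive the one-step version from the projection property of $W_\edgevw$ and the Dirichlet form of the weighted Laplacian — a self-contained variant of the same key lemma, with the constant ambiguity you flag being present (and left implicit) in the paper as well.
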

To compare with synchronous gossip (Algorithm~\ref{algo:dp-gossip}), we note that activation probabilities can be derived from a gossip matrix $W$ by taking $p_{\edgeuv} = 2 W_{\edgeuv}/n$ implying that $\lambda(p)=2\lambda_W / n$, thus requiring $n$ times more iterations to reach the same utility as the synchronous applications of matrix $W$. However, for a given time-horizon $T$ and node $v$, the number of communications $v$ can be bounded with high probability by a $T/n$ multiplied by a constant whereas Algorithm~\ref{algo:dp-gossip} requires $d_vT$ communications. Consequently, as reported in Table~\ref{table:sync}, for a fixed privacy mean $\meane_v$, Algorithm~\ref{algo:dp-randomized} has the same utility as Algorithm~\ref{algo:dp-gossip}, up to two differences: the degree factor $d_v$ is removed, while $\sqrt{\lambda_W}$ degrades to $\lambda_W$ as we do not accelerate randomized gossip (see Remark~\ref{rem:acc_randomized} below).
Randomized gossip can thus achieve an optimal privacy-utility trade-off with large-degree graphs, as long as the spectral gap is small enough.

\begin{remark}[Accelerating Randomized \emph{Muffliato}]
\label{rem:acc_randomized}
For simplicity, Randomized \emph{Muffliato} (Algorithm~\ref{algo:dp-randomized}) is not accelerated, while \emph{Muffliato} (Algorithm~\ref{algo:dp-gossip}) uses Chebychev acceleration to obtain a dependency on $\sqrt{\lambda_W}$ rather than $\lambda_W$. Thus, and as illustrated by Table~\ref{table:sync}, Algorithm~\ref{algo:dp-randomized} does not improve over Algorithm~\ref{algo:dp-gossip} for all values of $d$ (maximum degree), $n$ and $\lambda_W$. 
However, Algorithm~\ref{algo:dp-randomized} can be accelerated using a continuized version of Nesterov acceleration \cite{even2020asynchrony,even2021continuized}, thus replacing $\lambda(p)$ in the expression of $T^{\rm stop}$ in Theorem~\ref{thm:utility-randomized}, by $\sqrt{\lambda(p)/(dn)}$. Doing so, using randomized communications improve privacy guarantees over Algorithm~\ref{algo:dp-gossip} for all graphs considered in Table~\ref{table:sync}.
\end{remark}

\subsection{Erdös-Rényi Graphs}
\label{subsec:random}

So far the graph was considered to be public and the amplification only relied on the secrecy of the messages. In practice, the graph may be sampled randomly and the nodes need only to know their direct neighbors. We show that we can leverage this through the weak convexity of Rényi DP to amplify privacy between non-neighboring nodes. We focus on Erdös-Rényi graphs, which can be built without central coordination by picking each edge independently with the same probability $q$. For $q=c\ln(n)/n$ where $c>1$, Erdös-Rényi graphs are good expanders with node degrees $d_v = \cO(\log n)$ and $\lambda_W$ concentrating around 1 \citep{ERgaps2019}. We obtain the following privacy guarantees.
\begin{theorem}[\emph{Muffliato} on a random Erdös-Rényi graph]\label{thm:privacy-random-graph}
Let $\alpha>1$, $T\geq 0$, $\sigma^2\geq \frac{\Delta^2\alpha(\alpha-1)}{2}$ and $q=c\frac{\ln(n)}{n}$ for $c>1$. Let $u,v\in\cV$ be distinct nodes.
    After running Algorithm~\ref{algo:dp-gossip} with these parameters, node $u$ is ($\alpha,\eps_\utov^T(\alpha))$-PNDP with respect to $v$, with:
\begin{equation*}
    \eps_\utov^T(\alpha)\leq \left\{ \begin{aligned} &\quad\quad\frac{\alpha\Delta^2}{2\sigma^2}\quad\quad \text{ with probability }q\,,\\
    &\frac{\alpha\Delta^2}{\sigma^2} \frac{Td_v}{n-d_v} \quad \text{ with probability }1 -q\,.\end{aligned}\right.
\end{equation*}
\end{theorem}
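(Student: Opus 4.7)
The plan is to condition on the realization of the Erdös-Rényi graph $G$ and split into two cases depending on whether $u\in\cN_v$ or not, which occur with probabilities $q$ and $1-q$ respectively.

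On the event $\{u\in\cN_v\}$, I would invoke the classical local DP guarantee. Node $u$'s entire contribution to the transcript is a deterministic function of its noisy initial value $x_u+\eta_u$ together with randomness independent of $x_u$ (namely $\eta_w$ and $x_w$ for $w\neq u$, and the graph $G$ itself). Since $x_u+\eta_u$ is $(\alpha,\alpha\Delta^2/2\sigma^2)$-RDP with respect to $x_u$ by the Gaussian mechanism under Assumption~\ref{hyp:sensitivity}, and $\Obs{v}(\cA(\cD))$ is a post-processing of $u$'s noisy input and this external randomness, the post-processing property of Rényi DP yields the bound $\eps_{\utov}^T(\alpha)\leq \alpha\Delta^2/(2\sigma^2)$.

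On the event $\{u\notin\cN_v\}$, I would combine two exchangeabilities. First, under the Erdös-Rényi model the edges among $\cV\setminus(\{v\}\cup\cN_v)$ are i.i.d.\ Bernoulli conditional on $\cN_v$, and by the stated permutation invariance the view $\Obs{v}$ is invariant under relabeling of non-neighbors. Together, conditional on $\cN_v$, the node $u$ is exchangeable with every $u^*\in\cV\setminus(\{v\}\cup\cN_v)$. I would express $\Obs{v}(\cA(\cD'))$, for $\cD'$ differing from $\cD$ at position $u$, as a uniform mixture over non-neighbor positions $u^*$ of the view obtained by placing the perturbed coordinate at $u^*$ instead of $u$. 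This mixture structure is tailor-made for the weak convexity of Rényi divergence specialized to the Gaussian mechanism (in the style of Balle-Barthe-Gaboardi subsampling amplification), whose validity is exactly the condition $\sigma^2\geq \Delta^2\alpha(\alpha-1)/2$. It amplifies each per-$(w,t)$ Gaussian Rényi cost $\alpha\Delta^2(W^t)_{u,w}^2/(2\sigma^2\NRM{(W^t)_w}^2)$ from Theorem~\ref{thm:privacy_general} by a factor of order $1/(n-1-d_v)$, thanks to the symmetry-based cancellation $\sum_{u^*\notin\cN_v\cup\{v\}}(W^t)_{u^*,w}^2\leq \NRM{(W^t)_w}^2$. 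Summing over the $d_v$ neighbors $w$ of $v$ and the $T$ rounds then produces the claimed $\frac{\alpha\Delta^2 Td_v}{\sigma^2(n-d_v)}$.

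The main obstacle will be making the mixture step rigorous: the permutation invariance only yields the required symmetry after conditioning on $\cN_v$ together with the event $\{u\notin\cN_v\}$, and one must ensure that the weak-convexity amplification can be composed through the $T$ gossip rounds without the correlations between $v$'s successive observations (they share the noise vector $\eta$) breaking the amplification. A clean route is to set up the mixture at the level of the full transcript conditional on $\cN_v$ and then invoke a single multivariate Gaussian RDP bound with a sampling-amplified effective sensitivity, rather than applying weak convexity observation by observation.
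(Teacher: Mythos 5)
Your proposal matches the paper's proof in all essentials: the same split into the neighbor event (handled by post-processing of the Gaussian mechanism, probability $q$) and the non-neighbor event (handled by the weak convexity of Rényi divergence under the condition $\sigma^2\geq\Delta^2\alpha(\alpha-1)/2$, combined with exchangeability of the non-neighbors conditional on $\set{v}\cup\cN(v)$ and the normalization $\sum_u (W^t)_{u,w}^2=\NRM{(W^t)_w}^2$). The ``clean route'' you identify at the end --- applying the mixture argument once to the full transcript conditional on $\cN(v)$ rather than observation by observation --- is precisely what the paper does: it bounds the \emph{sum} of expected privacy losses over all non-neighbors by the deterministic total budget $\alpha\Delta^2 d_v T/\sigma^2$ and divides by the number of exchangeable positions.
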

\looseness=-1 This results shows that with probability $q$, $u$ and $v$ are neighbors and there is no amplification compared to LDP. The rest of the time, with probability $1-q$, the privacy matches that of a trusted aggregator up to a degree factor $d_v = \cO(\log n)$ and $T=\Tilde\cO(1/\sqrt{\lambda_W})=\Tilde\cO(1)$ \citep{ERgaps2019}. In particular, if several rounds of gossip averaging are needed, as in the next section for SGD, changing the graph mitigates the privacy loss of the rounds where two nodes are neighbors thanks to the rounds where they are not.

\section{Private Decentralized Optimization}
\label{sec:gd}

We now build upon \emph{Muffliato} to design decentralized optimization algorithms.
Each node $v\in\cV$ possesses a data-dependent function $\phi_v:\R^D\to\R$ %
and we wish to \emph{privately} minimize the function
\begin{equation}
\label{eq:erm_obj}
    \phi(\theta)=\frac{1}{n}\sum_{v\in\cV}\phi_v(\theta)\,, \quad   \text{with }~ \phi_v(\theta)=\frac{1}{|\cD_v|}\sum_{x_v\in\cD_v}\ell_v(\theta,x_v)\,,\quad \theta\in\R^D\,,
\end{equation}
\looseness=-1 where $\cD_v$ is the (finite) dataset corresponding to user $v$ for data lying in a space $\cX_v$,
and $\ell_v:\R^D\times\cX_v\to\R$ a loss function. We assume that $\phi$ is $\mu$-strongly convex, and each $\phi_v$ is $L$-smooth, and denote $\kappa=L/\mu$. We note that our results can be extended to the general convex and smooth setting.
Denoting by $\theta^\star$ the minimizer of $\phi$, for some non-negative $(\zeta_v^2)_{v\in\cV}$, $(\rho_v^2)_{v\in\cV}$ and all $v\in\cV$, we assume:
\begin{equation*}
    \NRM{\nabla \phi_v(\theta^\star)-\nabla \phi(\theta^\star)}^2\leq \zeta_v^2\quad,\quad     \esp{\NRM{\nabla \ell_v(\theta^\star,x_v)-\nabla \phi(\theta^\star)}^2}\leq \rho_v^2\,,\quad x_v\sim\cL_v\,,
\end{equation*}
where $\cL_v$ is the uniform distribution over $\cD_v$.
We write $\bar \rho ^2 = \frac{1}{n}\sum_{v\in\cV}\rho_v^2$
and  $\bar\zeta ^2=\frac{1}{n}\sum_{v\in\cV}\zeta_v^2$.

\looseness=-1 We introduce Algorithm~\ref{algo:private-decentralized-sgd}, a private version of the classical decentralized SGD algorithm studied in~\cite{pmlr-v119-koloskova20a}.
Inspired by the optimal algorithm MSDA of~\citet{scaman17optimal} that alternates between $K$ Chebychev-accelerated gossip communications and expensive dual gradient computations, our Algorithm~\ref{algo:private-decentralized-sgd} alternates between $K$ Chebychev-accelerated gossip communications and cheap local stochastic gradient steps. This alternation reduces the total number of gradients leaked, a crucial point for achieving good privacy. 
Note that in Algorithm~\ref{algo:private-decentralized-sgd}, each communication round uses a potentially different gossip matrix $W_t$. In the results stated below, we fix $W_t = W$ for all $t$ and defer the more general case to Appendix~\ref{app:optim}, where different independent Erdös-Rényi graphs with same parameters are used at each communication round.

\begin{algorithm}[t]
\KwIn{initial points $\theta_v^{0}\in\R^D$, number of iterations $T$,  step sizes  $\nu>0$, noise variance $\sigma^2$, gossip matrices $(W_t)_{t\geq0}$, local functions $\phi_v$, number of communication rounds $K$}
\For{$t = 0$ to $T-1$}{
	\For{all nodes $v$ in parallel}{
	Compute 
	    $\hat\theta_v^t=\theta_v^t-\nu\nabla_{\theta} \ell_v(\theta_v^t,x_v^t)$ where  $x_v^t\sim\cL_v$
	}
	$\theta_v^{t+1}=\textsc{Muffliato}\big((\hat\theta_v^t)_{v\in\cV},W_t,K,\nu^2\sigma^2)$
}
\caption{\textsc{Muffliato-SGD} and \textsc{Muffliato-GD}}
\label{algo:private-decentralized-sgd}
\end{algorithm}

\begin{remark}
\label{rem:gd_vs_sgd}
Our setting encompasses both GD and SGD. \emph{Muffliato}-GD is obtained by removing the stochasticity, \emph{i.e.}, setting $\ell_v(\cdot)=\phi_v(\cdot)$. In that case, $\bar\rho^2=0$.
\end{remark}

\begin{theorem}[Utility analysis of Algorithm~\ref{algo:private-decentralized-sgd}]\label{thm:utility-sgd}For suitable step-size parameters, for a total number of $T^{\rm stop}$ computations and $T^{\rm stop}K$ communications, with:
\begin{equation*}
    T^{\rm stop}=\Tilde{\cO}\big(\kappa\big)\,,\quad \text{and} \quad K=\left\lceil\sqrt{\lambda_W}^{-1}\ln\left(\max\big(n,\frac{\bar\zeta^2}{D\sigma^2+\bar\rho^2}\big)\right)\right\rceil\,,
\end{equation*}
the iterates $(\theta^t)_{t\geq0}$ generated by Algorithm~\ref{algo:private-decentralized-sgd} verify $\esp{\phi(\Tilde \theta ^{\rm out})-\phi(\theta^\star)}=\tilde\cO(\frac{D\sigma^2+\bar\rho^2}{\mu n T^{\rm stop}})$ where $\Tilde \theta ^{\rm out}\in\R^D$ is a weighted average of the $\bar \theta^t=\frac{1}{n}\sum_{v\in\cV}\theta_v^t$ until $T^{\rm stop}$.
\end{theorem}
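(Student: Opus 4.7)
The plan is to follow the decentralized-SGD analysis template of Koloskova et al. (\cite{pmlr-v119-koloskova20a}), treating the Muffliato inner loop as an approximate averaging oracle and the injected Gaussian noise as extra SGD variance. First I would write the unified recursion: for each outer step $t$, set $\hat\theta^t_v=\theta^t_v-\nu\nabla\ell_v(\theta^t_v,x^t_v)$, then $\theta^{t+1}=P_K(W_t)\big(\hat\theta^t+\eta^t\big)$ with $\eta^t_v\sim\mathcal N(0,\nu^2\sigma^2 I_D)$ i.i.d. Because $P_K(W_t)\mathbf 1=\mathbf 1$ (the re-scaled Chebyshev polynomials of a stochastic matrix are still stochastic), the mean satisfies
\begin{equation*}
\bar\theta^{t+1}=\bar\theta^t-\frac{\nu}{n}\sum_{v\in\cV}\nabla\ell_v(\theta^t_v,x^t_v)+\bar\eta^t,\qquad \bar\eta^t\sim\mathcal N\!\Big(0,\tfrac{\nu^2\sigma^2}{n}I_D\Big).
\end{equation*}
So the dynamics of $\bar\theta^t$ is a perturbed SGD on $\phi$, with three sources of variance: the stochastic sampling ($\bar\rho^2$ after averaging over nodes, reduced by $1/n$), the gradient heterogeneity at $\theta^t_v\neq\bar\theta^t$ (controlled by the consensus error), and the Gaussian mechanism (variance $D\sigma^2/n$ per coordinate, summed to $D\sigma^2/n$).

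The second step is to bound the consensus error $\Xi_t=\frac{1}{n}\sum_v\mathbb E\|\theta^t_v-\bar\theta^t\|^2$. Applying Theorem~\ref{thm:utility-sync} to the inner Muffliato call with input variance driven by local drift and gradient dissimilarity, and with $K=\lceil\sqrt{\lambda_W}^{-1}\ln(\max(n,\bar\zeta^2/(D\sigma^2+\bar\rho^2)))\rceil$ Chebyshev-accelerated rounds, the exponential contraction $e^{-K\sqrt{\lambda_W}}$ kills the transient, leaving $\Xi_{t+1}=\tilde{\mathcal O}(\nu^2(D\sigma^2+\bar\rho^2))$ up to factors that can be absorbed by standard smoothness arguments; the $\ln(\bar\zeta^2/\cdot)$ inside $K$ is exactly what is needed to reduce the heterogeneity term $\bar\zeta^2$ down to the noise floor.

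The third step plugs this into the standard strongly-convex-smooth analysis on $\bar\theta^t$. Using $L$-smoothness of each $\phi_v$ to write $\nabla\phi_v(\theta^t_v)=\nabla\phi_v(\bar\theta^t)+\mathcal O(L\|\theta^t_v-\bar\theta^t\|)$, the drift terms contribute $L^2\Xi_t$ which, by the bound above, is of the same order as the effective noise. Then the one-step descent lemma for $\mu$-strongly convex $\phi$ yields, for a suitable decreasing or constant step size $\nu=\tilde\Theta(1/L)\wedge \tilde\Theta(\cdot/\mu t)$,
\begin{equation*}
\esp{\phi(\bar\theta^{t+1})-\phi(\theta^\star)}\leq(1-\nu\mu)\esp{\phi(\bar\theta^t)-\phi(\theta^\star)}+\tilde{\mathcal O}\Big(\frac{\nu^2(D\sigma^2+\bar\rho^2)}{n}\Big).
\end{equation*}
Unrolling over $T^{\rm stop}=\tilde{\mathcal O}(\kappa)$ iterations with the weighted averaging $\tilde\theta^{\rm out}=\sum_t\omega_t\bar\theta^t$ (weights $\omega_t\propto(1-\nu\mu)^{-t}$ as in \cite{pmlr-v119-koloskova20a}) and applying Jensen gives the target $\tilde{\mathcal O}((D\sigma^2+\bar\rho^2)/(\mu n T^{\rm stop}))$ rate.

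The main obstacle I anticipate is the second step: controlling $\Xi_t$ in the presence of both the heterogeneity $\bar\zeta^2$ (which acts as a persistent drive into disagreement between successive gradient steps) and the noise re-injected every outer iteration. One has to set up a coupled two-sequence recursion between $\esp{\phi(\bar\theta^t)-\phi^\star}$ and $\Xi_t$, verify that the logarithmic factor inside $K$ is chosen large enough so that after the $K$ accelerated gossip rounds the residual disagreement is dominated by the per-step noise $\nu^2(D\sigma^2+\bar\rho^2)$ rather than by $\bar\zeta^2$, and finally check that the resulting extra $L^2\Xi_t$ term can be absorbed into the main descent inequality without blowing up the $1/(\mu n T^{\rm stop})$ scaling. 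The rest of the argument is bookkeeping on step sizes and weighted averaging along the lines of \cite{pmlr-v119-koloskova20a}; the full calculation is deferred to Appendix~\ref{app:optim}.
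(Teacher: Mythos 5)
Your proposal is correct and follows essentially the same route as the paper: the paper's proof simply invokes Theorem~2 of \citet{pmlr-v119-koloskova20a} with the noise variance $\bar\sigma^2$ replaced by $\bar\rho^2+D\sigma^2$ and the $K$ accelerated gossip rounds treated as a single averaging step with consensus parameter $1-p=2(1-\sqrt{\lambda_W})^K\leq 2\min(1/n,(D\sigma^2+\bar\rho^2)/\bar\zeta^2)$, which is exactly the reduction you sketch (you just re-derive the internals of their theorem rather than citing it). Your choice of $K$ to drive the heterogeneity term below the noise floor matches the paper's choice of $p$ precisely.
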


For the following privacy analysis, we need a bound on the sensitivity of gradients with respect to the data. To this end, we assume that for all $v$ and $x_v$, $\ell_v(\cdot,x_v)$ is $\Delta_{\phi}/2$ Lipschitz\footnote{This assumption can be replaced by the more general Assumption~\ref{hyp:dp_gd} given in Appendix~\ref{app:optim}.}.

\begin{theorem}[Privacy analysis of Algorithm~\ref{algo:private-decentralized-sgd}]\label{thm:privacy-gd}
Let $u$ and $v$ be two distinct nodes in $\cV$.
After $T$ iterations of Algorithm~\ref{algo:private-decentralized-sgd} with $K\geq 1$, node $u$ is $(\alpha,\eps_\utov^T(\alpha))$-PNDP with respect to $v$, with:
\begin{equation}
        \eps^T_{\utov}(\alpha)\leq \frac{T^2\Delta_{\phi}^2\alpha}{2\sigma^2}\sum_{k=0}^{K-1}\sum_{ w:\set{v,w}\in \cE}\frac{(W^k)_{u,w}^2}{\NRM{(W^k)_w}^2}\,.
\end{equation}
Thus, for any $\eps>0$, Algorithm~\ref{algo:private-decentralized-sgd} with $T^{\rm stop}(\kappa,\sigma^2,n)$ steps and for $K$ as in Theorem~\ref{thm:utility-sgd}, there exists $f$ such that the algorithm is $(\alpha,f)$-pairwise network DP, with:
\begin{equation*}
    \forall v \in \cV\,,\quad \meane_v \leq \eps\quad\text{ and }\quad \esp{\phi(\tilde \theta^{\rm out})-\phi(\theta^\star)}\leq \Tilde{\cO}\left(\frac{\alpha D \Delta_{\phi}^2  \kappa d }{\mu n^2\eps\sqrt{\lambda_W}} + \frac{\bar \rho^2}{nL}\right)\,,
\end{equation*}
where $d=\max_{v\in\cV}d_v$.
\end{theorem}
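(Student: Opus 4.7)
The plan is to view each outer iteration of Algorithm~\ref{algo:private-decentralized-sgd} as a fresh invocation of \textsc{Muffliato}, apply Theorem~\ref{thm:privacy_general} to each invocation, and compose the $T$ resulting Rényi-DP guarantees. Fix two distinct nodes $u,v\in\cV$ and two neighboring datasets $\cD\sim_u\cD'$, and let $\cF_t$ denote the $\sigma$-algebra generated by the messages received by $v$ up to (and including) the end of outer iteration $t$. Conditionally on $\cF_t$, the new messages produced at iteration $t+1$ are exactly the $K$ gossip iterates of \textsc{Muffliato} initialized at $\hat\theta_w^t+\eta_w^t$, $w\in\cV$, with i.i.d.\ $\eta_w^t\sim\cN(0,\nu^2\sigma^2 I_D)$. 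Since $\ell_u(\cdot,x_u)$ is $\Delta_\phi/2$-Lipschitz, coupling the sampling of $x_u^t$ across $\cD$ and $\cD'$ gives $\|\hat\theta_u^t-\hat\theta_u^{\prime t}\|\leq\nu\Delta_\phi$, and $\hat\theta_w^t=\hat\theta_w^{\prime t}$ for $w\neq u$. Thus the input vector to the inner \textsc{Muffliato} call has $\ell_2$-sensitivity $\nu\Delta_\phi$ while the fresh noise has coordinate variance $\nu^2\sigma^2$, so Theorem~\ref{thm:privacy_general} (applied dimension-wise and using that Chebyshev acceleration is a linear post-processing that leaves each $(W^k)_{u,w}$ unchanged) yields a per-iteration Rényi-DP loss toward~$v$ bounded by $\frac{\alpha\Delta_\phi^2}{2\sigma^2}\sum_{w:\edgevw\in\cE}\sum_{k=0}^{K-1}(W^k)_{u,w}^2/\|(W^k)_w\|^2$.

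Summing these per-iteration bounds via the chain rule for Rényi divergence (adaptive composition) across the $T$ outer iterations gives the first inequality of the theorem. For the second part, I average over $u\in\cV\setminus\{v\}$: the symmetry of $W$ (hence of every $W^k$) gives $\sum_{u}(W^k)_{u,w}^2=\|(W^k)_w\|^2$, so the inner quotient telescopes to $1$ and the sum collapses to $K d_v\leq Kd$ where $d=\max_v d_v$. This yields
\begin{equation*}
    \meane_v \;\leq\; \frac{T K d\,\alpha\Delta_\phi^2}{2n\sigma^2}\,.
\end{equation*}
Setting $\sigma^2=\Theta\!\big(TKd\,\alpha\Delta_\phi^2/(n\eps)\big)$ enforces $\meane_v\leq\eps$ for all $v$. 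Plugging in $T^{\rm stop}=\tilde\cO(\kappa)$ and $K=\tilde\cO(1/\sqrt{\lambda_W})$ from Theorem~\ref{thm:utility-sgd} gives $\sigma^2=\tilde\cO\!\big(\kappa d\,\alpha\Delta_\phi^2/(n\eps\sqrt{\lambda_W})\big)$, and substitution into the utility bound $\tilde\cO\!\big((D\sigma^2+\bar\rho^2)/(\mu n T^{\rm stop})\big)$ of Theorem~\ref{thm:utility-sgd}, using $\mu\kappa=L$ to simplify the second term, produces exactly the stated utility.

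The main obstacle is the rigorous justification of adaptive composition: the view $\Obs{v}(\cA(\cD))$ is a transcript whose distribution at round $t+1$ is a measurable function of $\cF_t$ and of the freshly drawn noise and sampling randomness, so the per-round divergence bound must hold uniformly in the conditioning, and the chain rule for Rényi divergence is then what aggregates the $T$ contributions. A secondary subtlety is that \emph{only} the sensitivity of the current pre-noise iterate matters at each round, even though $u$'s data has already influenced prior iterates and indirectly the state $(\theta_w^t)_w$; this is precisely what conditioning on $\cF_t$ buys us. Once these points are in place, the vectorial extension of Theorem~\ref{thm:privacy_general}, the symmetry-based collapse $\sum_u(W^k)_{u,w}^2=\|(W^k)_w\|^2$, and the plug-in into Theorem~\ref{thm:utility-sgd} are direct.
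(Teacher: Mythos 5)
Your proposal is correct and follows essentially the same route as the paper's proof: each outer iteration is treated as a fresh \textsc{Muffliato} call with per-round sensitivity $\nu\Delta_\phi$ and noise variance $\nu^2\sigma^2$, Theorem~\ref{thm:privacy_general} plus post-processing gives the per-round leak, composition over the $T$ rounds gives the first bound, and the collapse $\sum_u (W^k)_{u,w}^2 = \NRM{(W^k)_w}^2$ yields $\meane_v \leq KTd_v\alpha\Delta_\phi^2/(2n\sigma^2)$, after which the choice of $\sigma^2$ and Theorem~\ref{thm:utility-sgd} give the utility bound. You are in fact more explicit than the paper on the two delicate points (adaptive composition via conditioning on the transcript, and the step-size cancellation in the sensitivity-to-noise ratio), which the paper handles implicitly.
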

The term $\frac{\bar \rho^2}{nL}$ above (which is equal to zero for \emph{Muffliato}-GD, see Remark~\ref{rem:gd_vs_sgd}) is privacy independent. It is typically dominated by the first term, which corresponds to the utility loss due to privacy.
Comparing Theorem~\ref{thm:privacy-gd} with the results for \emph{Muffliato} (Table~\ref{table:sync} in Section~\ref{subsec:muff}), the only difference lies in the factor $D\Delta_{\phi}^2\kappa/\mu$. Note that $\Delta_{\phi}^2$ plays the role of the sensitivity $\Delta^2$, and $D$ appears naturally due to considering $D$-dimensional parameters. On the other hand, $\kappa/\mu$ is directly related to the complexity of the optimization problem: the easier the problem, the better the privacy-utility trade-off of our algorithm. 
Regarding the influence of the graph, the same discussion as after Corollary~\ref{cor:sync_simplified} applies here.
In particular, for expander graphs like the exponential graph of \cite{ying2021exponential}, the factor $d/\sqrt{\lambda_W}$ is constant. In this case, converting to standard ($\epsilon,\delta)$-DP gives $\Tilde{\cO}\big(\frac{D\Delta_{\phi}^2 \kappa}{\mu n^2\epsilon^2}\big)$, recovering the optimal privacy-utility trade-off of central DP \cite{bassily2014Private,wang2017Differentially} up to a factor $\kappa$.
Note that we achieve this nearly optimal rate under a near-linear gradient complexity of $T^{\rm stop}(\kappa,\sigma^2,n)n=\Tilde{\cO}(\kappa n)$ and near-linear total number of messages $T^{\rm stop}(\kappa,\sigma^2,n)Kn=\Tilde{\cO}(\kappa n)$.

\begin{remark}[Time-varying graphs]
\label{rem:time-varying}
The analysis of \emph{Muffliato}-GD/SGD presented in this section (Theorems~\ref{thm:utility-sgd} and~\ref{thm:privacy-gd}) assumes constant gossip matrices $W_t=W$. A more general version of these results is presented in Appendix~\ref{app:optim} to handle time-varying matrices and graphs. This can be used to model randomized communications (as previously described for gossip averaging in Section~\ref{subsec:async_gossip}) as well as user dropout (see experiments in Appendix~\ref{app:expe}). Time-varying graphs can also be used to split the privacy loss more uniformly across the different nodes by avoiding that nodes have the same neighbors across multiple gossip computations. We illustrate this experimentally for decentralized optimization in Section~\ref{sec:expe}, where we randomize the graph after each gradient step of \emph{Muffliato}-GD.
\end{remark}

\section{Experiments}
\label{sec:expe}

In this section, we show that pairwise network DP provides significant privacy gains in practice even for moderate size graphs. We use synthetic graphs and real-world graphs for gossip averaging. For decentralized optimization, we solve a logistic regression problem on real-world data with time-varying Erdos-Renyi graphs, showing in each case clear gains in privacy compared to LDP. The code used to obtain these results is available at \url{https://github.com/totilas/muffliato}.

\textbf{Averaging on synthetic graphs.}
We generate synthetic graphs with $n=2048$ nodes and define the corresponding gossip matrix according to the Hamilton scheme. Note that the privacy guarantees of \emph{Muffliato} are deterministic for a fixed $W$, and defined by Equation~\ref{eq:gossip_general}. For each graph, we run \emph{Muffliato} for the theoretical number of steps required for convergence, and report in Figure~\ref{fig:synthe} the pairwise privacy guarantees aggregated by shortest path lengths between nodes, along with the LDP baseline for comparison.
\emph{Exponential graph} (generalized hypercube): this has shown to be an efficient topology for decentralized learning \cite{ying2021exponential}. Consistently with our theoretical result, privacy is significantly amplified. The shortest path completely defines the privacy loss, so there is no variance.
\emph{Erdos-Renyi graph} with $q = c \log n / n$  ($c \geq 1$) \cite{erdos59a}, averaged over 5 runs:
this has nearly the same utility-privacy trade-off as the exponential graph but with significant variance, which motivates the time-evolving version mentioned in Remark~\ref{rem:time-varying}.
\emph{Grid:} given its larger mixing time, it is less desirable than the two previous graphs, emphasizing the need for careful design of the communication graph.
\emph{Geometric random graph:} two nodes are connected if and only if their distance is below a given threshold, which models for instance Bluetooth communications (effective only in a certain radius). We sample nodes uniformly at random in the square unit and choose a radius ensuring full connectivity. While the shortest path is a noisy approximation of the privacy loss, the Euclidean distance is a very good estimator as shown in Appendix~\ref{app:expe}.
\begin{figure}
\centering\vspace{-9pt}
\hspace{-1em}

\subfigure{
    \includegraphics[width=0.3\linewidth]{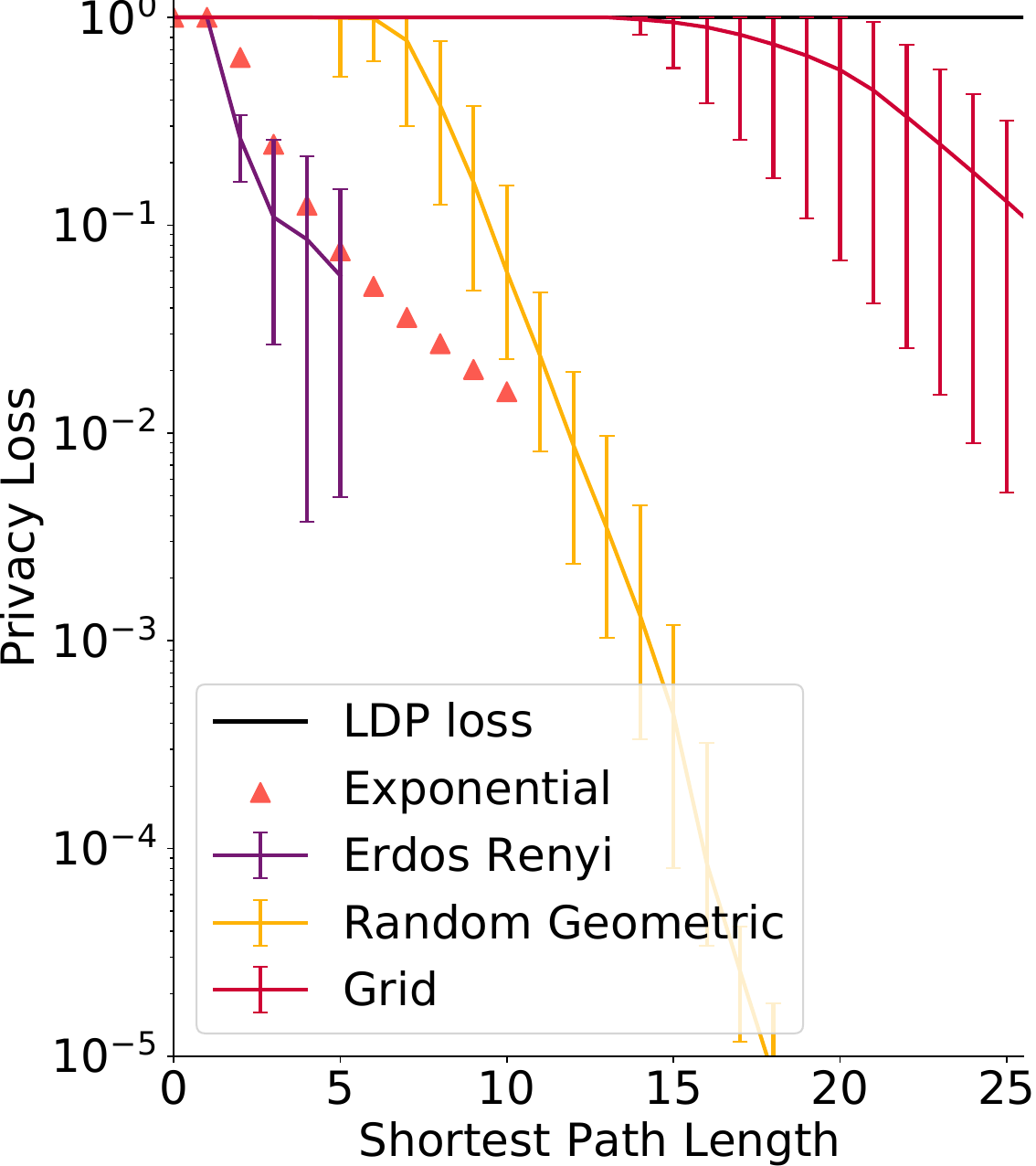}
    \label{fig:synthe}
}
\hspace{-1em}
\subfigure{
    \includegraphics[width=0.4\linewidth]{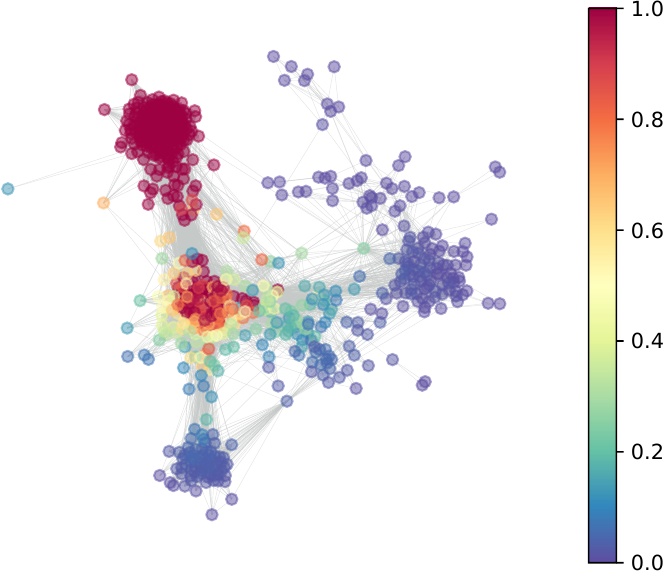}
    \label{fig:fb}
}
\subfigure{
    \includegraphics[width=0.25\linewidth]{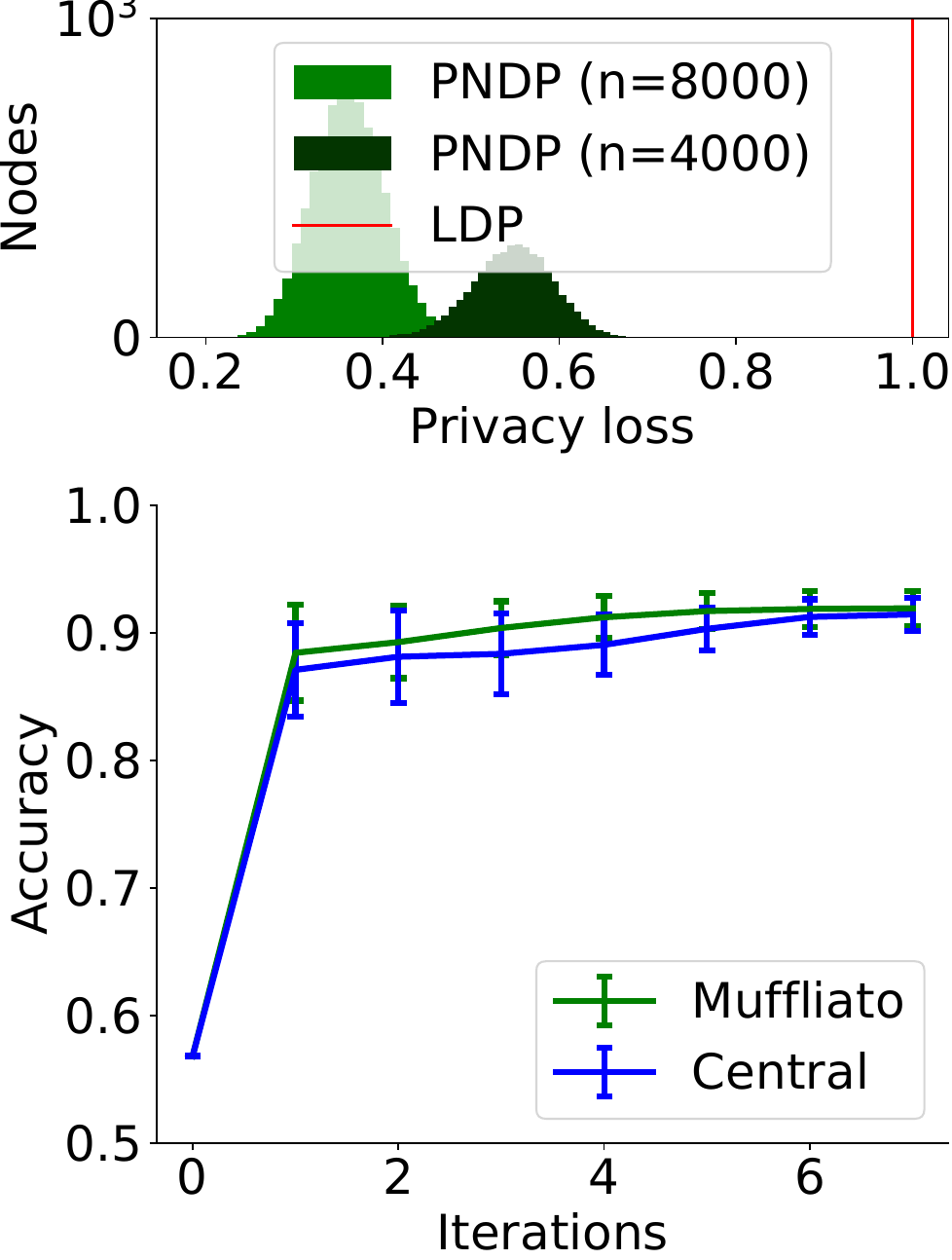}
    \label{fig:gd}
}
\caption{(a)~Left: Privacy loss of \emph{Muffliato} in pairwise NDP on synthetic graphs (best, worst and average in error bars over nodes at a given distance), confirming a significant privacy amplification as the distance increases. (b)~Middle: Privacy loss of \emph{Muffliato} from a node chosen at random on a Facebook ego graph, showing that leakage is very limited outside the node's own community. (c)~Right: Privacy loss and utility of \emph{Muffliato}-GD when using different Erd\"os-Rényi graphs after each gradient step, compared to a baseline based on a trusted aggregator.
}
\end{figure}

\textbf{Averaging on real-world graphs.} 
We consider the graphs of the Facebook ego dataset \cite{egofbgraph}, where nodes are the friends of a given user (this central user is not present is the graph) and edges encode the friendship relation between these nodes. Ego graphs typically induce several clusters corresponding to distinct communities: same high school, same university, same hobbies... 
For each graph, we extract the giant connected component, choose a user at random and report its privacy loss with respect to other nodes. The privacy loss given by LDP is only relevant within the cluster of direct neighbors: privacy guarantees with respect to users in other communities are significantly better, as seen in Figure~\ref{fig:fb}. We observe this consistently across other ego graphs (see Appendix~\ref{app:expe}). This is in line with one of our initial motivation: our pairwise guarantees are well suited to situations where nodes want stronger privacy with respect to distant nodes. 

\textbf{Logistic regression on real-world data.}   
Logistic regression corresponds to minimizing Equation~\ref{eq:erm_obj} with loss function $\ell(\theta;x,y)=\ln(1+\exp(-y\theta^{\top} x))$ where $x\in
\mathbb{R}^d$ and $y\in \{-1,1\}$.
We use a binarized version of UCI Housing dataset.\footnote{\url{https://www.openml.org/d/823}} We standardize the features and
normalize each data point $x$ to have unit $L_2$ norm so that the logistic loss is $1$-Lipschitz for any $(x,y)$.
We split the dataset uniformly at random into a training set (80\%) and a test set and further split the training set across users. %
After each gradient step of \emph{Muffliato}-GD, we draw at random an Erd\"os-Rényi graph of same parameter $q$ to perform the gossiping step and run the theoretical number of steps required for convergence. For each node, we keep track of the privacy loss towards the first node (note that all nodes play the same role). We report the pairwise privacy loss for this node with respect to all others for $n=4000$ and $n=8000$ in Figure~\ref{fig:gd} (top). We see that, as discussed in Remark~\ref{rem:time-varying}, time-varying graphs are effective at splitting the privacy loss more uniformly across nodes: the privacy gains over LDP are clear with respect to all nodes. As captured by our theory, these gains increase with the number of nodes $n$ in the system, and they also concentrate better around the mean. We compare the utility of \emph{Muffliato}-GD to a federated learning alternative which uses the same parameters but aggregates noisy model updates using a \emph{trusted} central server rather than by gossiping. As seen in Figure~\ref{fig:gd} (bottom), both approaches behave similarly in terms of accuracy across iterations.

\section{Conclusion}

We showed that gossip protocols amplify the LDP guarantees provided by local noise injection as values propagate in the graph. Despite the redundancy of gossip that, at first sight, could be seen as an obstacle to privacy, the privacy amplification turns out to be significant: it can nearly match the optimal privacy-utility trade-off of the trusted curator. From the fundamental building block --- noise injection followed by gossip --- that we analyzed under the name \emph{Muffliato}, one can easily extend the analysis to other decentralized algorithms, such as the dual approach proposed in \cite{scaman17optimal}.
Our results are motivated by the typical relation between proximity in the communication graph and lower privacy expectations. Other promising directions are to assume that closer people are more similar, which leads to smaller individual privacy accounting \cite{indiv}, to design new notions of similarity between nodes in graphs as done in personalization \cite{even2022on} that match the privacy loss variations, and to study privacy attacks \cite{decentralized_attacks}.

\paragraph{Acknowledgments}

This work was supported by grants ANR-20-CE23-0015 (Project
PRIDE)  ANR-20-THIA-0014 program "AI\_PhD@Lille", ANR-19-P3IA-0001(PRAIRIE Institute), and from the MSR-INRIA joint centre.
We thank the organizers of NeurIPS@Paris 2021 for the in-person event that allowed us to meet by chance and to start this piece of work.

\bibliographystyle{plainnat} 
\bibliography{biblio,biblio2}

\begin{thebibliography}{61}
\providecommand{\natexlab}[1]{#1}
\providecommand{\url}[1]{\texttt{#1}}
\expandafter\ifx\csname urlstyle\endcsname\relax
  \providecommand{\doi}[1]{doi: #1}\else
  \providecommand{\doi}{doi: \begingroup \urlstyle{rm}\Url}\fi

\bibitem[Alghunaim and Sayed(2019)]{alghunaim2019primaldual}
Sulaiman~A. Alghunaim and Ali~H. Sayed.
\newblock Linear convergence of primal-dual gradient methods and their
  performance in distributed optimization, 2019.
\newblock URL \url{https://arxiv.org/abs/1904.01196}.

\bibitem[Balle et~al.(2018)Balle, Barthe, and Gaboardi]{Balle_subsampling}
Borja Balle, Gilles Barthe, and Marco Gaboardi.
\newblock {Privacy Amplification by Subsampling: Tight Analyses via Couplings
  and Divergences}.
\newblock In \emph{NeurIPS}, 2018.

\bibitem[Balle et~al.(2019)Balle, Bell, Gascón, and Nissim]{Balle2019b}
Borja Balle, James Bell, Adrià Gascón, and Kobbi Nissim.
\newblock {Differentially Private Summation with Multi-Message Shuffling}.
\newblock Technical report, arxiv:1906.09116, 2019.

\bibitem[Bassily et~al.(2014)Bassily, Smith, and Thakurta]{bassily2014Private}
Raef Bassily, Adam Smith, and Abhradeep Thakurta.
\newblock Private {{Empirical Risk Minimization}}: Efficient {{Algorithms}} and
  {{Tight Error Bounds}}.
\newblock In \emph{2014 {{IEEE}} 55th {{Annual Symposium}} on {{Foundations}}
  of {{Computer Science}}}, pages 464--473, {Philadelphia, PA, USA}, October
  2014. {IEEE}.

\bibitem[Bell et~al.(2020)Bell, Bonawitz, Gascón, Lepoint, and
  Raykova]{Bell2020}
James~Henry Bell, Kallista~A. Bonawitz, Adrià Gascón, Tancrède Lepoint, and
  Mariana Raykova.
\newblock Secure single-server aggregation with (poly)logarithmic overheads.
\newblock In \emph{{CCS}}, 2020.

\bibitem[Bellet et~al.(2018)Bellet, Guerraoui, Taziki, and
  Tommasi]{Bellet2018a}
Aur\'elien Bellet, Rachid Guerraoui, Mahsa Taziki, and Marc Tommasi.
\newblock {P}ersonalized and {P}rivate {P}eer-to-{P}eer {M}achine {L}earning.
\newblock In \emph{{AISTATS}}, 2018.

\bibitem[Bellet et~al.(2020)Bellet, Guerraoui, and Hendrikx]{Bellet2020a}
Aur\'elien Bellet, Rachid Guerraoui, and Hadrien Hendrikx.
\newblock {W}ho started this rumor? {Q}uantifying the natural differential
  privacy guarantees of gossip protocols.
\newblock In \emph{{DISC}}, 2020.

\bibitem[Berthier et~al.(2020)Berthier, Bach, and Gaillard]{berthier2020jacobi}
Raphaël Berthier, Francis Bach, and Pierre Gaillard.
\newblock Accelerated gossip in networks of given dimension using jacobi
  polynomial iterations.
\newblock \emph{SIAM Journal on Mathematics of Data Science}, 2\penalty0
  (1):\penalty0 24--47, 2020.
\newblock \doi{10.1137/19M1244822}.
\newblock URL \url{https://doi.org/10.1137/19M1244822}.

\bibitem[Bonawitz et~al.(2017)Bonawitz, Ivanov, Kreuter, Marcedone, McMahan,
  Patel, Ramage, Segal, and Seth]{Bonawitz2017a}
Keith Bonawitz, Vladimir Ivanov, Ben Kreuter, Antonio Marcedone, H.~Brendan
  McMahan, Sarvar Patel, Daniel Ramage, Aaron Segal, and Karn Seth.
\newblock {Practical Secure Aggregation for Privacy-Preserving Machine
  Learning}.
\newblock In \emph{CCS}, 2017.

\bibitem[Boyd et~al.(2006{\natexlab{a}})Boyd, Ghosh, Prabhakar, and
  Shah]{boyd2006randomized}
S.~Boyd, A.~Ghosh, B.~Prabhakar, and D.~Shah.
\newblock Randomized gossip algorithms.
\newblock \emph{IEEE Transactions on Information Theory}, 52\penalty0
  (6):\penalty0 2508--2530, 2006{\natexlab{a}}.
\newblock \doi{10.1109/TIT.2006.874516}.

\bibitem[Boyd et~al.(2006{\natexlab{b}})Boyd, Ghosh, Prabhakar, and
  Shah]{boyd2006gossip}
Stephen Boyd, Arpita Ghosh, Balaji Prabhakar, and Devavrat Shah.
\newblock Randomized gossip algorithms.
\newblock \emph{IEEE transactions on information theory}, 52\penalty0
  (6):\penalty0 2508--2530, 2006{\natexlab{b}}.

\bibitem[Chan et~al.(2012{\natexlab{a}})Chan, Shi, and Song]{Chan2012}
T.-H.~Hubert Chan, Elaine Shi, and Dawn Song.
\newblock {Optimal Lower Bound for Differentially Private Multi-party
  Aggregation}.
\newblock In \emph{ESA}, 2012{\natexlab{a}}.

\bibitem[Chan et~al.(2012{\natexlab{b}})Chan, Shi, and Song]{ChanSS12}
T.-H.~Hubert Chan, Elaine Shi, and Dawn Song.
\newblock Privacy-preserving stream aggregation with fault tolerance.
\newblock In \emph{Financial Cryptography}, 2012{\natexlab{b}}.

\bibitem[Chatzikokolakis et~al.(2013)Chatzikokolakis, Andr{\'e}s, Bordenabe,
  and Palamidessi]{chatzikokolakis:hal-00767210}
Konstantinos Chatzikokolakis, Miguel~E. Andr{\'e}s, Nicol{\'a}s~E. Bordenabe,
  and Catuscia Palamidessi.
\newblock {Broadening the Scope of Differential Privacy Using Metrics}.
\newblock In De~Cristofaro, Emiliano, Wright, and Matthew, editors, \emph{{The
  13th Privacy Enhancing Technologies Symposium}}, volume 7981 of \emph{Lecture
  Notes in Computer Science}, pages 82--102, Bloomington, Indiana, United
  States, July 2013. {Springer}.
\newblock \doi{10.1007/978-3-642-39077-7}.
\newblock URL \url{https://hal.inria.fr/hal-00767210}.

\bibitem[Cheng et~al.(2019)Cheng, Yu, Hu, Zawad, Yan, Li, Li, and Chen]{leasgd}
Hsin-Pai Cheng, Patrick Yu, Haojing Hu, Syed Zawad, Feng Yan, Shiyu Li,
  Hai~Helen Li, and Yiran Chen.
\newblock {Towards Decentralized Deep Learning with Differential Privacy}.
\newblock In \emph{CLOUD}, 2019.

\bibitem[Cheu et~al.(2019)Cheu, Smith, Ullman, Zeber, and Zhilyaev]{Cheu2019}
Albert Cheu, Adam~D. Smith, Jonathan Ullman, David Zeber, and Maxim Zhilyaev.
\newblock {Distributed Differential Privacy via Shuffling}.
\newblock In \emph{EUROCRYPT}, 2019.

\bibitem[Cyffers and Bellet(2020)]{cyffers2020privacy}
Edwige Cyffers and Aurélien Bellet.
\newblock Privacy amplification by decentralization, 2020.
\newblock URL \url{https://arxiv.org/abs/2012.05326}.

\bibitem[{Dimakis} et~al.(2010){Dimakis}, {Kar}, {Moura}, {Rabbat}, and
  {Scaglione}]{dimakis2010synchgossip}
A.~G. {Dimakis}, S.~{Kar}, J.~M.~F. {Moura}, M.~G. {Rabbat}, and
  A.~{Scaglione}.
\newblock Gossip algorithms for distributed signal processing.
\newblock \emph{Proceedings of the IEEE}, 98\penalty0 (11):\penalty0
  1847--1864, 2010.

\bibitem[Duchi et~al.(2013)Duchi, Jordan, and Wainwright]{d13}
John~C Duchi, Michael~I Jordan, and Martin~J Wainwright.
\newblock Local privacy and statistical minimax rates.
\newblock In \emph{FOCS}, 2013.

\bibitem[Dwork and Roth(2014)]{dwork2013Algorithmic}
Cynthia Dwork and Aaron Roth.
\newblock The {{Algorithmic Foundations}} of {{Differential Privacy}}.
\newblock \emph{Foundations and Trends\textregistered{} in Theoretical Computer
  Science}, 9\penalty0 (3-4):\penalty0 211--407, 2014.

\bibitem[Dwork et~al.(2006)Dwork, Kenthapadi, McSherry, Mironov, and
  Naor]{Dwork2006ourselves}
Cynthia Dwork, Krishnaram Kenthapadi, Frank McSherry, Ilya Mironov, and Moni
  Naor.
\newblock {Our Data, Ourselves: Privacy Via Distributed Noise Generation}.
\newblock In \emph{EUROCRYPT}, 2006.

\bibitem[Erd\"{o}s and R\'{e}nyi(1959)]{erdos59a}
P.~Erd\"{o}s and A.~R\'{e}nyi.
\newblock On random graphs i.
\newblock \emph{Publicationes Mathematicae Debrecen}, 6:\penalty0 290, 1959.

\bibitem[Erlingsson et~al.(2019)Erlingsson, Feldman, Mironov, Raghunathan, and
  Talwar]{amp_shuffling}
Ulfar Erlingsson, Vitaly Feldman, Ilya Mironov, Ananth Raghunathan, and Kunal
  Talwar.
\newblock {Amplification by Shuffling: From Local to Central Differential
  Privacy via Anonymity}.
\newblock In \emph{SODA}, 2019.

\bibitem[Even et~al.(2020)Even, Hendrikx, and Massoulié]{even2020asynchrony}
Mathieu Even, Hadrien Hendrikx, and Laurent Massoulié.
\newblock Asynchrony and acceleration in gossip algorithms.
\newblock Technical report, arXiv:2011.02379, 2020.

\bibitem[Even et~al.(2021)Even, Berthier, Bach, Flammarion, Hendrikx, Gaillard,
  Massouli\'{e}, and Taylor]{even2021continuized}
Mathieu Even, Rapha\"{e}l Berthier, Francis Bach, Nicolas Flammarion, Hadrien
  Hendrikx, Pierre Gaillard, Laurent Massouli\'{e}, and Adrien Taylor.
\newblock Continuized accelerations of deterministic and stochastic gradient
  descents, and of gossip algorithms.
\newblock In M.~Ranzato, A.~Beygelzimer, Y.~Dauphin, P.S. Liang, and J.~Wortman
  Vaughan, editors, \emph{Advances in Neural Information Processing Systems},
  volume~34, pages 28054--28066. Curran Associates, Inc., 2021.
\newblock URL
  \url{https://proceedings.neurips.cc/paper/2021/file/ec26fc2eb2b75aece19c70392dc744c2-Paper.pdf}.

\bibitem[Even et~al.(2022)Even, Massouli{\'e}, and Scaman]{even2022on}
Mathieu Even, Laurent Massouli{\'e}, and Kevin Scaman.
\newblock On sample optimality in personalized collaborative and federated
  learning.
\newblock In Alice~H. Oh, Alekh Agarwal, Danielle Belgrave, and Kyunghyun Cho,
  editors, \emph{Advances in Neural Information Processing Systems}, 2022.
\newblock URL \url{https://openreview.net/forum?id=7EP90NMAoK}.

\bibitem[Feldman and Zrnic(2021)]{indiv}
Vitaly Feldman and Tijana Zrnic.
\newblock {Individual Privacy Accounting via a Rényi Filter}.
\newblock In \emph{NeurIPS}, 2021.

\bibitem[Feldman et~al.(2018{\natexlab{a}})Feldman, Mironov, Talwar, and
  Thakurta]{amp_iter}
Vitaly Feldman, Ilya Mironov, Kunal Talwar, and Abhradeep Thakurta.
\newblock {Privacy Amplification by Iteration}.
\newblock In \emph{FOCS}, 2018{\natexlab{a}}.

\bibitem[Feldman et~al.(2018{\natexlab{b}})Feldman, Mironov, Talwar, and
  Thakurta]{feldman2018iteration}
Vitaly Feldman, Ilya Mironov, Kunal Talwar, and Abhradeep Thakurta.
\newblock Privacy amplification by iteration.
\newblock \emph{2018 IEEE 59th Annual Symposium on Foundations of Computer
  Science (FOCS)}, Oct 2018{\natexlab{b}}.
\newblock \doi{10.1109/focs.2018.00056}.
\newblock URL \url{http://dx.doi.org/10.1109/FOCS.2018.00056}.

\bibitem[Feldman et~al.(2020)Feldman, McMillan, and Talwar]{clones}
Vitaly Feldman, Audra McMillan, and Kunal Talwar.
\newblock {Hiding Among the Clones: A Simple and Nearly Optimal Analysis of
  Privacy Amplification by Shuffling}.
\newblock Technical report, arXiv:2012.12803, 2020.

\bibitem[Ghazi et~al.(2020)Ghazi, Golowich, Kumar, Manurangsi, Pagh, and
  Velingker]{Ghazi2020}
Badih Ghazi, Noah Golowich, Ravi Kumar, Pasin Manurangsi, Rasmus Pagh, and
  Ameya Velingker.
\newblock {Pure Differentially Private Summation from Anonymous Messages}.
\newblock Technical report, arXiv:2002.01919, 2020.

\bibitem[Hendrikx et~al.(2019)Hendrikx, Bach, and
  Massouli{\'e}]{hendrikx2019accelerated}
Hadrien Hendrikx, Francis Bach, and Laurent Massouli{\'e}.
\newblock An accelerated decentralized stochastic proximal algorithm for finite
  sums.
\newblock In \emph{Advances in Neural Information Processing Systems}, 2019.

\bibitem[Hoffman et~al.(2019)Hoffman, Kahle, and Paquette]{ERgaps2019}
Christopher Hoffman, Matthew Kahle, and Elliot Paquette.
\newblock Spectral gaps of random graphs and applications.
\newblock \emph{International Mathematics Research Notices}, 2021\penalty0
  (11):\penalty0 8353–8404, May 2019.
\newblock ISSN 1687-0247.
\newblock \doi{10.1093/imrn/rnz077}.
\newblock URL \url{http://dx.doi.org/10.1093/imrn/rnz077}.

\bibitem[Huang et~al.(2015)Huang, Mitra, and Vaidya]{Huang2015a}
Zhenqi Huang, Sayan Mitra, and Nitin Vaidya.
\newblock {Differentially Private Distributed Optimization}.
\newblock In \emph{ICDCN}, 2015.

\bibitem[Jayaraman et~al.(2018)Jayaraman, Wang, Evans, and Gu]{Jayaraman2018}
Bargav Jayaraman, Lingxiao Wang, David Evans, and Quanquan Gu.
\newblock Distributed learning without distress: Privacy-preserving empirical
  risk minimization.
\newblock In \emph{{NeurIPS}}, 2018.

\bibitem[Kairouz et~al.(2021)Kairouz, McMahan, Avent, Bellet, Bennis, Bhagoji,
  Bonawitz, Charles, Cormode, Cummings, D’Oliveira, Eichner, Rouayheb, Evans,
  Gardner, Garrett, Gascón, Ghazi, Gibbons, Gruteser, Harchaoui, He, He, Huo,
  Hutchinson, Hsu, Jaggi, Javidi, Joshi, Khodak, Konecný, Korolova,
  Koushanfar, Koyejo, Lepoint, Liu, Mittal, Mohri, Nock, Özgür, Pagh, Qi,
  Ramage, Raskar, Raykova, Song, Song, Stich, Sun, Suresh, Tramèr, Vepakomma,
  Wang, Xiong, Xu, Yang, Yu, Yu, and Zhao]{kairouz_advances_2019}
Peter Kairouz, H.~Brendan McMahan, Brendan Avent, Aurélien Bellet, Mehdi
  Bennis, Arjun~Nitin Bhagoji, Kallista Bonawitz, Zachary Charles, Graham
  Cormode, Rachel Cummings, Rafael G.~L. D’Oliveira, Hubert Eichner, Salim~El
  Rouayheb, David Evans, Josh Gardner, Zachary Garrett, Adrià Gascón, Badih
  Ghazi, Phillip~B. Gibbons, Marco Gruteser, Zaid Harchaoui, Chaoyang He, Lie
  He, Zhouyuan Huo, Ben Hutchinson, Justin Hsu, Martin Jaggi, Tara Javidi,
  Gauri Joshi, Mikhail Khodak, Jakub Konecný, Aleksandra Korolova, Farinaz
  Koushanfar, Sanmi Koyejo, Tancrède Lepoint, Yang Liu, Prateek Mittal,
  Mehryar Mohri, Richard Nock, Ayfer Özgür, Rasmus Pagh, Hang Qi, Daniel
  Ramage, Ramesh Raskar, Mariana Raykova, Dawn Song, Weikang Song, Sebastian~U.
  Stich, Ziteng Sun, Ananda~Theertha Suresh, Florian Tramèr, Praneeth
  Vepakomma, Jianyu Wang, Li~Xiong, Zheng Xu, Qiang Yang, Felix~X. Yu, Han Yu,
  and Sen Zhao.
\newblock Advances and open problems in federated learning.
\newblock \emph{Foundations and Trends® in Machine Learning}, 14\penalty0
  (1–2):\penalty0 1--210, 2021.

\bibitem[Kasiviswanathan et~al.(2008)Kasiviswanathan, Lee, Nissim,
  Raskhodnikova, and Smith]{Kasiviswanathan2008}
Shiva~Prasad Kasiviswanathan, Homin~K. Lee, Kobbi Nissim, Sofya Raskhodnikova,
  and Adam~D. Smith.
\newblock {What Can We Learn Privately?}
\newblock In \emph{FOCS}, 2008.

\bibitem[Koloskova et~al.(2019)Koloskova, Stich, and
  Jaggi]{koloskova2019decentralized}
Anastasia Koloskova, Sebastian Stich, and Martin Jaggi.
\newblock Decentralized stochastic optimization and gossip algorithms with
  compressed communication.
\newblock In \emph{International Conference on Machine Learning}, volume~97,
  pages 3478--3487. PMLR, 2019.

\bibitem[Koloskova et~al.(2020)Koloskova, Loizou, Boreiri, Jaggi, and
  Stich]{pmlr-v119-koloskova20a}
Anastasia Koloskova, Nicolas Loizou, Sadra Boreiri, Martin Jaggi, and Sebastian
  Stich.
\newblock A unified theory of decentralized {SGD} with changing topology and
  local updates.
\newblock In Hal~Daumé III and Aarti Singh, editors, \emph{Proceedings of the
  37th International Conference on Machine Learning}, volume 119 of
  \emph{Proceedings of Machine Learning Research}, pages 5381--5393. PMLR,
  13--18 Jul 2020.
\newblock URL \url{https://proceedings.mlr.press/v119/koloskova20a.html}.

\bibitem[Kovalev et~al.(2021)Kovalev, Gasanov, Gasnikov, and
  Richtarik]{kovalev2021adom}
Dmitry Kovalev, Elnur Gasanov, Alexander Gasnikov, and Peter Richtarik.
\newblock Lower bounds and optimal algorithms for smooth and strongly convex
  decentralized optimization over time-varying networks.
\newblock In M.~Ranzato, A.~Beygelzimer, Y.~Dauphin, P.S. Liang, and J.~Wortman
  Vaughan, editors, \emph{Advances in Neural Information Processing Systems},
  volume~34, pages 22325--22335. Curran Associates, Inc., 2021.
\newblock URL
  \url{https://proceedings.neurips.cc/paper/2021/file/bc37e109d92bdc1ea71da6c919d54907-Paper.pdf}.

\bibitem[Leskovec and Mcauley(2012)]{egofbgraph}
Jure Leskovec and Julian Mcauley.
\newblock Learning to discover social circles in ego networks.
\newblock In F.~Pereira, C.J. Burges, L.~Bottou, and K.Q. Weinberger, editors,
  \emph{Advances in Neural Information Processing Systems}, volume~25. Curran
  Associates, Inc., 2012.
\newblock URL
  \url{https://proceedings.neurips.cc/paper/2012/file/7a614fd06c325499f1680b9896beedeb-Paper.pdf}.

\bibitem[Lian et~al.(2017)Lian, Zhang, Zhang, Hsieh, Zhang, and
  Liu]{Lian2017CanDA}
Xiangru Lian, Ce~Zhang, Huan Zhang, Cho-Jui Hsieh, Wei Zhang, and Ji~Liu.
\newblock Can decentralized algorithms outperform centralized algorithms? a
  case study for decentralized parallel stochastic gradient descent.
\newblock In \emph{NIPS}, 2017.

\bibitem[Lopes and Sayed(2007)]{lopes2007incremental}
Cassio~G. Lopes and Ali~H. Sayed.
\newblock Incremental adaptive strategies over distributed networks.
\newblock \emph{IEEE Transactions on Signal Processing}, 55\penalty0
  (8):\penalty0 4064--4077, 2007.
\newblock \doi{10.1109/TSP.2007.896034}.

\bibitem[McMahan et~al.(2017)McMahan, Moore, Ramage, Hampson, and
  Arcas]{mcmahan2017fl}
Brendan McMahan, Eider Moore, Daniel Ramage, Seth Hampson, and Blaise Aguera~y
  Arcas.
\newblock {Communication-Efficient Learning of Deep Networks from Decentralized
  Data}.
\newblock In Aarti Singh and Jerry Zhu, editors, \emph{Proceedings of the 20th
  International Conference on Artificial Intelligence and Statistics},
  volume~54 of \emph{Proceedings of Machine Learning Research}, pages
  1273--1282. PMLR, 20--22 Apr 2017.

\bibitem[McMahan et~al.(2018)McMahan, Ramage, Talwar, and Zhang]{user-dp}
H.~Brendan McMahan, Daniel Ramage, Kunal Talwar, and Li~Zhang.
\newblock {Learning Differentially Private Recurrent Language Models}.
\newblock In \emph{ICLR}, 2018.

\bibitem[Mironov(2017)]{DBLP:journals/corr/Mironov17}
Ilya Mironov.
\newblock Renyi differential privacy.
\newblock \emph{CoRR}, abs/1702.07476, 2017.
\newblock URL \url{http://arxiv.org/abs/1702.07476}.

\bibitem[Mohar et~al.(1991)Mohar, Alavi, Chartrand, and
  Oellermann]{Mohar1991laplacian}
Mohar, Y~Alavi, G~Chartrand, and OR~Oellermann.
\newblock The laplacian spectrum of graphs.
\newblock \emph{Graph theory, combinatorics, and applications}, 1991.

\bibitem[Nedic and Ozdaglar(2009)]{nedic2008distributed}
Angelia Nedic and Asuman Ozdaglar.
\newblock Distributed subgradient methods for multi-agent optimization.
\newblock \emph{IEEE Transactions on Automatic Control}, 54\penalty0
  (1):\penalty0 48--61, 2009.
\newblock \doi{10.1109/TAC.2008.2009515}.

\bibitem[Nedic et~al.(2018)Nedic, Olshevsky, and Rabbat]{nedic2018network}
Angelia Nedic, Alex Olshevsky, and {Michael G.} Rabbat.
\newblock Network topology and communication-computation tradeoffs in
  decentralized optimization.
\newblock \emph{Proceedings of the IEEE}, 106\penalty0 (5):\penalty0 953--976,
  May 2018.

\bibitem[Neglia et~al.(2019)Neglia, Calbi, Towsley, and
  Vardoyan]{neglia19infocom}
Giovanni Neglia, Gianmarco Calbi, Don Towsley, and Gayane Vardoyan.
\newblock The role of network topology for distributed machine learning.
\newblock In \emph{INFOCOM}, 2019.

\bibitem[Nissenbaum(2004)]{a2921ab0e3bc4e2f890006101e85a15f}
Helen Nissenbaum.
\newblock Privacy as contextual integrity.
\newblock \emph{Washington Law Review}, 79\penalty0 (1):\penalty0 119--157,
  February 2004.
\newblock ISSN 0043-0617.

\bibitem[Pasquini et~al.(2022)Pasquini, Raynal, and
  Troncoso]{decentralized_attacks}
Dario Pasquini, Mathilde Raynal, and Carmela Troncoso.
\newblock On the privacy of decentralized machine learning.
\newblock Technical report, arXiv:2205.08443, 2022.

\bibitem[{Sabater} et~al.(2020){Sabater}, {Bellet}, and {Ramon}]{gopa}
C{\'e}sar {Sabater}, Aur{\'e}lien {Bellet}, and Jan {Ramon}.
\newblock {Distributed Differentially Private Averaging with Improved Utility
  and Robustness to Malicious Parties}.
\newblock Technical report, arXiv:2006.07218, 2020.

\bibitem[Scaman et~al.(2017)Scaman, Bach, Bubeck, Lee, and
  Massouli{\'e}]{scaman17optimal}
Kevin Scaman, Francis Bach, S{\'e}bastien Bubeck, Yin~Tat Lee, and Laurent
  Massouli{\'e}.
\newblock Optimal algorithms for smooth and strongly convex distributed
  optimization in networks.
\newblock In Doina Precup and Yee~Whye Teh, editors, \emph{Proceedings of the
  34th International Conference on Machine Learning}, volume~70 of
  \emph{Proceedings of Machine Learning Research}, pages 3027--3036. PMLR,
  06--11 Aug 2017.
\newblock URL \url{https://proceedings.mlr.press/v70/scaman17a.html}.

\bibitem[Shi et~al.(2011)Shi, Chan, Rieffel, Chow, and Song]{Shi2011}
Elaine Shi, T.-H.~Hubert Chan, Eleanor~G. Rieffel, Richard Chow, and Dawn Song.
\newblock {Privacy-Preserving Aggregation of Time-Series Data}.
\newblock In \emph{NDSS}, 2011.

\bibitem[Wang et~al.(2017)Wang, Ye, and Xu]{wang2017Differentially}
Di~Wang, Minwei Ye, and Jinhui Xu.
\newblock Differentially private empirical risk minimization revisited: Faster
  and more general.
\newblock In I.~Guyon, U.~V. Luxburg, S.~Bengio, H.~Wallach, R.~Fergus,
  S.~Vishwanathan, and R.~Garnett, editors, \emph{NeurIPS}, volume~30. {Curran
  Associates, Inc.}, 2017.

\bibitem[Wang et~al.(2018)Wang, Gaboardi, and Xu]{Wang2018b}
Di~Wang, Marco Gaboardi, and Jinhui Xu.
\newblock {Empirical Risk Minimization in Non-interactive Local Differential
  Privacy Revisited}.
\newblock In \emph{NeurIPS}, 2018.

\bibitem[Xu et~al.(2021)Xu, Zhang, and Wang]{9524471}
Jie Xu, Wei Zhang, and Fei Wang.
\newblock A(dp)$^2$sgd: Asynchronous decentralized parallel stochastic gradient
  descent with differential privacy.
\newblock \emph{IEEE Transactions on Pattern Analysis and Machine
  Intelligence}, 2021.

\bibitem[Ying et~al.(2021)Ying, Yuan, Chen, Hu, Pan, and
  Yin]{ying2021exponential}
Bicheng Ying, Kun Yuan, Yiming Chen, Hanbin Hu, Pan Pan, and Wotao Yin.
\newblock Exponential graph is provably efficient for decentralized deep
  training.
\newblock In A.~Beygelzimer, Y.~Dauphin, P.~Liang, and J.~Wortman Vaughan,
  editors, \emph{Advances in Neural Information Processing Systems}, 2021.

\bibitem[Zhang et~al.(2018)Zhang, Khalili, and Liu]{admm}
Xueru Zhang, Mohammad~Mahdi Khalili, and Mingyan Liu.
\newblock {Improving the Privacy and Accuracy of ADMM-Based Distributed
  Algorithms}.
\newblock In \emph{ICML}, 2018.

\bibitem[Zheng et~al.(2017)Zheng, Mou, and Wang]{Zheng2017}
Kai Zheng, Wenlong Mou, and Liwei Wang.
\newblock {Collect at Once, Use Effectively: Making Non-interactive Locally
  Private Learning Possible}.
\newblock In \emph{ICML}, 2017.

\end{thebibliography}

\newpage
\appendix

\section{Preliminary Lemmas and Notations}

We conduct our analysis with the Gaussian mechanism, introduced in Section~\ref{sec:setting}, that we recall here for readability.

\begin{lemma}[Gaussian mechanism]\label{lem:gaussian_mechanism}
For $\alpha>1$, noise variance $\sigma^2$, sensitivity $\Delta>0$ and $x,y\in\R$ such that $|x-y|\leq\Delta$, we have:
\begin{equation*}
    \divalpha{\cN(x,\sigma^2)}{\cN(y,\sigma^2)}\leq \frac{\alpha\Delta^2}{2\sigma^2}\,.
\end{equation*}
\end{lemma}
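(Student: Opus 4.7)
The plan is to prove this by direct computation of the Rényi divergence between two univariate Gaussians with the same variance, then bound the result using the sensitivity hypothesis $|x-y|\leq\Delta$. This is a classical calculation, so the main work is bookkeeping rather than conceptual difficulty.

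First I would write out the densities explicitly: $\mu_X(z) = (2\pi\sigma^2)^{-1/2}\exp(-(z-x)^2/(2\sigma^2))$ and similarly for $\mu_Y$. The log-ratio $\ln(\mu_X(z)/\mu_Y(z))$ then simplifies to an affine function of $z$, namely $\bigl((z-y)^2 - (z-x)^2\bigr)/(2\sigma^2) = \bigl(2z(x-y) + y^2 - x^2\bigr)/(2\sigma^2)$. This linearity in $z$ is the reason the integral will close in closed form.

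Next I would substitute into the definition $\divalpha{\cN(x,\sigma^2)}{\cN(y,\sigma^2)} = \frac{1}{\alpha-1}\ln\int \bigl(\mu_X(z)/\mu_Y(z)\bigr)^\alpha \mu_Y(z)\,dz$. The integrand rearranges to a Gaussian density multiplied by an exponential factor that does not depend on $z$: completing the square in the exponent with respect to $z$, the Gaussian piece integrates to $1$, and what remains is $\exp\bigl(\alpha(\alpha-1)(x-y)^2/(2\sigma^2)\bigr)$. Taking $\frac{1}{\alpha-1}\ln(\cdot)$ yields exactly $\alpha(x-y)^2/(2\sigma^2)$.

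Finally, applying the hypothesis $|x-y|\leq\Delta$ gives the desired bound $\alpha\Delta^2/(2\sigma^2)$. The only mildly tricky step is the completing-the-square manipulation inside the exponent; one needs to be careful to track the cross term $(x-y)z$ contributed by the $\alpha$-th power together with the quadratic $(z-y)^2$ from $\mu_Y$ so that the combined quadratic form in $z$ integrates against the Gaussian normalization factor cleanly. Beyond that, there is no real obstacle, and the lemma follows in a few lines.
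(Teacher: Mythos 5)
Your computation is correct: the exact Rényi divergence between $\cN(x,\sigma^2)$ and $\cN(y,\sigma^2)$ is $\alpha(x-y)^2/(2\sigma^2)$, and the completing-the-square step indeed produces the factor $\exp\bigl(\alpha(\alpha-1)(x-y)^2/(2\sigma^2)\bigr)$ after the Gaussian integrates to one. The paper does not prove this lemma itself but cites it as the standard Gaussian-mechanism bound from the R\'enyi DP literature, and your derivation is precisely the classical calculation behind that citation.
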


For the privacy analysis when the graph is private and randomly sampled (Appendix~\ref{sec:muff_private_random_graph}), we use the following result on the weak convexity of the Rényi divergence %
\citep{feldman2018iteration}.

\begin{lemma}[Quasi-convexity of Rényi divergence \cite{feldman2018iteration}]\label{lem:conv} Let $(\mu_i)_{i\in\cI}$ and $(\nu_i)_{i\in\cI}$ be probability distributions over  shared space, such that for all $i\in\cI$, we have $D_\alpha(\mu_i||\nu_i)\leq c/(\alpha-1)$ for some $c\in(0,1]$. Let $\rho$ be a distribution over $\cI$ and $\mu_\rho$ and $\nu_\rho$ be obtained by sampling $i$ from $\rho$, and outputing a sample from $\mu_i$ and $\nu_i$. Then, we have:
\begin{equation*}
    D_\alpha(\mu_\rho||\nu_\rho)\leq (1+c)\esp{D_\alpha(\mu_i||\nu_i) \mid i\sim\rho}\,.
\end{equation*}
\end{lemma}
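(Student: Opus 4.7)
The plan is to reduce the statement to two elementary inequalities: Jensen applied to the posterior over $i$ given a sample under $\nu_\rho$, and a secant bound on the exponential on the interval $[0,c]$ that specifically uses $c \le 1$.

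First I would unfold the definition. Writing $f_i(x) = \mu_i(x)/\nu_i(x)$ and setting $q_i(x) = \rho(i)\nu_i(x)/\nu_\rho(x)$, which is a probability distribution over $i$ for each $x$, one has the identity
\[
\frac{\mu_\rho(x)}{\nu_\rho(x)} = \sum_{i\in\cI} q_i(x) f_i(x).
\]
Since $z\mapsto z^\alpha$ is convex for $\alpha>1$, Jensen's inequality on the right-hand side gives $(\mu_\rho/\nu_\rho)^\alpha(x) \le \sum_i q_i(x) f_i(x)^\alpha$. Integrating against $\nu_\rho$ and using $q_i(x)\nu_\rho(x) = \rho(i)\nu_i(x)$ collapses the sum, yielding
\[
\int \mu_\rho(x)^\alpha \nu_\rho(x)^{1-\alpha}\,dx \;\le\; \sum_{i\in\cI}\rho(i)\int \mu_i(x)^\alpha \nu_i(x)^{1-\alpha}\,dx \;=\; \espk{i\sim\rho}{e^{(\alpha-1)D_\alpha(\mu_i\|\nu_i)}}.
\]
Taking $\log$ and dividing by $\alpha-1$ yields the standard (unweighted) joint-convexity-in-log form
\[
(\alpha-1)D_\alpha(\mu_\rho\|\nu_\rho) \;\le\; \log\espk{i\sim\rho}{e^{(\alpha-1)D_\alpha(\mu_i\|\nu_i)}}.
\]

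Second, to pass from this log-expectation to the claimed linear bound $(1+c)\,\E[D_\alpha(\mu_i\|\nu_i)]$, I would set $Y_i := (\alpha-1)D_\alpha(\mu_i\|\nu_i)\in[0,c]$ and use the secant inequality $e^y \le 1 + y\cdot(e^c-1)/c$ valid on $[0,c]$ by convexity of the exponential. This gives $\E[e^Y]\le 1 + \E[Y]\cdot(e^c-1)/c$, and then $\log(1+t)\le t$ yields $\log \E[e^Y]\le \E[Y]\cdot(e^c-1)/c$. The final ingredient is the elementary inequality
\[
\frac{e^c-1}{c} \;\le\; 1+c \qquad \text{for all } c\in(0,1],
\]
which follows from the Taylor expansion $e^c = 1+c+\sum_{k\ge 2}c^k/k!$ after noting $\sum_{k\ge 2}c^{k-2}/k! \le e-2 < 1$. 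Combining these, $\log\E[e^Y]\le (1+c)\E[Y]$, and dividing by $\alpha-1>0$ gives the claim.

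The only real obstacle is the coefficient bookkeeping in the second step; everything else is mechanical. In particular, the constraint $c\le 1$ is used precisely once, in the secant comparison $(e^c-1)/c \le 1+c$. If one only cared about a weaker constant (say $e$ instead of $1+c$), the bound $c\le 1$ could be weakened; the sharp form stated here is what requires $c\in(0,1]$.
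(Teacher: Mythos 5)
Your proof is correct, and every step checks out: the mixture identity with posterior weights $q_i(x)=\rho(i)\nu_i(x)/\nu_\rho(x)$, Jensen for $z\mapsto z^\alpha$, the collapse $q_i(x)\nu_\rho(x)=\rho(i)\nu_i(x)$ giving $(\alpha-1)D_\alpha(\mu_\rho\|\nu_\rho)\le\log\mathbb{E}_{i\sim\rho}\bigl[e^{(\alpha-1)D_\alpha(\mu_i\|\nu_i)}\bigr]$, and the chord bound $e^y\le 1+y(e^c-1)/c$ on $[0,c]$ combined with $\log(1+t)\le t$ and $(e^c-1)/c\le 1+c$ for $c\in(0,1]$. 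Note that the paper itself does not prove this lemma: it imports it from Feldman et al.\ (2018), so there is no in-paper proof to match. Your argument is essentially the standard one from that source; the only genuine difference is the elementary inequality used in the second step, where the original proof bounds $e^y\le 1+y+y^2$ (valid on $[0,1]$) and then uses $Y^2\le cY$ since $0\le Y\le c$, whereas you use the secant comparison. Both routes use $c\le 1$ exactly once and yield the same constant $1+c$; your variant is arguably slightly cleaner since it isolates the role of $c$ in a single one-variable inequality, while the $1+y+y^2$ route generalizes more readily when one only has a bound on $\mathbb{E}[Y^2]$ rather than an almost-sure bound. One small point worth making explicit if you write this up: $Y_i\ge 0$ relies on nonnegativity of the Rényi divergence, which is what licenses restricting the secant bound to $[0,c]$.
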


In the following, we will use the notation $u \sim v$ to denote that two nodes $u$ and $v$ are neighbors.

\section{General Privacy Analysis (Theorem~\ref{thm:privacy_general}) \label{app:general_formula}}

\begin{proof}
We need to bound the privacy loss that occurs from the following view:
\begin{equation*}
    \cO_v\big(\cA^T(\cD)\big)=\set{\big(W_{0:t}(x+\eta)\big)_w\,|\quad \edgevw\in\cE_{t}\,,\quad0\leq t\leq T-1}\cup\set{x_v}\,.
\end{equation*}
We have:
\begin{equation*}
    \divalpha{\Obs{v}(\cA^T(\cD))}{\Obs{v}(\cA^T(\cD'))} \leq \sum_{t=0}^{T-1}\sum_{w\in\cN_t(v)} \divalpha{\big(W_{0:t}(x+\eta)\big)_w}{\big(W_{0:t}(x'+\eta)\big)_w}\,.
\end{equation*}
We have $(W_{0:t}(x'+\eta))_w-(W_{0:t}(x+\eta))_w\sim \cN((W_{0:t}(x'+\eta))_w-(W_{0:t}(x+\eta))_w,\sigma^2\NRM{(W_{0:t})_w}^2)$ with a sensitivity verifying $|(W_{0:t}(x'+\eta))_w-(W_{0:t}(x+\eta))_w|^2\leq \Delta^2(W_{0:t})_{u,w}^2$ under Assumption~\ref{hyp:sensitivity} and $\cD\sim_u \cD'$.
Thus, using Lemma~\ref{lem:gaussian_mechanism},  we have:
\begin{equation*}
    \divalpha{\big(W_{0:t}(x+\eta)\big)_w}{\big(W_{0:t}(x'+\eta)\big)_w}\leq \frac{\alpha\Delta^2}{2\sigma^2}\frac{(W_{0:t})_{u,w}^2}{\NRM{(W_{0:t})_w}^2}\,,
\end{equation*}
leading to the desired $f(u,v)$. The mean privacy loss is then obtained by summing the above inequality for $u\ne v$ and $t<T$:
\begin{align*}
    \meane_v = \frac{1}{n}\sum_{u\ne v} f(u,v) &\leq \frac{1}{n}\sum_{u\in\cV} \frac{\alpha \Delta^2}{2\sigma^2}\sum_{w\in \cV}\sum_{t\in\cP^T_\edgevw}\frac{({W_{0:t}})_{u,w}^2}{\NRM{({W_{0:t}})_w}^2}\\
    & = \frac{1}{n} \frac{\alpha \Delta^2}{2\sigma^2}\sum_{t\in\cP^T_\edgevw}\sum_{w\in \cV}\sum_{u\in\cV}\frac{({W_{0:t}})_{u,w}^2}{\NRM{({W_{0:t}})_w}^2}\\
    & = \frac{1}{n} \frac{\alpha \Delta^2}{2\sigma^2}\sum_{t\in\cP^T_\edgevw}\sum_{w\in \cV} 1\\
    &=\frac{\alpha \Delta^2 T_v}{2n\sigma^2}\,,
\end{align*}
where $T_v=\sum_{w\in\cV}|\cP_\edgevw^T|$ is exactly the number of communications node $v$ is involved in, up to time $T$.
\end{proof}

\section{Synchronous \emph{Muffliato}}\label{sec:muffl_app}

\subsection{Utility Analysis (Theorem \ref{thm:utility-sync})}

While the main text presents the result for the 1-dimensional case for simplicity, we prove here the convergence for the general case where each node holds a vector of dimension $D$.
Theorem~\ref{thm:utility-sync} is then a direct consequence of Equation~\eqref{eq:dp-gossip-utility} for $D=1$.
\begin{theorem}[Utility analysis]\label{thm:utility-sync-app} For any~$T\geq0$, the iterates $(x^T)_{T\geq 0}$ of  \emph{Muffliato} (Algorithm~\ref{algo:dp-gossip}) verify, for $\lambda_W$ defined in Definition~\ref{def:Laplacian} and $\bar x=\frac{1}{n}\sum_{v\in\cV}x_v$:
\begin{equation}\label{eq:dp-gossip-utility}
    \frac{1}{2n}\sum_{v\in\cV}\esp{\NRM{ x^T_v-\Bar{x}}^2}\leq \left( \frac{1}{n}\sum_{v\in\cV}\NRM{x_v-\Bar{x}}^2 + D\sigma^2\right)e^{-T\sqrt{\lambda_W}} + \frac{D\sigma^2}{n}\,.
\end{equation}
\end{theorem}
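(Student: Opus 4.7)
The plan is to exploit the fact that the accelerated Muffliato iterates admit the closed form $x^t = P_t(W)(x+\eta)$, where $(P_t)_{t\geq0}$ are the rescaled Chebyshev polynomials from Definition~\ref{def:cheby} with the scale parameter $\gamma$ chosen in Algorithm~\ref{algo:dp-gossip}. Since $W\mathbf{1}=\mathbf{1}$ and $P_t(1)=1$ (by induction on the recursion), we get $P_t(W)\mathbf{1}=\mathbf{1}$, so the empirical mean is preserved along the iterates: $\frac{1}{n}\sum_v x^t_v = \bar x+\bar\eta$, where $\bar\eta=\frac{1}{n}\sum_v\eta_v$. This motivates splitting the error into a ``consensus'' part measured relative to $\bar x+\bar\eta$, and a ``mean noise'' part $\bar\eta$.

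\textbf{Splitting and mean noise term.} First, I would apply the inequality $\|a+b\|^2\leq 2\|a\|^2+2\|b\|^2$ to obtain
\begin{equation*}
    \frac{1}{2n}\sum_{v\in\cV}\esp{\NRM{x^T_v-\bar x}^2}\leq \frac{1}{n}\sum_{v\in\cV}\esp{\NRM{x^T_v-(\bar x+\bar\eta)}^2}+\esp{\NRM{\bar\eta}^2}\,.
\end{equation*}
The mean noise term is handled directly: since $\bar\eta$ is a centered Gaussian of variance $\sigma^2/n$ in each of the $D$ coordinates, $\esp{\|\bar\eta\|^2}=D\sigma^2/n$, which gives the additive floor in the statement.

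\textbf{Consensus term via Chebyshev contraction.} For the first term, I would set $z=x+\eta-(\bar x+\bar\eta)\mathbf{1}\in \mathbf{1}^\perp$ and use $x^T-(\bar x+\bar\eta)\mathbf{1}=P_T(W)z$. The standard contraction bound for the rescaled Chebyshev polynomials, with the particular $\gamma$ prescribed in Algorithm~\ref{algo:dp-gossip}, yields $\|P_T(W)|_{\mathbf{1}^\perp}\|_{\rm op}\leq e^{-T\sqrt{\lambda_W}}$ (this is the usual Chebyshev-accelerated gossip rate, see e.g.\ \citet{scaman17optimal,berthier2020jacobi}). Consequently,
\begin{equation*}
    \sum_{v\in\cV}\esp{\NRM{x^T_v-(\bar x+\bar\eta)}^2}=\esp{\NRM{P_T(W)z}^2}\leq e^{-T\sqrt{\lambda_W}}\esp{\NRM{z}^2}\,.
\end{equation*}
A direct bias-variance computation then gives $\esp{\|z\|^2}=\sum_v\|x_v-\bar x\|^2 + \sum_v\esp{\|\eta_v-\bar\eta\|^2}=\sum_v\|x_v-\bar x\|^2+(n-1)D\sigma^2$, using independence of the $\eta_v$ to cancel the cross terms. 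Dividing by $n$ and plugging in yields the first term of the bound.

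\textbf{Main obstacle.} The only delicate point is to justify the Chebyshev contraction $\|P_T(W)|_{\mathbf{1}^\perp}\|_{\rm op}\leq e^{-T\sqrt{\lambda_W}}$ with the exact constants appearing in the statement: this requires the choice of $\gamma=2\frac{1-\sqrt{\lambda_W(1-\lambda_W/4)}}{(1-\lambda_W/2)^2}$ in the algorithm, and boils down to computing the maximum of $|P_T|$ on the interval $[-1+\lambda_W, 1-\lambda_W]$ via the classical change of variable that linearizes the Chebyshev recursion. This is a purely one-dimensional polynomial computation that can be deferred to a short lemma; everything else in the proof is the structural decomposition above.
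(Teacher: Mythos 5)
Your proof follows the same route as the paper's: the closed form $x^T=P_T(W)(x+\eta)$, preservation of the mean by $P_T(W)$, a split into a consensus term living in $\one^\perp$ plus the mean-noise floor $\esp{\NRM{\bar\eta}^2}=D\sigma^2/n$, a Chebyshev contraction on $\one^\perp$, and the computation $\esp{\NRM{z}^2}=\sum_v\NRM{x_v-\bar x}^2+(n-1)D\sigma^2$. All of that structure is correct and matches the paper.

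There is, however, one step that does not close as written. The contraction you defer to a ``short lemma'', namely $\NRM{P_T(W)y}\leq e^{-T\sqrt{\lambda_W}}\NRM{y}$ on $\one^\perp$ with no prefactor, is false in general: already at $T=1$ one has $P_1(W)=W$, whose operator norm on $\one^\perp$ is $1-\lambda_W$, and $1-\lambda_W>e^{-\sqrt{\lambda_W}}$ whenever $\lambda_W\leq 4/9$. The correct accelerated-gossip bound (the one the paper invokes from \citet{berthier2020jacobi}) carries a constant: $\NRM{P_t(W)y}^2\leq 2(1-\sqrt{\lambda_W})^{2t}\NRM{y}^2$ for $y\in\one^\perp$. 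The paper absorbs this factor $2$ using the $\tfrac{1}{2n}$ normalization on the left-hand side; you have already spent that factor on the lossy splitting $\NRM{a+b}^2\leq 2\NRM{a}^2+2\NRM{b}^2$, so your chain only yields $2e^{-2T\sqrt{\lambda_W}}\big(\tfrac1n\sum_v\NRM{x_v-\bar x}^2+D\sigma^2\big)+\tfrac{D\sigma^2}{n}$, which does not imply the stated inequality for $T<\ln 2/\sqrt{\lambda_W}$. The fix is to make your splitting exact rather than lossy: the cross term between $P_T(W)z$ and $\bar\eta\one$ vanishes in expectation, since $\bar\eta$ is centered and independent of $\eta-\bar\eta\one$ while $x-\bar x\one$ is deterministic, so the decomposition is an equality in expectation. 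This is precisely the paper's ``bias--variance decomposition'', and it restores the factor $\tfrac12$ needed to absorb the constant $2$ in the Chebyshev contraction and obtain the theorem for all $T\geq0$.
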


\begin{proof}
For $t\geq 0$ and $y\in\R^{\cV\times D}$, using results from~\citet{berthier2020jacobi}, we have, for a vector $y\in\R^{\cV\times D}$ such that $\sum_{v\in\cV}y_v=0$,  $\NRM{P_t(W)y}^2\leq 2(1-\sqrt{\lambda_W})^2\NRM{y}^2$. In particular:
\begin{equation*}
    \NRM{P_t(W)(y-\Bar{y}\one^\top)}\leq 2(1-\sqrt{\lambda_W})^t\NRM{y-\Bar{y}\one^\top}^2\,,
\end{equation*}
where $\one$ with the vector with all entries equal to $1$.
Since
\begin{equation*}
     x^{t}=P_t(W)\big(x+\gau{\sigma^2I_{\cV\times D}}\big)\,,\quad t\geq0\,,
\end{equation*}
we obtain that, for $\eta\sim\gau{\sigma^2 I_{\cV\times D}}$ and $\Bar{\eta}=\frac{1}{n}\sum_{v\in\cV}\eta_v \one^\top\in\R^{\cV\times D}$, using bias-variance decomposition twice:
\begin{align*}
    \frac{1}{2}\esp{\NRM{x^t-\Bar x}^2}& = \frac{1}{2}\esp{\NRM{P_t(W)(x^{(0)}-\Bar{x})}^2}\\
    &=\frac{1}{2}\NRM{P_t(W)(x+\eta-\Bar{x}-\Bar{\eta})}^2 + \frac{1}{2}\esp{\NRM{P_t(W)\Bar\eta}^2} \\
    & \leq (1-\sqrt{\lambda_W})^t \esp{\NRM{x+\eta-\Bar{x}-\Bar{\eta}}^2} + \frac{D\sigma^2}{2n}\\
    & \leq (1-\sqrt{\lambda_W})^t \left(\esp{\NRM{x-\Bar{x}}^2}+ nD\sigma^2\right) + \frac{D\sigma^2}{2n}\,.\qedhere
\end{align*}
\end{proof}
The precision $\frac{3D\sigma^2}{n}$ is thus reached for
\begin{equation*}
    T^{\rm stop}\big(W,(x_v)_{v\in\cV},D\sigma^2\big)\leq \sqrt{\lambda_W}^{-1}\ln\left(\frac{n}{D\sigma^2}\max\left(D\sigma^2,\frac{1}{n}\sum_{v\in\cV}\NRM{x_v-\Bar{x}}^2\right)\right)\,.
\end{equation*}

\subsection{Privacy Analysis (Corollary \ref{cor:sync_simplified})}

\begin{proof}[Proof of Corollary \ref{cor:sync_simplified}]
For a fixed gossip matrix $W$, we have $W_{0:t}=W^t$, so that Theorem~\ref{thm:privacy_general} reads:
\begin{equation*}
    f(u,v)= \frac{\alpha\Delta^2}{2\sigma^2}\sum_{t<T}\sum_{w:\edgevw\in\cE}\frac{(W^t)_{u,w}^2}{\NRM{(W^t)_w}^2}\,.
\end{equation*}
Since $W$ is bi-stochastic, $\frac{(W^t)_{u,w}^2}{\NRM{(W^t)_w}^2}\leq n\times(W^t)_{u,w}^2=n\proba{X^t=u|X^0=w}^2$, using Cauchy-Schwarz inequality. 

Summing over the neighbors $w\sim v$, we obtain, for $t<T$:
\begin{align*}
    \sum_{w\sim v} \frac{\alpha}{2\sigma^2}\sum_{w:\edgevw\in\cE}\frac{(W^t)_{u,w}^2}{\NRM{(W^t)_w}^2} &\leq \frac{\alpha n}{2\sigma^2} \sum_{w\sim v} \proba{X^t=u|X^0=w}^2\\
    &\leq \frac{\alpha n}{2\sigma^2} \left(\sum_{w\sim v} \proba{X^t=u|X^0=w}\right)^2\\
    &\leq \frac{\alpha n}{2\sigma^2} \frac{1}{\min_{w\sim v} W_{v,w}^2} \left(\sum_{w\sim v} W_{v,w}\proba{X^t=u|X^0=w}\right)^2\\
    &\leq \frac{\alpha n}{2\sigma^2} \frac{1}{\min_{w\sim v} W_{v,w}^2} \proba{X^{t+1}=u|X^0=v}^2\,,
\end{align*}
where the last line is obtained by observing that: 
\begin{equation*}
    \sum_{w\sim v}W_{v,w}\proba{X^t=u|X^0=w}=\proba{X^{t+1}=u|X^0=v}\,,
\end{equation*} 
by conditioning on the first step of the random walk. This leads to Corollary~\ref{cor:sync_simplified}. 
\end{proof}

\subsection{First Line of Table~\ref{table:sync}}

The results in the first line of Table~\ref{table:sync} are obtained by observing that $v$ is involved in $d_v T$ communications up to time $T$, leading to $\bar\eps_v = \frac{\alpha \Delta^2 d_v T}{2\sigma^2 n^2}$. Using Theorem~\ref{thm:utility-sync}, we have a utility of $3\sigma^2/n$ for $T^{\rm stop}=\lambda_W^{-1/2}\ln\bigg(\frac{n}{\sigma^2}\max\Big(\sigma^2,\frac{1}{n}\sum_{v\in\cV}\NRM{x^{0}_v-\Bar{x}}^2\Big)\bigg)$ steps.
Thus, imposing $\bar\eps_v\leq\eps$ for a fixed $\eps>0$ gives us $\sigma^2=\frac{\alpha\Delta^2dT^{\rm stop}}{2\sigma^2n^2}$, leading to a utility of 
\begin{equation*}
    \Tilde\cO\Big(\frac{\alpha\Delta^2 d}{2\sigma^2\sqrt{\lambda_W}}\Big)\,.
\end{equation*}
We then instantiate this formula on graphs with known spectral gaps, as described for instance in \citet{Mohar1991laplacian}.

\section{Randomized \emph{Muffliato}}
\label{app:randomized}
\subsection{Utility Analysis (Theorem~\ref{thm:utility-randomized})}

As in the synchronous case, we prove a more general convergence result that holds for $D$-dimensional inputs. Then, Theorem~\ref{thm:utility-randomized} follows directly from \eqref{eq:dp-gossip-utility2}. 
\begin{theorem}[Utility analysis]\label{thm:utility-randomized-app} For any~$T\geq0$, the iterates $(x^T)_{T\geq 0}$ of randomized \emph{Muffliato} (Algorithm~\ref{algo:dp-randomized}) verify:
\begin{equation}\label{eq:dp-gossip-utility2}
    \frac{1}{2n}\sum_{v\in\cV}\esp{\NRM{ x^T_v-\Bar{x}}^2}\leq \left( \frac{1}{n}\sum_{v\in\cV}\NRM{x^{0}_v-\Bar{x}}^2 + D\sigma^2\right)e^{-T\lambda_2(p)} + \frac{D\sigma^2}{n}\,.
\end{equation}
\end{theorem}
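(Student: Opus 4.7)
}

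The plan is to carry out a bias--variance decomposition around the preserved mean and exploit a special algebraic property of the single--edge gossip matrices $W_t = I_n - (e_{v_t}-e_{w_t})(e_{v_t}-e_{w_t})^\top/2$, namely that they are orthogonal projections. Define $\bar{x}^t = \frac{1}{n}\sum_{v\in\cV} x_v^t$ and $y^t = x^t - \bar{x}^t\mathbf{1}$, the deviation from the current mean. Since every $W_t$ is stochastic ($W_t\mathbf{1}=\mathbf{1}$) and symmetric, the mean is preserved by the update, so $\bar{x}^t=\bar{x}^0 = \bar{x} + \bar{\eta}$ with $\bar{\eta}=\frac{1}{n}\sum_v \eta_v$, and the recursion restricted to the mean--free component reads $y^{t+1}=W_t y^t$, with $y^t \perp \mathbf{1}$.

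The first key step will be to compute $\E[W_t^\top W_t]=\E[W_t^2]$. A direct check shows $(e_{v_t}-e_{w_t})(e_{v_t}-e_{w_t})^\top$ has squared norm $2$, whence $\tfrac14 (e_{v_t}-e_{w_t})(e_{v_t}-e_{w_t})^\top(e_{v_t}-e_{w_t})(e_{v_t}-e_{w_t})^\top = \tfrac12(e_{v_t}-e_{w_t})(e_{v_t}-e_{w_t})^\top$, i.e., $W_t^2=W_t$. Hence, conditionally on $y^t$,
\begin{equation*}
\E\bigl[\NRM{y^{t+1}}^2 \mid y^t\bigr] = (y^t)^\top \E[W_t^\top W_t]\, y^t = (y^t)^\top \bar W\, y^t, \qquad \bar W = \E[W_t].
\end{equation*}
Since $y^t$ lies in the orthogonal complement of $\mathbf{1}$, the Rayleigh quotient yields $(y^t)^\top \bar W y^t \leq (1-\lambda_2(p))\NRM{y^t}^2$, where $\lambda_2(p)$ is the spectral gap of the weighted Laplacian associated with the edge probabilities $p$. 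Iterating and using $1-\lambda_2(p)\leq e^{-\lambda_2(p)}$ gives $\E\NRM{y^T}^2 \leq e^{-T\lambda_2(p)}\,\E\NRM{y^0}^2$.

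For the initial variance, I will use independence of the $\eta_v$'s and the orthogonality $\sum_v(x_v-\bar{x})=0$ to get $\E\NRM{y^0}^2 = \sum_v\NRM{x_v^{0}-\bar x}^2 + (n-1)D\sigma^2 \leq \sum_v\NRM{x_v^{0}-\bar x}^2 + nD\sigma^2$. Combining with the Pythagorean identity
\begin{equation*}
\sum_{v\in\cV}\NRM{x_v^T - \bar x}^2 = \NRM{y^T}^2 + n\NRM{\bar\eta}^2
\end{equation*}
(the cross term vanishes because $\sum_v y_v^T=0$) and $\E\NRM{\bar\eta}^2 = D\sigma^2/n$ gives, after dividing by $2n$ and bounding the constants loosely,
\begin{equation*}
\frac{1}{2n}\sum_{v\in\cV}\E\NRM{x_v^T-\bar x}^2 \leq e^{-T\lambda_2(p)}\!\left(\frac{1}{n}\sum_{v\in\cV}\NRM{x_v^{0}-\bar x}^2 + D\sigma^2\right) + \frac{D\sigma^2}{n},
\end{equation*}
which is the claim.

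The main technical point is the identity $W_t^2=W_t$: without it, the second moment $\E[W_t^\top W_t]$ would differ from $\bar W$ by a variance term that does not contract, and one would lose the clean single--step contraction factor $1-\lambda_2(p)$. Everything else is routine, and the proof essentially mirrors the synchronous case (Theorem~\ref{thm:utility-sync-app}), with the Chebyshev contraction $(1-\sqrt{\lambda_W})^t$ replaced by the slower $e^{-t\lambda_2(p)}$ coming from the absence of acceleration.
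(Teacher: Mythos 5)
Your proof is correct and follows essentially the same route as the paper's: a one-step contraction of the mean-free component by the factor $1-\lambda(p)$, followed by the same two bias--variance decompositions used in the synchronous case. The only difference is that the paper imports the contraction inequality $\esp{\NRM{W_t(y-\bar y\one^\top)}^2}\leq (1-\lambda(p))\NRM{y-\bar y\one^\top}^2$ directly from \citet{boyd2006gossip}, whereas you derive it from scratch via the projection identity $W_t^2=W_t$ and the Rayleigh quotient of $\E[W_t]$ on $\one^\perp$ --- which is precisely the argument underlying the cited result.
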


\begin{proof}[Proof of Theorem~\ref{thm:utility-randomized}]
For $t\geq 0$ and $y\in\R^{\cV\times D}$, using results from~\citet{boyd2006gossip}, we have:
\begin{equation*}
    \esp{\NRM{W(t)(y-\Bar{y}\one^\top)}^2}\leq (1-\lambda(p))^t\NRM{y-\Bar{y}\one^\top}^2\,,
\end{equation*}
where $\one$ with the vector with all entries equal to $1$ and $\Bar{y}=\frac{1}{n}\sum_{v\in\cV}y_v$. The rest of the proof follows as in the proof of Theorem~\ref{thm:utility-sync} with two bias-variance decompositions.
\end{proof}
The precision $\frac{2D\sigma^2}{n}$ is thus reached for 
\begin{equation}
        T^{\rm stop}\big(W,(x_v)_{v\in\cV},\sigma^2\big)\leq \lambda(p)^{-1}\ln\left(\frac{n}{D\sigma^2}\max\left(D\sigma^2,\frac{1}{n}\sum_{v\in\cV}\NRM{x_v-\Bar{x}}^2\right)\right)\,.
\end{equation}

\subsection{Privacy Analysis}

In terms of privacy, randomized \emph{Muffliato} satisfies the following guarantees, obtained by applying Theorem~\ref{thm:privacy_general}.

\begin{cor} After $T$ iterations of randomized \emph{Muffliato}, and conditionally on the edges sampled, node $u\in\cV$ is $(\alpha,\eps^T_{\utov}(\alpha))$-PNDP with respect to $v$, with:
\begin{equation*}
    \eps^T_{\utov}(\alpha)\leq \frac{\alpha\Delta^2}{2\sigma^2}\sum_{w\sim v}\sum_{t\in\cP_\edgevw^T}\frac{(W_{0:t})_{uw}^2}{\NRM{(W_{0:t})_w}^2}\,.
\end{equation*}
Taking the mean over $u\ne v$ yields:
\begin{equation*}
    \bar\eps_v=\frac{1}{n}\sum_{u\in\cV\setminus\set{v}}\eps^T_{\utov}(\alpha)\leq\frac{\alpha\Delta^2}{2n\sigma^2} T_v\,,
\end{equation*}
 where $T_v=\sum_{t<T} \sum_{w\sim v} \one_\set{\edgevw=\edgevtwt}$ the number of communications node $v$ is involved in up to time $T$.
\end{cor}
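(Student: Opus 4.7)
The plan is to obtain the pairwise bound as a direct specialization of Theorem~\ref{thm:privacy_general} to the randomized update rule, and then to derive the mean bound by an explicit summation exploiting the row-normalization inherent in $\NRM{(W_{0:t})_w}^2$. First I would observe that randomized \emph{Muffliato} fits exactly into the generic framework of \eqref{eq:gossip_general}: the gossip matrix at step $t$ is $W_t = W_{\edgevtwt} = I_n - (e_{v_t}-e_{w_t})(e_{v_t}-e_{w_t})^\top/2$, and conditionally on the sampled edge sequence, this is a deterministic product. Hence Theorem~\ref{thm:privacy_general} applies verbatim, giving $(\alpha,f)$-PNDP with $f(u,v)$ of the stated form involving $\cP^T_\edgevw$.

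Next I would specialize the outer sum over $w\in\cV$ in Theorem~\ref{thm:privacy_general} to $w\sim v$. This is because in randomized gossip, an edge can only be activated at step $t$ if it belongs to $\cE$, so $\cP^T_\edgevw=\emptyset$ whenever $\edgevw\notin\cE$; in particular the self-loop term $w=v$ contributes nothing either, since only cross-edges transmit content to $v$'s view. This immediately gives the first displayed bound:
\begin{equation*}
\eps^T_\utov(\alpha)\leq \frac{\alpha\Delta^2}{2\sigma^2}\sum_{w\sim v}\sum_{t\in\cP_\edgevw^T}\frac{(W_{0:t})_{uw}^2}{\NRM{(W_{0:t})_w}^2}\,.
\end{equation*}

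For the averaged quantity $\bar\eps_v = \frac{1}{n}\sum_{u\ne v}\eps^T_\utov(\alpha)$, I would swap the order of summation so that the sum over $u$ is innermost, and then use the key identity $\sum_{u\in\cV}(W_{0:t})_{u,w}^2 = \NRM{(W_{0:t})_w}^2$ (by definition of the Euclidean norm of the $w$-th column/row of the symmetric matrix $W_{0:t}$). Bounding the sum over $u\ne v$ by the sum over $u\in\cV$, the ratio $\sum_u (W_{0:t})_{u,w}^2/\NRM{(W_{0:t})_w}^2$ telescopes to $1$. What remains is
\begin{equation*}
\sum_{u\ne v}\eps^T_\utov(\alpha)\leq \frac{\alpha\Delta^2}{2\sigma^2}\sum_{w\sim v}|\cP^T_\edgevw|=\frac{\alpha\Delta^2}{2\sigma^2}T_v\,,
\end{equation*}
and dividing by $n$ yields the claimed bound on $\bar\eps_v$.

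There is essentially no hard step here: the only care needed is to confirm that dropping the $u=v$ term in the outer sum remains valid (it does, since all summands are nonnegative), and to recognize that $T_v$ as defined really equals $\sum_{w\sim v}|\cP^T_\edgevw|$, which is immediate from its definition. The result is therefore a short, mechanical corollary of Theorem~\ref{thm:privacy_general} combined with the stochasticity (column-normalization up to the $\NRM{\cdot}^2$ normalization) of the gossip product.
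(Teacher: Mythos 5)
Your proposal is correct and matches the paper's own treatment: the corollary is stated there as a direct application of Theorem~\ref{thm:privacy_general} conditionally on the sampled edge sequence (with $\cP^T_\edgevw=\emptyset$ off the edge set), and the mean bound follows by the same sum-swap using $\sum_{u}(W_{0:t})_{u,w}^2=\NRM{(W_{0:t})_w}^2$ that the paper already carries out in the proof of Theorem~\ref{thm:privacy_general}. No substantive differences.
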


Note that $T_v$ is a Binomial random variable of parameters $(T,\pi_v)$ where $\pi_v=\sum_{w\sim v}p_\edgevw$.

We now explain how we obtain the second line of Table~\ref{table:sync}.

Note that a choice $p_\edgevw=2W_{v,w}/n$ for some given gossip matrix $W$ yields probability activations. For the sake of comparison with \emph{Muffliato} with a fixed matrix, we place ourselves in this case. This leads to $\pi_v=2/n$, so that
\begin{equation*}
    \esp{\bar\eps_v}=\frac{\alpha\Delta^2T}{2n^2\sigma^2}\,,
\end{equation*}
and for any $C>0$,
\begin{equation*}
    \proba{T_v-\E T_v \geq C}\leq \exp\big(-\frac{C^2}{T}\big)\,,
\end{equation*}
using Hoeffding's inequality.
We take as time-horizon $T=T^{\rm stop}\geq 1/\lambda(p)$ defined in Theorem~\ref{thm:utility-randomized}, leading to
\begin{equation*}
    \proba{T_v-\E T_v \geq C}\leq \exp\big(-\frac{C^2\lambda_W}{n}\big)\,,\quad \esp{\bar\eps_v}=\Tilde\cO(\frac{\alpha\Delta^2}{2n\sigma^2\lambda_W})\,,
\end{equation*}
since $\lambda(p)=\frac{2\lambda_W}{n}$ in our case.

The same methodology as in the synchronous case (imposing $\bar\eps_v\leq \eps$ for the time horizon $T^{\rm stop}$, deriving $\sigma^2$ from this and thus the resulting utility) leads to the second line of Table~\ref{table:sync}.

\section{\emph{Muffliato} on Private Random Graphs (Theorem~\ref{thm:privacy-random-graph})}
\label{sec:muff_private_random_graph}

We fix all nodes, and in particular $u$ the attacked node, and $v$ the observer. We assume that $G$ is drawn randomly. Edges $\edgevw$ are drawn independently from one another. The result we prove below is just slightly more general than Theorem~\ref{thm:privacy-random-graph} that is recovered for $\proba{\edgeuv\in\cE}=q$ (Erdös-Rényi random graph).

We make the following assumption: node $v$ is only aware of its direct neighbors in the topology of graph $G$, and conditionally on $\set{v}\cup\cN(v)$, the law of the graph is invariant under any permutation over the set $\cV\setminus (\set{v}\cup\cN(v))$.

\begin{theorem}[\emph{Muffliato} with a random graph] Let $\alpha>1$, $T\geq 0$, $\sigma^2\geq \frac{\Delta^2 \alpha(\alpha-1)}{2}$ and let the above assumptions on $G$ be satisfied. 
    After running Algorithm~\ref{algo:dp-gossip} with these parameters, node $u$ is ($\alpha,\eps_\utov^T(\alpha))$-PNDP with respect to $v$, with:
\begin{equation*}
    \eps_\utov^T(\alpha)\leq \left\{ \begin{aligned} &\quad\quad\frac{\alpha\Delta^2}{2\sigma^2}\quad\quad \text{ with probability }\proba{\edgeuv\in\cE}\,,\\
    &\frac{\alpha\Delta^2}{\sigma^2} \frac{Td_v}{n-d_v} \quad \text{ with probability }1 -\proba{\edgeuv\in\cE}\,.\end{aligned}\right.
\end{equation*}
\end{theorem}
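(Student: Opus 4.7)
The plan is to condition on the set $\set{v}\cup\cN(v)$ of $v$'s direct neighbors and split the analysis into two cases according to whether the random edge $\edgeuv$ is present (probability $\proba{\edgeuv\in\cE}$) or absent. In both cases the pairwise bound is obtained by invoking Theorem~\ref{thm:privacy_general}, which already provides a tight Rényi bound once the graph topology is fixed; the role of the randomness of the graph is then to average this deterministic bound in a favorable way whenever $u$ is not a direct neighbor of $v$.

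In the first case $u \in \cN(v)$, no privacy amplification is possible since $v$ already receives at $t=1$ a message containing $x_u+\eta_u$, and we fall back on the local DP guarantee of the Gaussian mechanism (Lemma~\ref{lem:gaussian_mechanism}), which yields the baseline bound $\alpha\Delta^2/(2\sigma^2)$ claimed in the first line of the statement.

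In the second case $u \notin \set{v}\cup\cN(v)$, the amplification is extracted from the exchangeability hypothesis. For any permutation $\pi$ of $\cV\setminus(\set{v}\cup\cN(v))$, the conditional distribution of $G$ is invariant under $\pi$; letting $\pi$ act simultaneously on $G$ and on $\cD$ moves the modified node from $u$ to $\pi(u)$ without changing the marginal distribution of the view. Averaging over $\pi$ uniform in the symmetric group of $\cV\setminus(\set{v}\cup\cN(v))$ therefore exhibits the marginal view under $\cD$ (over the randomness of the graph) as a uniform mixture, indexed by $u' \in \cV\setminus(\set{v}\cup\cN(v))$, of the view of $v$ when the modified node is placed at position $u'$. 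Applying the quasi-convexity of Rényi divergence (Lemma~\ref{lem:conv}) together with the per-position bound of Theorem~\ref{thm:privacy_general} gives
\[
\eps_{\utov}^T(\alpha) \leq \frac{1+c}{n-d_v-1}\sum_{u'\notin\set{v}\cup\cN(v)} \frac{\alpha\Delta^2}{2\sigma^2}\sum_{t<T}\sum_{w\sim v}\frac{(W^t)_{u',w}^2}{\NRM{(W^t)_w}^2}.
\]
Swapping the outer sums and using $\sum_{u'\in\cV}(W^t)_{u',w}^2 = \NRM{(W^t)_w}^2$ collapses the $u'$-sum to at most $1$ per pair $(t,w)$, leaving at most $Td_v$ summands. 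The assumption $\sigma^2 \geq \Delta^2\alpha(\alpha-1)/2$ guarantees that the quasi-convexity constant satisfies $c \leq 1$, so the factor $1+c \leq 2$ converts $\alpha\Delta^2/(2\sigma^2)$ into $\alpha\Delta^2/\sigma^2$, matching the statement (the $-1$ in $n-d_v-1$ is absorbed into $n-d_v$).

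The main technical obstacle is making the mixture representation rigorous: one must carefully verify that, after marginalizing over the random graph and using its conditional permutation-invariance, the view distribution is genuinely a uniform mixture over the positions of $u$, so that Lemma~\ref{lem:conv} applies in the correct direction. A secondary subtlety is applying quasi-convexity finely enough --- if necessary, to each received message separately before recombining by Rényi composition --- so that the hypothesis $D_\alpha \leq c/(\alpha-1)$ with $c\leq 1$ is met under the single condition $\sigma^2\geq \Delta^2\alpha(\alpha-1)/2$. Once these points are handled, the rest is straightforward bookkeeping on top of Theorem~\ref{thm:privacy_general}.
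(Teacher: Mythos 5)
Your proposal is correct and follows essentially the same route as the paper's proof: the neighbor case is handled by post-processing of $x_u+\eta_u$, and the non-neighbor case combines the summed form of Theorem~\ref{thm:privacy_general} (your collapse $\sum_{u'}(W^t)_{u',w}^2=\NRM{(W^t)_w}^2$ is exactly the computation behind the paper's mean privacy loss $\meane_v=\alpha\Delta^2 T_v/(2n\sigma^2)$) with conditional permutation invariance and the quasi-convexity Lemma~\ref{lem:conv}, whose hypothesis $c\le 1$ holds for the whole view by data processing under $\sigma^2\ge\Delta^2\alpha(\alpha-1)/2$. Your ``uniform mixture over the position of the modified node'' framing is just an equivalent phrasing of the paper's observation that $\esp{\eps^T_{\wtov}(\alpha)\mid \set{v}\cup\cN(v)}$ is constant over $w\notin\set{v}\cup\cN(v)$, and the $n-d_v-1$ versus $n-d_v$ discrepancy is immaterial.
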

\begin{proof}
If $\edgeuv\in\cE$, we cannot do better than $\eps_\utov^T\leq \frac{\alpha\Delta^2}{2\sigma^2}$: $v$ only sees $x^{(0)}_u+\gau{\sigma^2}$ and then next messages can be seen as post-processing of this initial message and thus do not induce further loss. This happens with probability $\proba{\edgeuv\in\cE}$. 

Now, we reason conditionally on $\set{v}\cup\cN(v)$ and $u\notin\cN(v)$. Using the averaged formula~\eqref{eq:pairwise-ndp-general}, we have:
\begin{equation*}
        \frac{1}{n}\left(\sum_{w\in\cN(v)}\frac{\alpha\Delta^2}{2\sigma^2} +  \sum_{w\in\cV\setminus(\cN(v)\cup\set{v})}\eps^T_{\wtov}(\alpha)\right)\leq \frac{\alpha\Delta^2 d_v T}{2n\sigma^2} \,.
\end{equation*}
Here we adapted the proof of the formula: to obtain the right-handside, a value $\eps_\wtov^T$ bigger than $\frac{\alpha\Delta^2}{2\sigma^2}$ was taken, so that the formula above is also true. Then, using the fact that node $v$ only sees its neighbors, we can use Lemma~\ref{lem:conv} that allows us to take the mean conditionally on $v\cup\cN(v)$ (for $\sigma^2\geq \frac{\Delta^2\alpha(\alpha-1)}{2}$), leading to
\begin{equation*}
        \frac{1}{n}\left(\sum_{w\in\cN(v)}\frac{\alpha\Delta^2}{2\sigma^2} +  \sum_{w\in\cV\setminus(\cN(v)\cup\set{v})}\esp{\eps^T_{\wtov}(\alpha)\mid v\cup\cN(v)}\right)\leq \frac{\alpha\Delta^2 d_v T}{n\sigma^2} \,.
\end{equation*}
In fact, we write it with the expected value, but all nodes are equal since node $v$ only sees its neighbors.
Using the invariance under permutation of $\esp{\eps^T_{\wtov}(\alpha)v\cup\cN(v)}$ over $w\in\cV\setminus(\cN(v)\cup\set{v})$, we have that:
\begin{equation*}
    \frac{1}{n}\left(\sum_{w\in\cN(v)}\frac{\alpha\Delta^2}{2\sigma^2} +  (n-d_v)\eps_\utov^T\right)\leq \frac{\alpha\Delta^2 d_v T}{n\sigma^2} \,.
\end{equation*}
Rearranging this inequality gives the result.
\end{proof}

\section{Differentially Private Decentralized Optimization}
\label{app:optim}

We consider Algorithm~\ref{algo:private-decentralized-sgd} with general time-varying matrices $W_t$. 
We assume that for all $t\geq 0$, $\lambda_{W_t}\geq \lambda$ for some fixed $\lambda>0$. An instance of this setting is to sample different Erdös-Rényi random graphs at each communication round and adapt $W_t$ accordingly. Such graphs have a spectral gap that concentrates around 1, so that for $\lambda=1/2$, with high probability $\lambda_{W_t}\geq \lambda$ will be verified \cite{ERgaps2019}.

\subsection{Proof of Theorem~\ref{thm:utility-sgd} (Utility Analysis)}

As before, we have a more general convergence result.
\begin{theorem}[Utility analysis of Algorithm~\ref{algo:private-decentralized-sgd}]\label{thm:utility-sgd-app} Let $K\geq\left\lceil\sqrt{\lambda}^{-1}\ln\left(\max\big(n,\frac{\bar\zeta^2}{D\sigma^2+\bar\rho^2}\big)\right)\right\rceil$. For a suitable choice of step size parameters, the iterates $(\theta^t)_{t\geq0}$ generated by Algorithm~\ref{algo:private-decentralized-sgd} verify:
\begin{equation*}
    \esp{\phi(\Tilde \theta ^T)-\phi(x^\star)}\leq \Tilde{\cO}\left(\frac{\bar\rho^2+D\sigma^2}{n\mu T} + L\NRM{\theta^{0}-\theta^\star}^2 e^{-\frac{T}{2\kappa}}\right)\,,
\end{equation*}
where $\Tilde \theta ^T=\frac{\sum_{t<T}\omega^t\bar \theta^t}{\sum_{t<T}\omega^t}$ is a weighted average along the trajectory of the means $\bar \theta^t=\frac{1}{n}\sum_{v\in\cV}\theta_v^t$.
\end{theorem}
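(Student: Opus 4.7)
The plan is to adapt the general analysis of decentralized SGD from \cite{pmlr-v119-koloskova20a} to our setting, where the gossip averaging step is replaced by \emph{Muffliato} (noisy Chebyshev-accelerated gossip). The core idea is to decompose the analysis into (i) tracking the mean iterate $\bar\theta^t=\frac{1}{n}\sum_v \theta_v^t$, which effectively performs noisy SGD on $\phi$, and (ii) controlling the consensus error $C^t = \frac{1}{n}\sum_v \esp{\NRM{\theta_v^t-\bar\theta^t}^2}$, which is reduced by the inner gossip loop.

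First I would note that since each $W_s$ is doubly stochastic and Chebyshev-accelerated gossip is a mean-preserving polynomial in $W_s$, the mean satisfies
\begin{equation*}
    \bar\theta^{t+1} = \bar\theta^t - \frac{\nu}{n}\sum_{v\in\cV}\nabla\ell_v(\theta_v^t,x_v^t) + \bar\eta^t\,,
\end{equation*}
where $\bar\eta^t\sim\gau{\nu^2\sigma^2/n\cdot I_D}$ is the averaged injection noise. Using $\mu$-strong convexity of $\phi$ and $L$-smoothness of each $\phi_v$, along with the variance bound on stochastic gradients around $\theta^\star$, a standard descent argument gives a one-step inequality of the form
\begin{equation*}
    \esp{\NRM{\bar\theta^{t+1}-\theta^\star}^2}\leq (1-\mu\nu)\esp{\NRM{\bar\theta^{t}-\theta^\star}^2} + \frac{\nu^2(\bar\rho^2+D\sigma^2)}{n} + \nu L\cdot C^t\,,
\end{equation*}
where the consensus error appears as a bias term reflecting that local gradients are computed at $\theta_v^t$ rather than $\bar\theta^t$.

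The central step is to control $C^{t+1}$ using Theorem~\ref{thm:utility-sync-app} (utility of \emph{Muffliato}) applied to the post-gradient iterates $\hat\theta_v^t=\theta_v^t-\nu\nabla_\theta\ell_v(\theta_v^t,x_v^t)$ with $K$ accelerated gossip rounds and noise variance $\nu^2\sigma^2$. Expanding $\hat\theta_v^t$ around $\theta^\star$ and using $L$-smoothness together with the heterogeneity assumption, the initial disagreement before gossiping is bounded by $O(C^t+\nu^2\bar\zeta^2+\nu^2 L^2\esp{\NRM{\bar\theta^t-\theta^\star}^2}+\nu^2\bar\rho^2+D\nu^2\sigma^2)$. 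Choosing $K\geq\lceil\sqrt{\lambda}^{-1}\ln(\max(n,\bar\zeta^2/(D\sigma^2+\bar\rho^2)))\rceil$ makes the exponential contraction factor $e^{-K\sqrt{\lambda}}$ small enough to absorb the heterogeneity term $\bar\zeta^2$ into the irreducible noise floor $D\nu^2\sigma^2/n$, and to make $C^t$ contract strictly faster than $(1-\mu\nu)$.

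Finally, I would combine the two recursions, choose the step size $\nu=\min(1/L,\tilde\cO(1/(\mu T)))$, and take a weighted average $\tilde\theta^T$ of the means $\bar\theta^t$ along the trajectory with weights $\omega^t=(1-\mu\nu/2)^{-t}$, as is standard for strongly convex SGD. Unrolling the recursion and converting the bound on $\esp{\NRM{\bar\theta^T-\theta^\star}^2}$ into a bound on $\esp{\phi(\tilde\theta^T)-\phi(\theta^\star)}$ via Jensen's inequality and $L$-smoothness yields the claimed rate, with the $\tilde\cO\bigl(\frac{\bar\rho^2+D\sigma^2}{n\mu T}\bigr)$ term arising from the variance of the averaged stochastic gradients and injected noise, and the $L\NRM{\theta^{0}-\theta^\star}^2 e^{-T/(2\kappa)}$ term from the strongly convex contraction. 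The main obstacle is the coupling between $C^t$ and $\esp{\NRM{\bar\theta^t-\theta^\star}^2}$: one must pick $K$ large enough that the heterogeneity-driven consensus error is dominated by the noise floor, yet keep the total gradient complexity near-linear in $\kappa$; secondarily, one must track the $D$-dimensional Gaussian noise scaling carefully to ensure the $1/n$ averaging benefit (reflecting privacy amplification) appears correctly in the leading term.
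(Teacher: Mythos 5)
Your proposal is correct and follows essentially the same route as the paper: the paper's proof simply invokes Theorem~2 of \citet{pmlr-v119-koloskova20a} as a black box, substituting $\bar\rho^2+D\sigma^2$ for their gradient-noise variance, taking $\tau=1$, and setting the consensus parameter $p$ so that $1-p=2(1-\sqrt{\lambda})^K\leq 2\min\big(\tfrac{1}{n},\tfrac{D\sigma^2+\bar\rho^2}{\bar\zeta^2}\big)$ --- which is precisely the mean-iterate/consensus-error decomposition and the role of $K$ that you re-derive by hand. Your sketch is a faithful unpacking of that cited argument rather than a different proof.
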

The proof of Theorem~\ref{thm:utility-sgd-app} is a direct consequence of Theorem~2 in \citet{pmlr-v119-koloskova20a}, and especially the formula in their Appendix A.4. We apply their result with $\bar\rho^2+D\sigma^2$ instead of their $\bar\sigma^2$, $\tau=1$ (no varying topology), and $p$ such that $1-p=2(1-\sqrt{\lambda})^K\leq 2\min(\frac{1}{n},\frac{D\sigma^2+\bar\rho^2}{\bar\zeta^2})$.

\subsection{Proof of Theorem~\ref{thm:privacy-gd} (Privacy Analysis)}
\label{app:optim_privacy}

The function Lipschitzness can in fact be replaced by a more general assumption.

\begin{hyp}\label{hyp:dp_gd}
We assume that, for some $\Delta_\phi^2>0$, for all $v$ in $\cV$,  and for all adjacent datasets $\cD\sim_v\cD'$ on $v$, we have:
\begin{equation*}
    \sup_{\theta\in\R^D}\sup_{(x_v,x_v')\in\cD_v\times\cD_v'}\NRM{\nabla_x \ell(\theta,x_v)-\nabla_x\ell(\theta,x_v')}^2\leq \Delta_\phi^2\,.
\end{equation*}
\end{hyp}

\begin{theorem}[Privacy analysis of Algorithm~\ref{algo:private-decentralized-sgd}]\label{thm:privacy-gd2} Let $(W_t)_{0 \leq t < T}$ be a sequence of gossip matrices of spectral gap larger than $\lambda$.
Let $u$ and $v$ be two distinct nodes in $\cV$.
After $T$ iterations of Algorithm~\ref{algo:private-decentralized-sgd} with $K\geq 1$, node $u$ is $(\alpha,\eps_\utov^T(\alpha))$-PNDP with respect to $v$, with:
\begin{equation}
        \eps^T_{\utov}(\alpha)\leq \frac{\Delta_{\phi}^2\alpha}{2\sigma^2} \sum_{t=1}^T \sum_{j=t}^{T}\sum_{k=0}^{K-1}\sum_{ w:\set{v,w}\in \cE_j}\frac{(\Pi_{i=t}^{j-1}W_i^{K-1} W^k_j)_{u,w}^2}{\NRM{(\Pi_{i=t}^{j-1}W_i^{K-1} W^k_j)_w}^2}\,,
\end{equation}
Thus, for any $\eps>0$, Algorithm~\ref{algo:private-decentralized-sgd} with $T^{\rm stop}(\kappa,\sigma^2,n)$ steps and for $K$ as in Theorem~\ref{thm:utility-sgd}, there exists $f$ such that the algorithm is $(\alpha,f)$-pairwise network DP, with:
\begin{equation*}
    \forall v \in \cV\,,\quad \meane_v \leq \eps\quad\text{ and }\quad \esp{\phi(\tilde \theta^{\rm out})-\phi(\theta^\star)}\leq \Tilde{\cO}\left(\frac{\alpha D \Delta_{\phi}^2 \bar d }{n^2\mu\eps\sqrt{\lambda}} + \frac{\bar \rho^2}{nL}\right)\,,
\end{equation*}
where $\bar d=\sup_{v\in\cV}\bar d_v$ for $\bar d_v = \frac{1}{T}\sum_{t<T}|\set{w\in\cV : \edgevw\in\cE_t}|$ the mean degree of node $v$ throughout time.
\end{theorem}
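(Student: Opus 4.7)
The plan is to treat Algorithm~\ref{algo:private-decentralized-sgd} as an adaptive composition of $T$ calls to \emph{Muffliato}, each preceded by a local stochastic gradient step, and then to apply the per-iteration privacy statement of Theorem~\ref{thm:privacy_general}. First, I would formalise the view $\cO_v$ of node $v$ as the concatenation of its Muffliato transcripts over $t=0,\dots,T-1$. By adaptive composition of Rényi DP, it is enough to bound, for each $t$, the Rényi divergence between the $t$-th transcript observed by $v$ under neighbouring datasets $\cD\sim_u\cD'$, conditionally on the first $t-1$ transcripts.

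Second, I would bound the conditional sensitivity of the input to the $t$-th Muffliato call. The input is $\hat\theta^t=\theta^t-\nu\nabla_{\theta}\ell(\theta^t,\cdot)$; conditioning on the past freezes $\theta^t$, and by Assumption~\ref{hyp:dp_gd}, replacing $\cD_u$ by $\cD_u'$ can only change the $u$-th coordinate of $\hat\theta^t$, and by at most $\nu\Delta_{\phi}$ in $L_2$ norm. Fresh Gaussian noise of variance $\nu^2\sigma^2$ is then added and $K$ gossip steps with matrix $W_t$ are performed. Applying Theorem~\ref{thm:privacy_general} to this one-shot Muffliato instance with sensitivity $\nu\Delta_{\phi}$ and noise variance $\nu^2\sigma^2$ (the factors of $\nu^2$ cancel) gives a per-iteration pairwise Rényi loss of $\frac{\alpha\Delta_{\phi}^2}{2\sigma^2}\sum_{k=0}^{K-1}\sum_{w:\set{v,w}\in\cE_t}\frac{(W_t^k)_{u,w}^2}{\NRM{(W_t^k)_w}^2}$, and summing over $t$ via adaptive composition yields the first inequality of the theorem.

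For the combined privacy-utility bound, I would follow the same reduction used for synchronous \emph{Muffliato} in Section~\ref{subsec:muff} and the first line of Table~\ref{table:sync}. Averaging the per-iteration bound over $u\neq v$ and using bi-stochasticity of $W_t$ (exactly as in the derivation of $\meane_v$ in the proof of Theorem~\ref{thm:privacy_general}) gives $\meane_v\leq \frac{\alpha\Delta_{\phi}^2 T K \bar d_v}{2n\sigma^2}$, where $\bar d_v$ is the average degree of $v$ across the $T$ rounds. Imposing $\meane_v\leq\eps$ fixes $\sigma^2=\Theta\big(\alpha\Delta_{\phi}^2 T^{\rm stop}K\bar d/(n\eps)\big)$. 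Plugging this choice into the utility bound $\Tilde\cO\big((D\sigma^2+\bar\rho^2)/(n\mu T^{\rm stop})\big)$ of Theorem~\ref{thm:utility-sgd-app}, and using $T^{\rm stop}=\Tilde\cO(\kappa)$ (so that $1/(\mu T^{\rm stop})=1/L$ up to logs) together with $K=\Tilde\cO(1/\sqrt{\lambda})$, produces the announced $\Tilde\cO\big(\alpha D\Delta_{\phi}^2\bar d/(n^2\mu\eps\sqrt{\lambda})+\bar\rho^2/(nL)\big)$.

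The main obstacle is justifying the adaptive composition cleanly. Because $\theta^t$ depends on prior noisy messages (and hence on the data), the input to the $t$-th Muffliato call is genuinely data-dependent; nevertheless, Assumption~\ref{hyp:dp_gd} is a \emph{uniform} Lipschitz bound on the per-sample gradient, which delivers a sensitivity bound that holds on every sample path. This is the standard device in Rényi DP analyses of noisy (stochastic) gradient descent, and it is what allows the per-iteration Rényi losses to be summed without any coupling between iterations. A secondary care point is to align constants in the $\meane_v$ computation and in Theorem~\ref{thm:utility-sgd-app} so that $T^{\rm stop}$ cancels with $\mu$ to produce the $1/L$ factor in the privacy-independent residual.
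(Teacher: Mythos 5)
Your proposal is correct and follows essentially the same route as the paper: apply Theorem~\ref{thm:privacy_general} to each round's \emph{Muffliato} call (with the per-round sensitivity $\nu\Delta_{\phi}$ from Assumption~\ref{hyp:dp_gd} cancelling against the noise scale $\nu^2\sigma^2$), sum the per-round Rényi losses over $t<T$ by composition, then average over $u\neq v$ to get $\meane_v\leq \frac{\alpha\Delta_\phi^2 TK\bar d_v}{2n\sigma^2}$ and solve for $\sigma^2$ against the utility bound of Theorem~\ref{thm:utility-sgd-app}. The paper's write-up is terser (it invokes post-processing and "summing over $t<T$" where you spell out the adaptive-composition and conditional-sensitivity argument), but the substance is identical.
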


\begin{proof}[Proof of Theorem~\ref{thm:privacy-gd2}]
The information leaked by $u$ to $v$ up to iteration $T$ of Algorithm~\ref{algo:private-decentralized-sgd} consists in the $T$ (stochastic) gradients locally computed at node $u$ and gossiped through the graph, using the \emph{Muffliato} algorithm. Note that the gradient used at iteration $t$ continues to leak information to some nodes after the $K$ gossip averaging steps of iteration $t$, as the information continues to flow in the graph until the end of the \emph{Muffliato}-GD/SGD algorithm, i.e. for the subsequent $(T-t)$ iterations.

For the last round however, using Theorem~\ref{thm:privacy_general} (with Assumption~\ref{hyp:sensitivity} satisfied using Assumption~\ref{hyp:dp_gd}), the privacy loss is given by that of \emph{Muffliato}:
\begin{equation*}
    \frac{\Delta_{\phi}^2\alpha}{2\sigma^2}\sum_{k=0}^{K-1}\sum_{ w:\set{v,w}\in \cE_T}\frac{(W^k_T)_{u,w}^2}{\NRM{(W^k_T)_w}^2}\,.
\end{equation*}
Then, the total privacy loss of a given round $t$ can be upper bounded by the privacy loss of running \emph{Muffliato} for $(T-t)K$ steps. Thus, we sum each of the round $t$ over the $(T-t)K$ subsequent steps, leading to the formula:
\begin{equation*}
    \frac{\Delta_{\phi}^2\alpha}{2\sigma^2}\sum_{j=t}^{T}\sum_{k=0}^{K-1}\sum_{ w:\set{v,w}\in \cE_j}\frac{\big((\Pi_{i=t}^{j-1}W_i^{K-1}) W^k_j\big)_{u,w}^2}{\NRM{\big((\Pi_{i=t}^{j-1}W_i^{K-1}) W^k_j\big)_w}^2}\,,
\end{equation*}
where $\cE_t$ are the edges of the graph drawn at time $t$. We obtain the upper bound on  $\eps_\utov^T(\alpha)$ stated in the first part of Theorem~\ref{thm:privacy-gd2} by summing this expression over $t<T$, and the simplified version in the first part of the Theorem~\ref{thm:privacy-gd} follows directly from the fact that we consider the graph fixed across iterations.

For the second inequality, we have, by summing:
\begin{equation*}
    \meane_v=\frac{1}{n}\sum_{u\ne v}\eps_\utov^T(\alpha)\leq \frac{KT^2\bar d_v\Delta_\phi^2\alpha}{2n\sigma^2}\leq \frac{KT^2\bar d\Delta_\phi^2\alpha}{2n\sigma^2}\,.
\end{equation*}

In order to reach a precision $\frac{D\sigma^2+\bar\rho^2}{n}$, $T=\cO(\kappa\ln(D\sigma^2/n))=T^{\rm stop}(\kappa,\sigma^2,n)$ iterations are required. Using $K=\tilde\cO(1/\sqrt{\lambda})$, we have:

\begin{equation*}
    \meane_v=\cO\left( \frac{\bar d\Delta_\phi^2\alpha}{2n\sigma^2} \kappa^
    2\sqrt{\lambda}^{-1}\ln^2(D\sigma^2/n)\right)\,.
\end{equation*}
Taking $\sigma^2$ such that $\frac{\bar d\Delta_\phi^2\alpha}{2n\sigma^2} \kappa^2\sqrt{\lambda}^{-1}\ln^2(\sigma^2/n)=\eps$ and plugging into Theorem~\ref{thm:utility-sgd} yields the desired result.
\end{proof}

\section{Extensions to Collusion and Group Privacy}\label{app:collusions}

In this section, we discuss natural extensions of our privacy definitions to the case of colluding nodes, and to group privacy.

\subsection{Presence of Colluding Nodes}

\subsubsection{Definitions}

The notions of pairwise network DP we introduced in Section~\ref{subsec:pairwisedp} can naturally be extended to account for potential collusions. For $V\subset\cV$ a set of colluding nodes, we define the \emph{view} of the colluders as:
\begin{equation*}
    \cO_V\big(\cA(\cD)\big)=\bigcup_{v\in V}\cO_V\big(\cA(\cD)\big)\,,
\end{equation*}
or equivalently as:
\begin{equation*}
    \cO_V\big(\cA(\cD)\big)=\set{(u,m(t),v)\in\cA(\cD)\quad\text{such that}\quad\edgeuv\in\cE\,,v\in V}\,.
\end{equation*}

Below, $\cP(\cV)$ denotes the powerset of $\cV$.

\begin{definition}[Pairwise Network DP with colluders]
  For $f: \cV \times \cP(\cV) \rightarrow \R^+$, an algorithm $\cA$ satisfies $(\alpha, f)$-pairwise network  DP if for all users $u\in \cV$, pairs  of neighboring datasets $D \sim_u D'$, and any potential set of colluders $V\in\cV$ such that $u\notin V$, we have:
  \begin{equation}
  \label{eq:network-pdp-colluders}
  \divalpha{\Obs{V}(\cA(\cD))}{\Obs{V}(\cA(\cD'))} \leq f(u,V)\,.
  \end{equation}
 We note $f(u,V)=\eps_{u\to V}$ the privacy leaked to the colluding nodes $V$ from $u$ and say that $u$ is $(\alpha, \eps_{u\to V})$-PNDP with respect to $V$ if only inequality~\eqref{eq:network-pdp-colluders} holds for $f(u,V)$.
 Finally, if for a function $f:\cV\times \cP(\cV)\to \R$, inequality~\eqref{eq:network-pdp-colluders} holds for all $(u,V)\in\cV\times\cP(\cV)$ such that $u\notin V$, we say that $\cA$ is $(\alpha,f)$-pairwise NDP.
\end{definition}

This definition quantifies the privacy loss of a node $u$ with respect to the collusion of any possible subset $V$ of nodes, and thus generalizes the definition of the main text (which corresponds to restricting $V$ such that $|V|=1$).

The proofs of this section are actually direct consequences of the proof techniques of our results without colluders, by replacing $\cO_v$ (the view of a colluder) by $\cO_V$ (the view of the colluding set). Roughly speaking, $V$ can be seen as a unique abstract node, resulting from the fusion of all its nodes.

\subsubsection{Adapting Theorem~\ref{thm:privacy_general} and the Resulting Corollaries}

For $w\in\cV$ and $V\in \cP(\cV)$, let $\cP^t_\edgeVw=\set{s<t\,:\,\exists v\in V\,,\edgevw\in\cE_s}$ the times (up to time $t$) at which an edge $\edgevw$ for any $v\in V$ is activated \emph{i.e.}~the times at which there is a communication between $w$ and a colluder. 
For $t\geq0$ and $V\subset\cV$, let $\cN_t(V)$ be the neighbors in $G_t$ of the colluders set $V$, defined as: 
\begin{equation*}
\cN_t(V)= \set{w\in\cV\setminus V\,|\,\exists v\in V\,,\edgevw\in\cE_t}\,.
\end{equation*}
For $T\geq1$, $\sum_{t<T}|\cN_t(V)|$ is thus the total number of communications in which colluders are involved with.

\begin{theorem}\label{thm:privacy_general_colluders}
Assume that Assumption~\ref{hyp:sensitivity} holds. Let $T\geq 1$, $u\in\cV$ and $V\subset\cV$ such that $u\notin V$, and $\alpha>1$. Then the algorithm $\cA^T$ is $(\alpha, f)$-PNDP with::
\begin{equation}
\label{eq:privacy_loss_general_collusion}
     f(u,V) \leq \frac{\alpha \Delta^2}{2\sigma^2}\sum_{w\in \cV}\sum_{t\in\cP_\edgeVw^T}\frac{{(W_{0:t})}_{u,w}^2}{\NRM{(W_{0:t})_w}^2}\,.
\end{equation}
Consequently, there exists $f$ such that $\cA^T$ is $(\alpha,f)$-PNDP with:
\begin{equation}\label{eq:mean_general_annex}
    \meane_V =\frac{1}{n}\sum_{u\in\cV\setminus V}f(u,V)\leq \frac{\alpha \Delta^2}{2n\sigma^2}\sum_{t<T}|\cN_t(V)|\,,
\end{equation}
where $\sum_{t<T}|\cN_t(V)|$ is the total number of communications a colluding set $V$ is involved with, up to time $T$.
\end{theorem}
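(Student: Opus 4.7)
The result is an extension of Theorem~\ref{thm:privacy_general} in which a single observer $v$ is replaced by a coalition $V$. The plan is to reproduce the three-step skeleton of the original proof (description of the view, composition of Gaussian-mechanism bounds over the observations, then summation/sensitivity book-keeping for the mean) with essentially one conceptual change: the observations indexed by neighbors of $v$ at time $t$ are replaced by observations indexed by neighbors of \emph{any} element of $V$ at time $t$, which is precisely the set encoded by $\cP^T_\edgeVw$.

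First, I would write the view of the coalition explicitly as
\begin{equation*}
\cO_V\big(\cA^T(\cD)\big)=\bigcup_{v\in V}\cO_v\big(\cA^T(\cD)\big)=\set{\big(W_{0:t}(x+\eta)\big)_w\,:\,w\in\cV,\,t\in\cP^T_\edgeVw}\cup\set{x_v:v\in V},
\end{equation*}
where the terms $\set{x_v:v\in V}$ carry no information about the private data of $u\notin V$ and therefore contribute $0$ to the Rényi divergence. Then, by the standard composition property of Rényi divergences applied coordinate by coordinate, one obtains
\begin{equation*}
\divalpha{\Obs{V}(\cA^T(\cD))}{\Obs{V}(\cA^T(\cD'))}\leq \sum_{w\in\cV}\sum_{t\in\cP^T_\edgeVw}\divalpha{\big(W_{0:t}(x+\eta)\big)_w}{\big(W_{0:t}(x'+\eta)\big)_w}.
\end{equation*}
Each term on the right is the divergence between two Gaussians that differ only by the translation induced by changing $x_u$ into $x_u'$: the mean shift is bounded by $\Delta|(W_{0:t})_{u,w}|$ under Assumption~\ref{hyp:sensitivity}, and the variance equals $\sigma^2\NRM{(W_{0:t})_w}^2$. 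Applying Lemma~\ref{lem:gaussian_mechanism} yields the per-observation bound $\alpha\Delta^2(W_{0:t})_{u,w}^2/(2\sigma^2\NRM{(W_{0:t})_w}^2)$, and summing delivers~\eqref{eq:privacy_loss_general_collusion}.

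For the mean-loss bound~\eqref{eq:mean_general_annex}, I would average~\eqref{eq:privacy_loss_general_collusion} over $u\in\cV\setminus V$, swap the two sums, and use that the ratio $\sum_{u}(W_{0:t})_{u,w}^2/\NRM{(W_{0:t})_w}^2$ is bounded by $1$ (it is exactly $1$ when the sum runs over all $u\in\cV$, using the interpretation of $\NRM{(W_{0:t})_w}^2$ as the squared norm of the relevant row/column, and extending the sum to $u\in V$ only loosens the bound). Each surviving pair $(w,t)$ in the remaining double sum corresponds to one communication in which $V$ is involved, so after restricting external neighbors to $w\in\cN_t(V)$ (which is the only case that actually yields new information to the coalition, since a colluder already knows its own state and noise) the count collapses to $\sum_{t<T}|\cN_t(V)|$.

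The main obstacle is the careful accounting in the mean-loss step: one must justify that internal colluder-to-colluder messages do not inflate the count, so that the bound matches $|\cN_t(V)|$ (external neighbors) rather than a larger quantity. This is done by observing that a colluder $w\in V$ already possesses its own private value $x_w$ and its own Gaussian draw $\eta_w$, and thus can subtract them out of any message it generates, so that the contribution of such internal communications to the Rényi divergence vanishes and the summation index can be legitimately restricted to $\cV\setminus V$. Every other step is a direct transcription of the proof of Theorem~\ref{thm:privacy_general} with $v$ replaced by the set $V$.
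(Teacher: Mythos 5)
Your proposal is correct and follows essentially the same route as the paper, which gives no separate proof here but simply notes that the argument of Theorem~\ref{thm:privacy_general} carries over verbatim once $\cO_v$ is replaced by $\cO_V$ (treating $V$ as a single merged node whose external neighbors are $\cN_t(V)$). You are in fact more explicit than the paper on the one genuinely new point --- that colluder-to-colluder messages can be dropped because each internal state is reconstructible from the coalition's external observations together with its own initial values and noise draws, hence contributes nothing by post-processing --- and your accounting of $\sum_{t<T}|\cN_t(V)|$ is consistent with the intended bound.
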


We now consider the synchronous \emph{Muffliato} algorithm (Algorithm~\ref{algo:dp-gossip}) with colluders.

\begin{cor}
\label{cor:sync_simplified_colluders}
Let $u\in\cV$ and $V\in\cP(\cV)$ such that $u\notin V$, $\alpha>0$. After $T$ iterations of Algorithm~\ref{algo:dp-gossip}, node $u$ is $(\alpha,\eps^T_{u\to V}(\alpha))$-PNDP with respect to $V$, with:
\begin{equation*}
    \eps^T_{u\to V}(\alpha)\leq \frac{\alpha n}{2\sigma^2}\max_{w\sim V}W_{v,w}^{-2}\sum_{t=1}^T\proba{X^t\in V|X^0=u}^2\,,
\end{equation*}
where $(X_t)_t$ is the random walk on graph $G$, with transitions $W$, and $w\sim V$ if $w\notin V$ and if there exists $v\in\cV$ such that $\edgevw\in\cE.$
\end{cor}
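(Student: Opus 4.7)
The plan is to follow the same strategy as in the proof of Corollary~\ref{cor:sync_simplified}, but starting from the collusion-aware bound of Theorem~\ref{thm:privacy_general_colluders} and collapsing the boundary sum into a single random walk transition into the set $V$. Since the gossip matrix is fixed, $W_{0:t}=W^t$, and the set $\cP^T_{\edgeVw}$ reduces to $\{0,\dots,T-1\}$ whenever $w\sim V$ (i.e.~$w\notin V$ with at least one neighbor in $V$) and is empty otherwise. So after applying Theorem~\ref{thm:privacy_general_colluders}, I am left with
\begin{equation*}
    f(u,V)\leq\frac{\alpha\Delta^2}{2\sigma^2}\sum_{t=0}^{T-1}\sum_{w\sim V}\frac{(W^t)_{u,w}^2}{\|(W^t)_w\|^2}.
\end{equation*}

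For the denominator I use bi-stochasticity: by Cauchy--Schwarz, $\|(W^t)_w\|^2\geq\frac{1}{n}\big(\sum_{u'}(W^t)_{u',w}\big)^2=\frac{1}{n}$, so $\|(W^t)_w\|^{-2}\leq n$. For the numerator, symmetry of $W$ gives $(W^t)_{u,w}=\proba{X^t=w\mid X^0=u}=\proba{X^t=u\mid X^0=w}$, where $(X^t)_t$ is the random walk with transition $W$.

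The key step is to aggregate the sum over $w\sim V$ into a single probability of hitting $V$. Using $\sum_i a_i^2\leq(\sum_i a_i)^2$ for non-negative terms, I get
\begin{equation*}
    \sum_{w\sim V}\proba{X^t=u\mid X^0=w}^2\leq\Big(\sum_{w\sim V}\proba{X^t=u\mid X^0=w}\Big)^2.
\end{equation*}
For each boundary node $w\sim V$, fix a representative neighbor $v(w)\in V$. Dividing and multiplying by $W_{v(w),w}$ and then re-summing over pairs $(v,w)$ with $v\in V$, $w\sim v$ (so each $w\sim V$ appears at least once), I bound the inner sum by $\max_{w\sim V}W_{v(w),w}^{-1}\cdot\sum_{v\in V}\sum_{w\sim v}W_{v,w}\proba{X^t=u\mid X^0=w}$. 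By the Markov property and symmetry, $\sum_{w\sim v}W_{v,w}\proba{X^t=u\mid X^0=w}=\proba{X^{t+1}=u\mid X^0=v}=\proba{X^{t+1}=v\mid X^0=u}$, and summing over $v\in V$ yields exactly $\proba{X^{t+1}\in V\mid X^0=u}$. Squaring, summing over $t$, and gathering the $n$ factor from the denominator bound gives the announced inequality, with the $\max_{w\sim V}W_{v,w}^{-2}$ factor interpreted as the inverse minimum edge weight on the boundary of $V$.

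The only subtle point is the interpretation of the $\max$ in the statement: the $v$ should be read as ``some $v\in V$ with $v\sim w$'', and the representative-choice trick above makes this rigorous. Everything else is a direct adaptation of the single-observer proof, with the node $v$ replaced by the collusion set $V$ and the one-step transition $\sum_{w\sim v}W_{v,w}(\cdot)$ replaced by the aggregated transition $\sum_{v\in V}\sum_{w\sim v}W_{v,w}(\cdot)$.
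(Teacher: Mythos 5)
Your proof is correct and matches the paper's intended argument: the paper gives no separate proof for this corollary, stating only that the collusion results follow from the non-collusion proofs by replacing the single observer $v$ with the set $V$ (fusing $V$ into one abstract node), and your representative-neighbor trick together with the aggregation $\sum_{v\in V}\sum_{w\sim v}W_{v,w}\proba{X^t=u\mid X^0=w}\leq \proba{X^{t+1}\in V\mid X^0=u}$ is exactly the right way to make that adaptation of Corollary~\ref{cor:sync_simplified} rigorous. The only cosmetic remark is that your derivation (correctly) carries the sensitivity factor $\Delta^2$, which the corollary's statement omits --- an inconsistency already present in the paper's own write-up of the non-collusion case.
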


\begin{cor}\label{cor:sync_mean_colluders}
There exists $f:\cV\times\cP(\cV)\to \R^+$ such that Algorithm~\ref{algo:dp-gossip} after $T$ steps is $(\alpha,f)$-PNDP with the following privacy-utility guarantees for any $V\subset\cV$:
\begin{equation*}
     \meane_V = \frac{1}{n}\sum_{u\in\cV\setminus V}f(u,V)\leq \eps \quad,\qquad         \frac{1}{2n}\sum_{v\in\cV}\esp{\NRM{ x^{\rm out}_v-\Bar{x}}^2}\leq  \Tilde\cO\left( \alpha\frac{d_V\Delta^2}{\eps n^2\sqrt{\lambda_W}}\right)\,,
\end{equation*}
where $x^{\rm out}$ is the output of Algorithm~\ref{algo:dp-gossip} after $T^{\rm stop}(x,W,\sigma^2)$ steps for $\sigma^2=\frac{d_V\Delta^2}{2\alpha\eps}$, and~$\Tilde\cO$ hides logarithmic factors in $n$ and $\eps$. $d_V$ is the degree of set $V$, defined as the number of $w\in\cV\setminus V$ such that there exists $v\in V$, $\edgevw\in\cE$. 
\end{cor}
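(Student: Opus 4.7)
The plan is to combine the mean privacy bound of Theorem~\ref{thm:privacy_general_colluders} with the utility guarantee of Theorem~\ref{thm:utility-sync}, in direct parallel to how the first row of Table~\ref{table:sync} is derived in the single-observer case.

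First I would specialize the general privacy inequality \eqref{eq:mean_general_annex} to Algorithm~\ref{algo:dp-gossip}. Since $W_t = W$ for all $t$, the edge set is time-invariant and $|\cN_t(V)| = d_V$ for every $t<T$; the sum over $t$ collapses to $T d_V$, yielding $\meane_V \leq \alpha\Delta^2 T d_V / (2 n \sigma^2)$. The witnessing function $f$ for $(\alpha,f)$-PNDP is exactly the pairwise expression from Theorem~\ref{thm:privacy_general_colluders}, and it transfers unchanged to the Chebyshev-accelerated iterates through the post-processing argument already used in Section~\ref{subsec:muff}: the accelerated state $P_t(W)(x+\eta)$ is a deterministic function of the non-accelerated sequence $(W^s(x+\eta))_{s\leq t}$ to which the theorem applies.

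Next I would run Algorithm~\ref{algo:dp-gossip} for $T = T^{\rm stop}(x, W, \sigma^2)$ iterations. Theorem~\ref{thm:utility-sync} then guarantees $\frac{1}{2n}\sum_v \esp{\NRM{x^{\rm out}_v - \bar x}^2} \leq 3\sigma^2/n$, with $T^{\rm stop} = \tilde\cO(1/\sqrt{\lambda_W})$. Plugging this horizon back into the privacy inequality gives $\meane_V \leq \tilde\cO\big(\alpha\Delta^2 d_V / (n \sigma^2 \sqrt{\lambda_W})\big)$. Imposing $\meane_V \leq \eps$ forces the noise scale $\sigma^2 = \tilde\cO\big(\alpha \Delta^2 d_V / (n \eps \sqrt{\lambda_W})\big)$, up to logarithmic factors absorbed in $\tilde\cO$. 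Substituting this choice into the utility bound $3\sigma^2/n$ then produces the claimed $\tilde\cO\big(\alpha \Delta^2 d_V / (n^2 \eps \sqrt{\lambda_W})\big)$.

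I do not expect any real obstacle. The only effect of the colluding set is to replace the single-observer neighborhood $\set{w:\edgevw\in\cE}$ by $\cN(V)$, whose cardinality is $d_V$ by definition; every subsequent step mirrors the corresponding argument in the single-observer case. The sensitivity analysis feeding Theorem~\ref{thm:privacy_general_colluders} is also unaffected by grouping on the observer side, since adjacency $\cD\sim_u\cD'$ with $u\notin V$ still perturbs only one coordinate of the input~$x$, so that Assumption~\ref{hyp:sensitivity} applies exactly as in the proofs of Theorem~\ref{thm:privacy_general} and Corollary~\ref{cor:sync_simplified}.
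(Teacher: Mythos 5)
Your proposal is correct and follows essentially the same route as the paper: the paper does not write out a separate proof of this corollary, but explicitly states that the collusion results follow by rerunning the single-observer arguments with $\cO_v$ replaced by $\cO_V$, which is exactly your plan of specializing Theorem~\ref{thm:privacy_general_colluders} to a fixed graph (so that $\sum_{t<T}|\cN_t(V)|=Td_V$), invoking the post-processing argument for Chebyshev acceleration, and then repeating the Table~\ref{table:sync} calibration of $\sigma^2$ against $T^{\rm stop}=\Tilde\cO(1/\sqrt{\lambda_W})$ with $d_v$ replaced by $d_V$. Your resulting noise level $\sigma^2=\Tilde\cO\big(\alpha\Delta^2 d_V/(n\eps\sqrt{\lambda_W})\big)$ is the one actually consistent with the stated utility bound, so no gap remains.
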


\begin{cor}[\emph{Muffliato} on a random graph with collusions]\label{thm:privacy-random-graph-collusions}
Let $\alpha>1$, $T\geq 0$, $\sigma^2\geq \frac{\Delta^2\alpha(\alpha-1)}{2}$ and $q=c\frac{\ln(n)}{n}$ for $c>1$. Let $u\in\cV$ and $V\in\cP(\cV)$ such that $u\notin V$.
    After running Algorithm~\ref{algo:dp-gossip} on an Erdos Rényi random graph of parameters $(n,q)$ and under the assumptions of Theorem~\ref{thm:privacy-random-graph}, node $u$ is ($\alpha,\eps_\utov^T(\alpha))$-PNDP with respect to colluders $V$, with:
\begin{equation*}
    \eps_{u\to V}^T(\alpha)\leq \left\{ \begin{aligned} &\quad\quad\frac{\alpha}{2\sigma^2}\quad\quad \text{ with probability }1-(1-q)^{|V|}\\
    &\frac{\alpha}{\sigma^2} \frac{Td_V}{n-d_V} \quad \text{ with probability }(1 -q)^{|V|}\end{aligned}\right.\,.
\end{equation*}
\end{cor}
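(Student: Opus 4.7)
The plan is to mirror the proof of Theorem~\ref{thm:privacy-random-graph} (non-colluding case), replacing the single observer $v$ by the colluding set $V$ and adjusting the edge-existence probability accordingly. First, I would note that in an Erdős-Rényi graph of parameter $q$, the events $\set{\edgeuv\in\cE}$ for $v\in V$ are mutually independent, so the probability that $u$ has at least one neighbor in $V$ (equivalently, $u\in\cN(V)$) is exactly $1-(1-q)^{|V|}$. This gives the two cases of the statement.

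In the first case, when some $v\in V$ is adjacent to $u$, the colluders' joint view directly contains the initial noisy message $x_u^{(0)}=x_u+\eta_u$ sent from $u$ to $v$. All later messages exchanged in the graph are measurable functions of $x_u^{(0)}$ and of noise variables independent of $\cD$, hence are post-processings from the viewpoint of privacy in $u$'s data. By the Gaussian mechanism (Lemma~\ref{lem:gaussian_mechanism}) and the post-processing inequality for Rényi divergence, the privacy loss towards $V$ is bounded by $\alpha/(2\sigma^2)$ (using the normalized sensitivity $\Delta=1$ of the statement), recovering the LDP baseline.

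In the second case, conditioning on $V\cup\cN(V)$ with $u\notin V\cup\cN(V)$, I would apply the mean-privacy bound of Theorem~\ref{thm:privacy_general_colluders} for colluders, which gives $\frac{1}{n}\sum_{w\in\cV\setminus V}\eps_{w\to V}^T\leq \frac{\alpha d_V T}{n\sigma^2}$. Splitting this sum into contributions from $w\in\cN(V)$ (each upper-bounded by $\alpha/(2\sigma^2)$ by the first case argument) and from $w\in\cV\setminus(V\cup\cN(V))$, and invoking Lemma~\ref{lem:conv} (valid under the hypothesis $\sigma^2\geq\Delta^2\alpha(\alpha-1)/2$) to pull the conditional expectation inside the divergence, yields
\begin{equation*}
\frac{1}{n}\Bigl(\sum_{w\in\cN(V)}\tfrac{\alpha}{2\sigma^2} + \sum_{w\notin V\cup\cN(V)}\E\bigl[\eps_{w\to V}^T\,\big|\,V\cup\cN(V)\bigr]\Bigr)\leq \frac{\alpha d_V T}{n\sigma^2}.
\end{equation*}
The permutation-invariance assumption ensures that the conditional expectation is the same for every non-neighbor $w$, and in particular equals that of our fixed $u$. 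Rearranging (and upper-bounding the number of non-neighbors by $n-d_V$) produces the claimed bound $\tfrac{\alpha}{\sigma^2}\tfrac{Td_V}{n-d_V}$.

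The main subtlety, and the step I expect to require the most care, is the justification of the permutation-invariance hypothesis with a colluding set rather than a single observer: one has to argue that jointly $V$ sees no more than direct edges incident to $V$, so that the conditional law of the rest of the graph remains exchangeable on $\cV\setminus(V\cup\cN(V))$. This is immediate from the Erdős-Rényi independence of edges together with the fact that Theorem~\ref{thm:privacy_general_colluders} already upper-bounds all graph-dependent information through the mean formula, so no second-order structural information leaks beyond what is implicit in $\cN(V)$. The rest of the argument is routine bookkeeping that reuses verbatim the calculations of Theorem~\ref{thm:privacy-random-graph}.
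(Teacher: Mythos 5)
Your proposal is correct and follows essentially the same route as the paper: the paper does not write out this proof explicitly but states that the collusion results follow by replacing the single observer's view $\cO_v$ with the joint view $\cO_V$ in the corresponding non-colluding proofs, and your argument is precisely that adaptation of the proof of Theorem~\ref{thm:privacy-random-graph}, using independence of edges for the $1-(1-q)^{|V|}$ probability, post-processing for the neighboring case, and the colluders' mean-privacy formula together with Lemma~\ref{lem:conv} and permutation invariance for the non-neighboring case. The minor bookkeeping conventions you adopt (normalized sensitivity $\Delta=1$ and the $n-d_V$ count of non-neighbors) match the paper's own statement.
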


\subsubsection{Compensating for Collusions with Time-Varying Graph Sampling}

We now consider the decentralized optimization problem of Section~\ref{sec:gd} in the presence of colluders, and analyze its privacy with time-varying graph sampling as in Appendix~\ref{app:optim_privacy}.

The motivation for this is that, if the graph is fixed, node $u$ will suffer from poor privacy guarantees (the same as in LDP) with respect to the colluding set $V$ as soon as $u$ is in $\cN(V)=\cN_0(V)$ (i.e., one of colluders in $V$ is a neighbor of $u$). Even if the graph is sampled randomly, this will happen with probability that increases with $|V|$.
In contrast, for time-varying random graphs sampled independently at each communication round and for sufficiently many communication rounds (i.e., large enough condition number $\kappa$), it becomes unlikely that $u$ is in $\cN_t(V)$ for many rounds $t$, and therefore the privacy guarantees with respect to $V$ can improve.

Below, we consider Algorithm~\ref{algo:private-decentralized-sgd} with time-varying graphs (and associated gossip matrices $(W_t)_{t\geq0}$) sampled in an \emph{i.i.d.} fashion at each communication round as Erdös-Rényi graphs of parameters $n,q=\frac{c\ln(n)}{n}$ for some $c>1$, such that they verify $\lambda_{W_t}\geq \lambda>0$ for all $t$ (as noted before, this happens with high probability for $\lambda$ of order 1 \cite{ERgaps2019}).
In this context, we have the following result.

\begin{proposition} %
Let $\alpha>1$, $T\geq 0$, $\sigma^2\geq \frac{\Delta^2\alpha(\alpha-1)}{2}$. Let $u\in\cV$ and $V\in\cP(\cV)$ such that $u\notin V$.
    After running Algorithm~\ref{algo:private-decentralized-sgd} under the assumptions described above and the function assumptions of Theorem~\ref{thm:privacy-gd}, node $u$ is ($\alpha,\eps_\utov^T(\alpha))$-PNDP with respect to colluders $V$, with:
\begin{equation*}
    \eps_{u\to V}^T(\alpha)\leq  \frac{\Delta_\phi^2\alpha}{\sigma^2} \sum_{t=0}^{T^{\rm stop}} \beta_t + (1-\beta_t)\frac{K |\cN_t(V)|}{n-|\cN_t(V)|}\,, 
\end{equation*}
where $T^{\rm stop}=\tilde\cO(\kappa)$ and $K=\Tilde{\cO}(1/\sqrt{\lambda})$ (see Theorem~\ref{thm:utility-sgd}), $(\beta_t)_t$ are \emph{i.i.d.}~Bernoulli random variables of parameter $\proba{\exists v\in\cV\,,\,\edgeuv\in\cE_t}=1-(1-q)^{|V|}$, and $|\cN_t(V)|$ is the number of neighbors of $V$ in the graph sampled at iteration $t$, of order $\frac{c|V|\ln(n)}{n}$.
\end{proposition}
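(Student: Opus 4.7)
The plan is to leverage Rényi DP composition across the $T^{\rm stop}$ outer iterations of Algorithm~\ref{algo:private-decentralized-sgd}, and to apply the random-graph analysis of Theorem~\ref{thm:privacy-random-graph} (in its colluder version, Corollary~\ref{thm:privacy-random-graph-collusions}) at each iteration. Because the graphs $(G_t)_{t\ge 0}$ are drawn independently from one another and from the Gaussian noise, the Rényi composition gives $\eps_{u\to V}^{T^{\rm stop}}(\alpha)\le\sum_{t=0}^{T^{\rm stop}-1}\eps_{u\to V}^{(t)}(\alpha)$, where $\eps_{u\to V}^{(t)}(\alpha)$ is the per-round privacy loss arising from $K$ inner gossip steps of \emph{Muffliato} with matrix $W_t$ applied to the noisy gradient $\hat\theta_u^t=\theta_u^t-\nu\nabla_{\theta}\ell_u$. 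The step-size factor $\nu$ cancels in the ratio sensitivity${}^2$/noise${}^2$, reducing it to $\Delta_{\phi}^2/\sigma^2$.

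For each round $t$ I would split along the indicator $\beta_t=\mathbf{1}\{\exists v\in V:\edgeuv\in\cE_t\}$. When $\beta_t=1$, at least one colluder observes the noisy gradient of $u$ in the first inner gossip step; by the Gaussian mechanism (Lemma~\ref{lem:gaussian_mechanism}) the per-round loss is at most $\tfrac{\alpha\Delta_{\phi}^2}{2\sigma^2}\le\tfrac{\alpha\Delta_{\phi}^2}{\sigma^2}$, and all subsequent inner steps are post-processing.

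When $\beta_t=0$, I would mimic the proof of Theorem~\ref{thm:privacy-random-graph}: conditionally on $V\cup\cN_t(V)$, the law of $G_t$ is invariant under permutations of $\cV\setminus(V\cup\cN_t(V))$ (Erdös-Rényi plus the honest-but-curious assumption that colluders only see their direct neighbors). I then combine (i) the mean-privacy formula from the colluder extension of Theorem~\ref{thm:privacy_general} (Theorem~\ref{thm:privacy_general_colluders}) applied to the $K$ inner Muffliato steps on $W_t$, which gives $\tfrac{1}{n}\sum_{u\notin V}\eps_{u\to V}^{(t)}(\alpha)\le\tfrac{\alpha\Delta_{\phi}^2 K|\cN_t(V)|}{2n\sigma^2}$, with (ii) the quasi-convexity of Rényi divergence (Lemma~\ref{lem:conv}), which is applicable since $\sigma^2\ge\Delta_{\phi}^2\alpha(\alpha-1)/2$ and which costs a factor at most $2$. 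By exchangeability the conditional mean over non-neighbors equals the (identical) conditional privacy loss of any fixed non-neighbor $u$; after isolating the $|\cN_t(V)|$ direct neighbors (which contribute at most $\tfrac{\alpha\Delta_{\phi}^2}{2\sigma^2}$ each) and rearranging, this yields $\eps_{u\to V}^{(t)}(\alpha)\le\tfrac{\alpha\Delta_{\phi}^2 K|\cN_t(V)|}{\sigma^2(n-|\cN_t(V)|)}$, where I have lower-bounded $n-|V|-|\cN_t(V)|$ by $n-|\cN_t(V)|$ (absorbing $|V|$ into $|\cN_t(V)|$).

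Summing the per-round bounds over $t=0,\dots,T^{\rm stop}$ delivers the announced inequality, with $\beta_t$ Bernoulli of parameter $1-(1-q)^{|V|}$ (the probability that at least one of the $|V|$ independent potential edges from $u$ to $V$ is present in $G_t$) and $|\cN_t(V)|$ of order $c|V|\ln(n)$ by a Chernoff bound on the Erdös-Rényi neighborhood size. The i.i.d.\ structure across $t$ follows from the independent sampling of the $G_t$. The main technical obstacle is matching the constants: one must be careful that the quasi-convexity factor $(1+c)\le 2$ exactly absorbs the missing factor $2$ between $\tfrac{\alpha\Delta_{\phi}^2}{2\sigma^2}$ (LDP bound on neighbors) and $\tfrac{\alpha\Delta_{\phi}^2}{\sigma^2}$ (normalization used in the final bound), and that the exchangeability argument applies after conditioning on the full neighborhood view of $V$, not just on $|\cN_t(V)|$. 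Everything else is bookkeeping inherited from the analyses already developed for \emph{Muffliato} and its random-graph variant.
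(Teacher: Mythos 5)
Your proposal is correct and follows exactly the route the paper intends: the paper gives no explicit proof of this proposition, presenting it as the direct combination of per-round composition (as in Theorem~\ref{thm:privacy-gd2}) with the random-graph colluder analysis of Corollary~\ref{thm:privacy-random-graph-collusions}, i.e.\ Gaussian mechanism plus post-processing when $\beta_t=1$ and permutation invariance plus quasi-convexity (Lemma~\ref{lem:conv}) when $\beta_t=0$, which is precisely what you do. The only caveat, inherited from the paper's own statement rather than introduced by you, is the minor slack in replacing the denominator $n-|V|-|\cN_t(V)|$ by $n-|\cN_t(V)|$, which is harmlessly absorbed by the factor-$2$ headroom from quasi-convexity when $|V|\ll n$.
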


\subsubsection{Discussion}

Generally speaking, our bounds degrade in presence of colluding nodes. This is a fundamental limitation of our approach that considers only privacy amplification due to decentralization. By definition, our privacy guarantees can only provide amplification as long as the view of the attackers is smaller than the one of the omniscient attacker considered in local differential privacy, i.e $\cO_V(\cA^T) \subsetneq \cA^T$. A condition for having equality corresponds to observing all messages that are transmitted. In the case of a fixed graph, this can be characterized by the fact that $V$ contains a dominating set for the graph. For Erdos Rényi graphs or exponential graphs, there exists dominating sets of size $\cO(\log n)$, thus it is meaningless to expect guarantees for all possible sets of colluding nodes of that size. However, if some/most colluding nodes are actually far from the target node $u$ in the graph, then good privacy amplification can still be achieved. This can be precisely measured by Equation~\ref{eq:privacy_loss_general_collusion}.

\subsection{Group Privacy}

Symmetrically to the problem of collusions, where there are several attackers, one can study group privacy, where privacy guarantees are computed towards a group rather than a single individual. This is useful when some users have correlated data (e.g., close family members). The usual notion of group privacy \citep[see e.g.,][]{dwork2013Algorithmic} would provide a guarantee against all groups of users of a given size. However, this standard definition would provide pessimistic guarantees in some cases as it does not take advantage of the fact that groups may correspond to specific subgraphs. Hence, we propose the following definition.

\begin{definition}[Group Pairwise Network DP]
  For $f: \cV \times \cP(\cV) \rightarrow \R^+$, an algorithm $\cA$ satisfies $(\alpha, f)$-group pairwise network  DP if for all set of users $U \subset \cV$, pairs  of neighboring datasets $D \sim_U D'$, and any vertex $v\in\cV$, we have:
  \begin{equation}
  \label{eq:network-pdp-group}
  \divalpha{\Obs{v}(\cA(\cD))}{\Obs{v}(\cA(\cD'))} \leq f(U,v)\,.
 \end{equation}
\end{definition}

The modifications of the theorems are similar to the case of collusion, summing over the nodes in $U$ (instead of summing over the nodes in $V$ for the case of collusion).
The following theorem gives the general case corresponding to Theorem~\ref{thm:privacy_general}. The other results can be adapted in the same way.

\begin{theorem}\label{thm:privacy_general_group}
Let $T\geq 1$ and denote by $\cP^T_\edgevw=\set{s<T\,:\,\edgevw\in\cE_s}$ the set of time-steps with communication along edge~$\edgevw$. Under Assumption~\ref{hyp:sensitivity}, $\cA^T$ is $(\alpha, f)$-group PNDP for  with:
\begin{equation}
\label{eq:pairwise-ndp-group}
    f(U,v) = \frac{\alpha \Delta^2}{2\sigma^2}\sum_{w\in \cV}\sum_{t\in\cP^T_\edgevw}\frac{\sum_{u \in U}({W_{0:t}})_{u,w}^2}{\NRM{({W_{0:t}})_w}^2}\,.
\end{equation}
\end{theorem}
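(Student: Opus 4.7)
The plan is to mirror the proof of Theorem~\ref{thm:privacy_general} (given in Appendix~\ref{app:general_formula}) almost verbatim, with only the sensitivity analysis adapted to the group-neighboring relation $\cD \sim_U \cD'$. First I would unpack the view $\Obs{v}(\cA^T(\cD))$ as the collection of messages $(W_{0:t}(x + \eta))_w$ for $w \in \cN_t(v)$ and $t < T$ (together with $x_v$), and apply the same decomposition of the Rényi divergence over independent view components as in the single-user case, namely
\begin{equation*}
\divalpha{\Obs{v}(\cA^T(\cD))}{\Obs{v}(\cA^T(\cD'))} \leq \sum_{t=0}^{T-1}\sum_{w\in\cN_t(v)} \divalpha{\big(W_{0:t}(x+\eta)\big)_w}{\big(W_{0:t}(x'+\eta)\big)_w}.
\end{equation*}
Reindexing the outer double-sum by the edges used yields $\sum_{w \in \cV}\sum_{t \in \cP^T_\edgevw}$, matching the structure of the claimed formula.

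Each pair of messages $\big((W_{0:t}(x+\eta))_w, (W_{0:t}(x'+\eta))_w\big)$ is a pair of univariate Gaussians sharing the variance $\sigma^2 \NRM{(W_{0:t})_w}^2$ inherited from the independent $\cN(0,\sigma^2)$ components of $\eta$. Under the group change $\cD \sim_U \cD'$, the mean difference rewrites as
\begin{equation*}
\big(W_{0:t}(x - x')\big)_w = \sum_{u \in U} (W_{0:t})_{u,w}\,(x_u - x_u'),
\end{equation*}
whereas in the single-user proof only the unique term $u$ with $\cD \sim_u \cD'$ survived. Bounding the squared mean difference by $\Delta^2 \sum_{u \in U}(W_{0:t})_{u,w}^2$, applying Lemma~\ref{lem:gaussian_mechanism} termwise, and summing over $w \in \cV$ and $t \in \cP^T_\edgevw$ produces exactly the target expression for $f(U,v)$, after factoring out the common $\alpha/(2\sigma^2)$.

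The delicate step is the squared-sensitivity bound itself. A naive Cauchy-Schwarz on $\big|\sum_{u \in U}(W_{0:t})_{u,w}(x_u - x_u')\big|^2$ combined with the per-user bound in Assumption~\ref{hyp:sensitivity} produces $|U|\,\Delta^2 \sum_{u\in U}(W_{0:t})_{u,w}^2$, which carries an extra $|U|$ factor relative to Equation~\eqref{eq:pairwise-ndp-group}. To match the stated formula one must either read $\Delta$ as an $L_2$ group sensitivity, i.e.\ $\sum_{u\in U}\NRM{x_u - x_u'}^2 \leq \Delta^2$ (which is consistent with the usual convention when Assumption~\ref{hyp:sensitivity} is applied to a composite input), or regroup the computation so that each user's contribution is tracked separately and the bounds are summed additively via the structure of the Gaussian mechanism. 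This reconciliation is the only new ingredient; everything else is a transcription of the single-user argument with the index $u$ replaced by a sum over $u \in U$.
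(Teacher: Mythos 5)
Your proposal follows exactly the route the paper intends: the appendix gives no explicit proof of Theorem~\ref{thm:privacy_general_group}, only the remark that one should repeat the argument of Theorem~\ref{thm:privacy_general} ``summing over the nodes in $U$'', and your transcription of the single-user proof (decomposition of the divergence over view components, reindexing by edges, Gaussian mechanism per component) is precisely that. More importantly, you have correctly isolated the one step that is not a transcription, and your diagnosis of it is right: the message $(W_{0:t}(x+\eta))_w$ is a one-dimensional Gaussian whose Rényi divergence is \emph{quadratic} in the mean shift $\sum_{u\in U}(W_{0:t})_{u,w}(x_u-x_u')$, and under the per-user reading of Assumption~\ref{hyp:sensitivity} this squared shift can reach $|U|\,\Delta^2\sum_{u\in U}(W_{0:t})_{u,w}^2$ (take all weights equal and all perturbations aligned), so Equation~\eqref{eq:pairwise-ndp-group} as stated is off by a factor of up to $|U|$. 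This is a genuine gap in the statement rather than in your argument, and the paper itself concedes in the following paragraph that the bound ``is clearly sub-optimal'' in some configurations without addressing this direction of looseness.

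Of the two reconciliations you offer, only the first one actually closes the gap: reading $\Delta^2$ as a group $L_2$ sensitivity, i.e.\ $\sum_{u\in U}\NRM{x_u-x_u'}^2\le\Delta^2$, makes the Cauchy--Schwarz step give exactly $\bigl|\sum_{u\in U}(W_{0:t})_{u,w}(x_u-x_u')\bigr|^2\le\Delta^2\sum_{u\in U}(W_{0:t})_{u,w}^2$ and the rest of the proof goes through verbatim. The second reconciliation --- ``regrouping so that each user's contribution is summed additively via the structure of the Gaussian mechanism'' --- does not work: interpolating $\cD=\cD_0\sim\cD_1\sim\dots\sim\cD_{|U|}=\cD'$ one user at a time and summing the per-step divergences fails because the Rényi divergence between the endpoint Gaussians equals $\frac{\alpha}{2\sigma^2\NRM{(W_{0:t})_w}^2}(\mu_0-\mu_{|U|})^2$ and $(\mu_0-\mu_{|U|})^2\le |U|\sum_i(\mu_{i-1}-\mu_i)^2$ is tight, reintroducing the same $|U|$ factor. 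You should therefore state the theorem (or Assumption~\ref{hyp:sensitivity}) with the group sensitivity convention, or accept the extra $|U|$ in the bound; as written, the two options you list are not interchangeable.
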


Note that in some configurations, this bound bound is clearly sub-optimal, as summing does not take into account cases where the information gathered by some of the nodes in the group can be seen as a post-processing of information received by other nodes in the group. In such cases, analyzing group privacy by considering a modified graph where the nodes in the group are merged into a single node, with edge/weights adjusted accordingly, would yield better results.

\section{Additional Numerical Experiments}
\label{app:expe}

\subsection{Extra Synthetic graphs}

Figure \ref{fig:synthe} summarizes the result of \emph{Muffliato} according to the shortest path length. However, other characteristics of the topology can play a role in the privacy leakage. Thus, we show the graph representation for each of the synthetic graphs we considered in Figure~\ref{fig:diffusion}.

\begin{figure}
\centering\vspace{-9pt}
\hspace{-1em}

\subfigure{
    \includegraphics[width=0.22\linewidth]{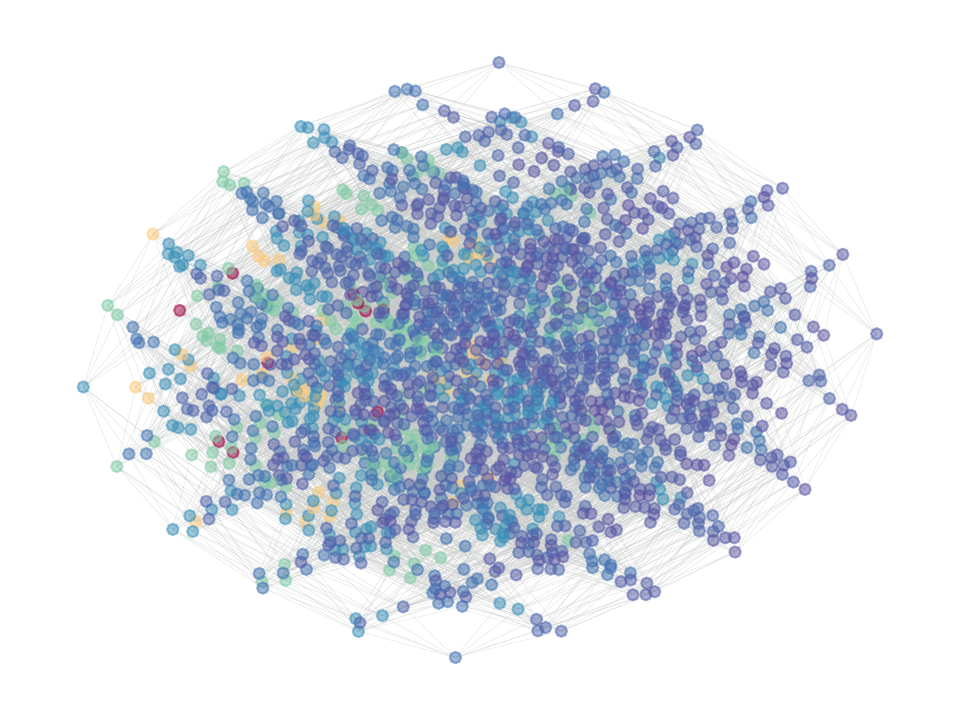}
}
\hspace{-1em}
\subfigure{
    \includegraphics[width=0.22\linewidth]{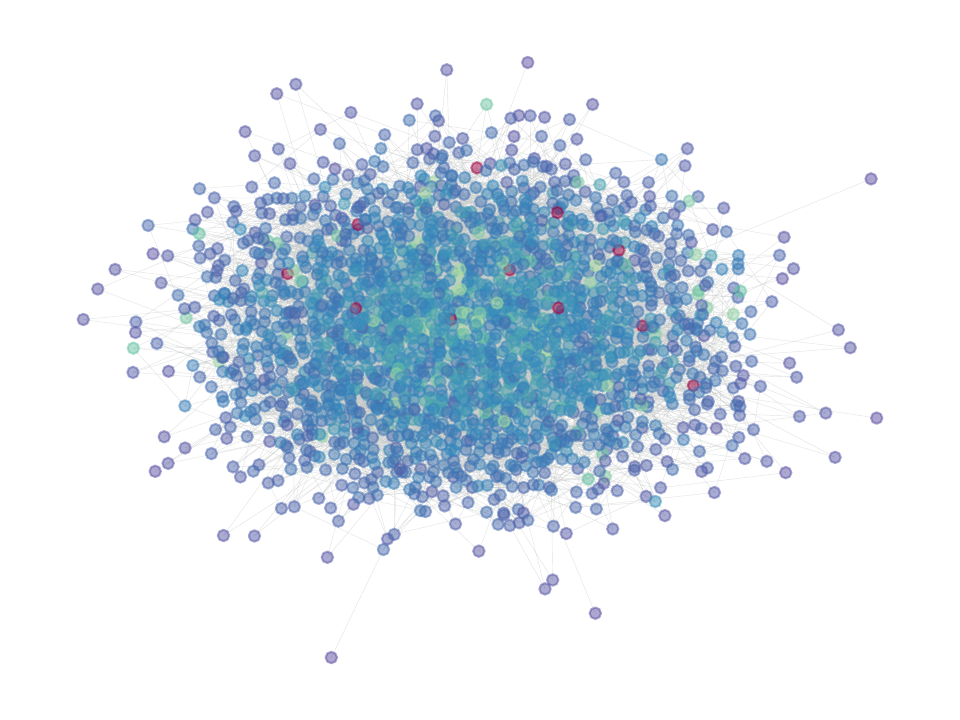}
}
\subfigure{
    \includegraphics[width=0.22\linewidth]{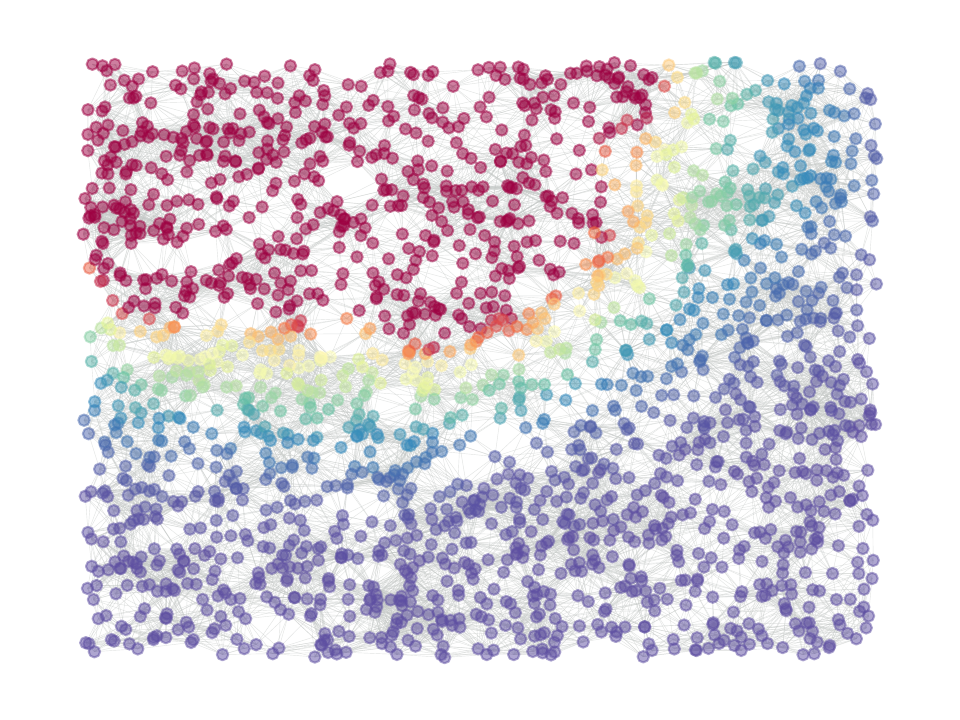}
}
\subfigure{
    \includegraphics[width=0.22\linewidth]{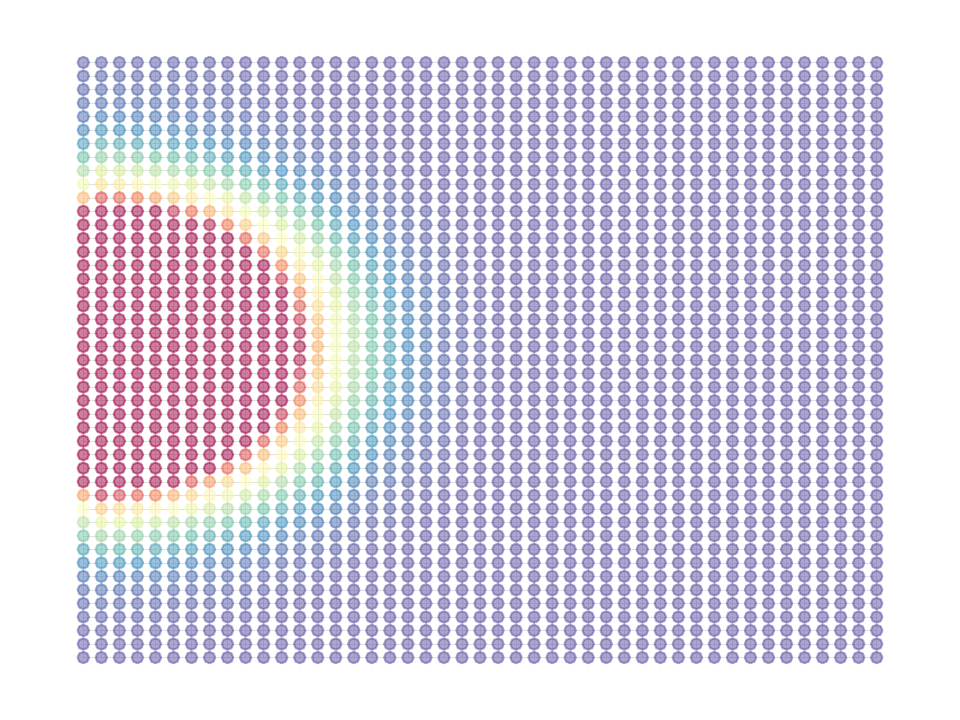}
}
\caption{Level of the privacy loss for each node (color) with respect to a fixed node in the graph. These graphs corresponds to the graphs used in Figure \ref{fig:synthe}: from left to right, exponential graph, Erdos-Renyi graph, geometric random graph and grid.}
\label{fig:diffusion}
\end{figure}

We also report in Figure~\ref{fig:nevo} how privacy guarantees improve when $n$ increases for the exponential graph. We see that privacy guarantees improve with $n$: distance between nodes increases, but also the number of nodes with whom the contribution of a specific node is mixed. This is especially significant for pairs of nodes that are not direct neighbors but at short distance of each other.

\begin{figure}
    \centering
    \includegraphics[width=.4\textwidth]{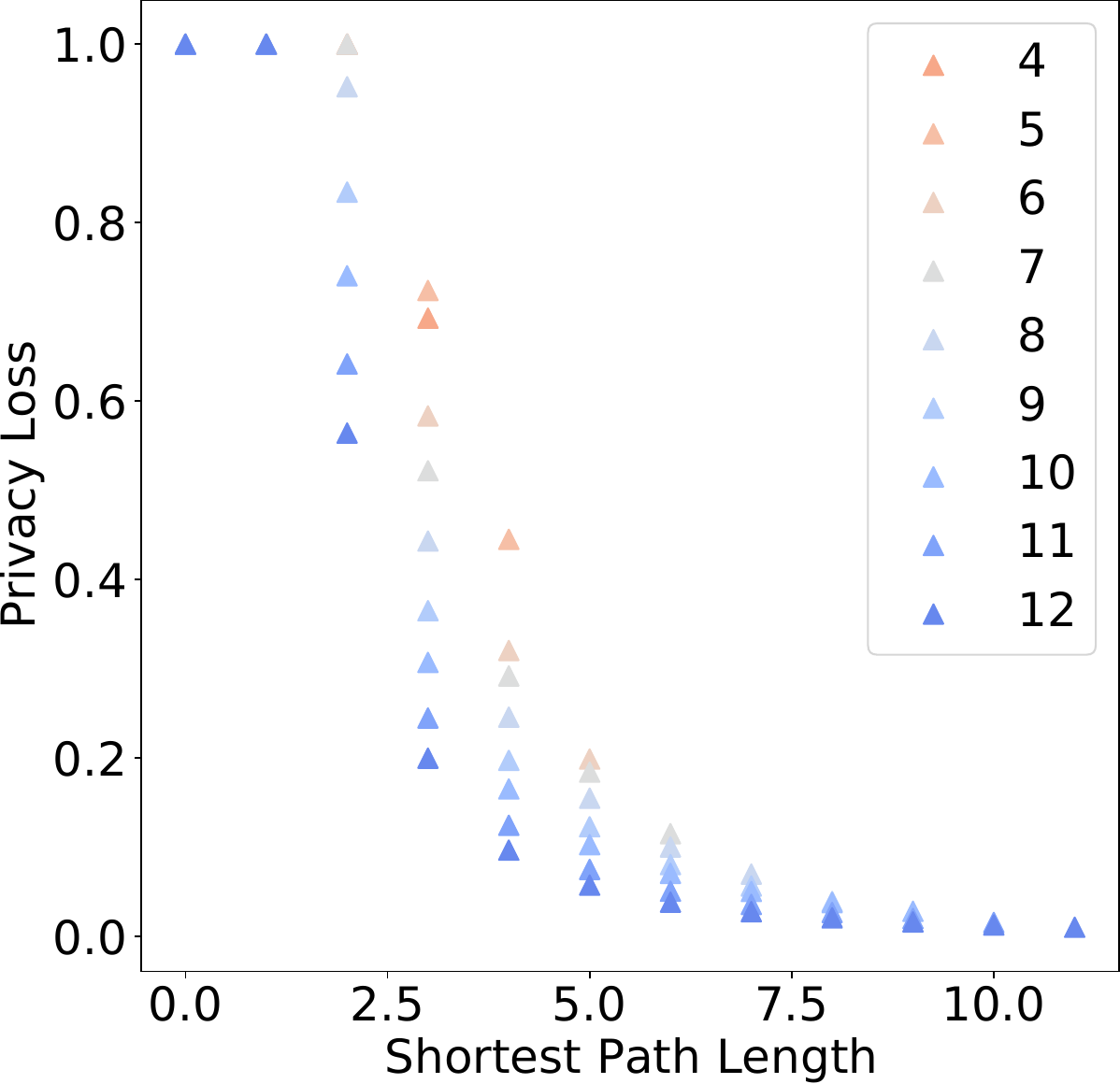}
    \caption{Privacy loss for the exponential graph with respect to the number of nodes $n$ (following powers of $2$).}
    \label{fig:nevo}
\end{figure}

\subsection{Proof of Fixed Privacy Loss for Exponential Graphs}

For an exponential graph, the pairwise privacy loss is fully determined by the shortest length path, i.e $f(u,v) = g(d(u,v))$ where $d: \cV \rightarrow \N$ is the function that returns the length of the shortest path between $u$ and $v$. 

This result is a consequence of the invariance per vertex permutation in the hypercube. For the hypercube with $2^m$ vertices, each vertex can be represented by a $m$-tuple in $\set{0,1}$, where there is an edge if and only if two vertices share all values of their tuple but one. Let us now fix two pairs of vertices $(u,v)$ and $(u', v')$ with the same distance between them. To prove that their privacy loss is the same, it is sufficient to exhibit an graph isomorphism $\Phi$ that sends $(u,v)$ on $(u', v')$.

By construction, $d(u,v)$ corresponds to the number of coordinates that differ between their tuple, and the same holds for $(u',v')$. The set of equal coordinates $Fix(u,v)$ is thus of the same size  $m-d(u,v)$ than $Fix(u', v')$. Hence we can construct a bijective function $b$ of the coordinates that is stable for the set of fixed coordinates
\begin{equation*}
b(Fix(u,v)) = Fix(u',v') \quad b(\set{1, 2, \dots, m}\setminus Fix(u,v) ) = \set{1, 2, \dots, m}\setminus Fix(u',v')
\end{equation*}
Finally, noting that $0$ and $1$ play the same role, we match each coordinate accordingly to the value defined by our couple. We thus define our isomorphism per coordinate $\Phi(w) = (\Phi_1(w), \dots, \Phi_m(w))$ with $\Phi_i(w) = s(w_{b^{-1}(i)})$ where $s$ is the identity function if $u_{b^{-1}(i)} = u_i$ and the swap function otherwise. This function is clearly an isomorphism: by construction it is a bijection, and the edges still exist if and only if the vertices differ on only one coordinate. We have $\Phi(u) = u'$ and $\Phi(v) = v'$ and thus the privacy loss is equal between the two pairs of vertices.

\subsection{Random Geometric Graphs}

 \begin{figure}
    \centering
    \includegraphics[width=.5\textwidth]{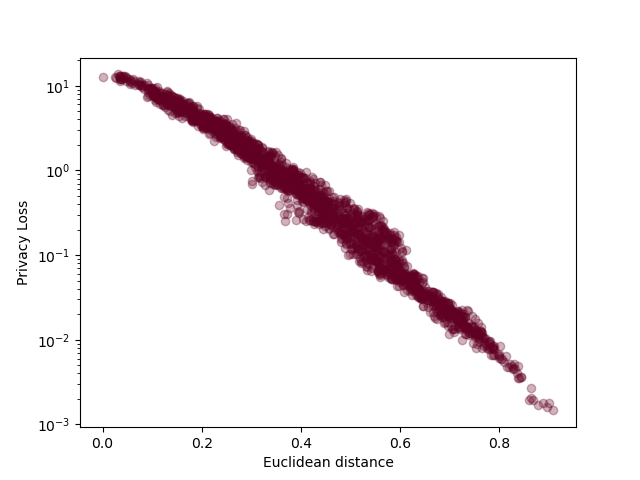}
    \caption{Privacy towards all the nodes as function of the Euclidean distance in a random geometric graph. We see a high level of correlation between the Euclidean distance and the privacy loss.}
    \label{fig:geo}
\end{figure}
Geometric graphs are examples of possible use cases of Pairwise Network Differential Privacy. Constructing edges when nodes are at a distance below a given threshold naturally models short-range wireless communications such as Bluetooth. In this situation, the Euclidean distance between nodes is a convenient indicator for setting the privacy loss. Indeed, it is a parameter that we can measure, and it can match the users expectations in terms of privacy loss. For instance, if direct neighbors in the graph correspond to people within $5$ meters around the sender, some information are bound to be available to them independently of what may be revealed by the communication itself: sensitive attributes such a gender, age, or overall physical fitness are leaked simply from physical proximity. However, the user might have stronger privacy expectations for people far away.
Hence, having privacy guarantees as function of the Euclidean distance can be particularly interesting.

Our experiments show that the privacy loss is extremely well correlated to the Euclidean distance, as represented in Figure~\ref{fig:geo}. It is thus possible to design algorithms where one could impose Pairwise Network DP for a function $f(u,v) = g(\NRM{z_u - z_v})$ where $g$ is a non-increasing function and $z_u$ and $z_v$ are the geolocation of nodes $u$ and $v$.

\subsection{Facebook Ego Graphs}

We report figure on the other nine graphs of the Facebook Ego dataset, following the same methodology and scale. Across these graphs, we can see that privacy losses depending on visible communities is consistent through datasets, and become more consistent as the number of nodes increase.
\begin{figure}
    \centering
    \includegraphics[width=\textwidth]{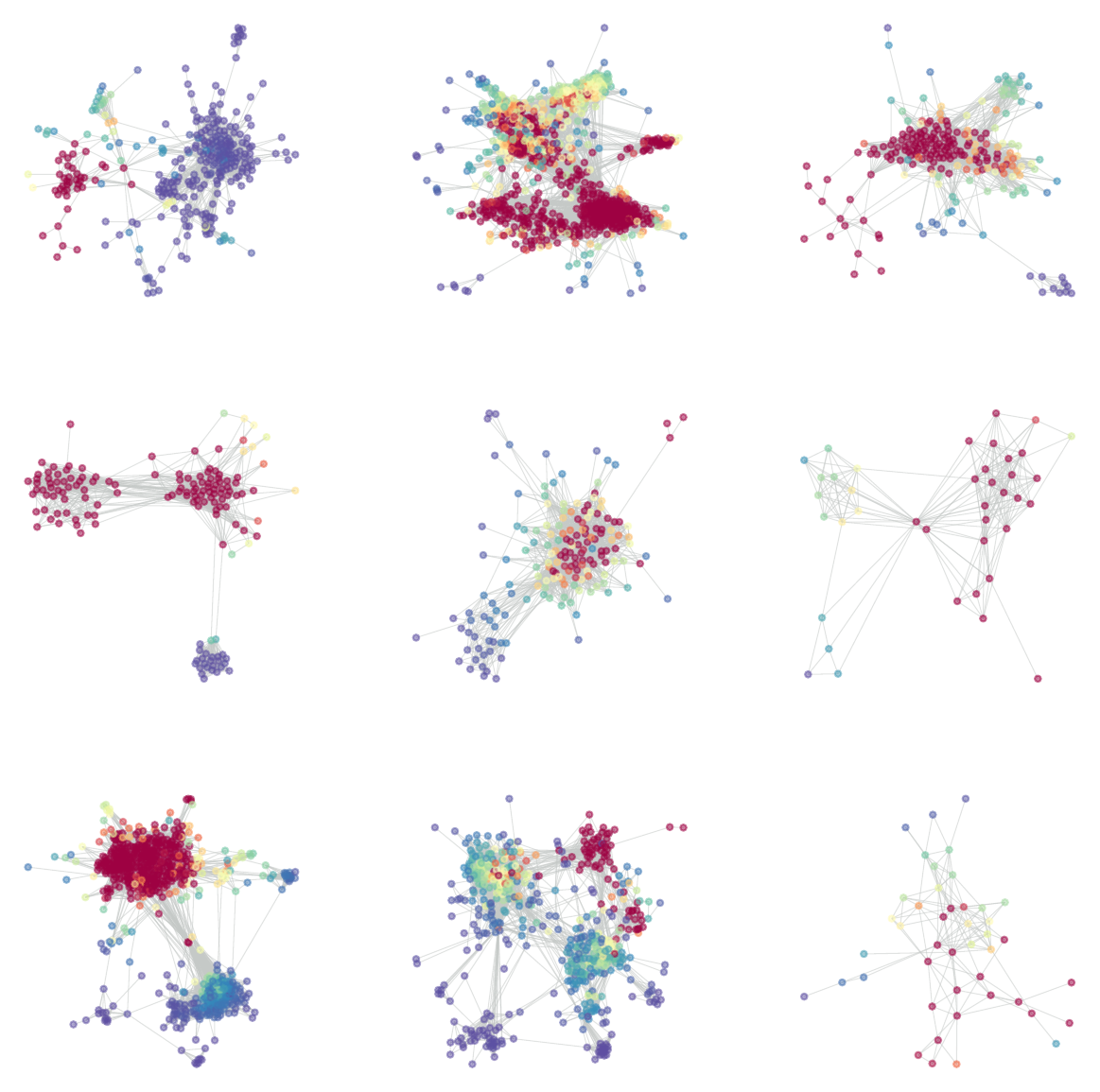}
    \caption{Privacy loss on the 9 other Facebook Ego graphs, following the same methodology as in Figure~\ref{fig:fb}.}
    \label{fig:my_label}
\end{figure}

\subsection{Logistic Regression on Houses Dataset}

We report in Table~\ref{table:par} the parameters used in the experiments of Figure~\ref{fig:gd}.
\begin{table}[t]
    \caption{Parameter for the logistic regression}
    \label{table:par}
    \centering
    \vspace*{3pt}
    \begin{tabular}{lr}
        \toprule
        \textbf{Parameters} &  Value \\
        \midrule
        \# of trials & $10$\\
        Step-size & $0.7$\\
        \# of nodes & $2000$ or $4000$\\
        probability of edges $q$ & $\log(n)/n$\\
        score & Mean accuracy\\
        \bottomrule
    \end{tabular}
\end{table}

\begin{figure}[t]
\centering\vspace{-9pt}
\hspace{-1em}

\subfigure{
    \includegraphics[width=0.33\linewidth]{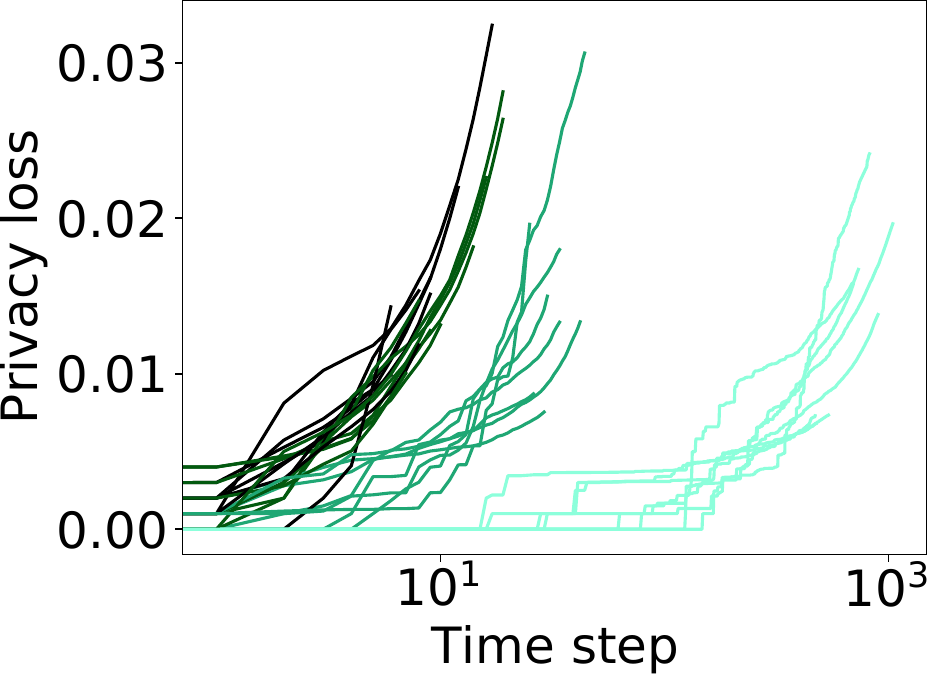}
    \label{fig:privacydropout}
}
\hspace{-1em}
\subfigure{
    \includegraphics[width=0.51\linewidth]{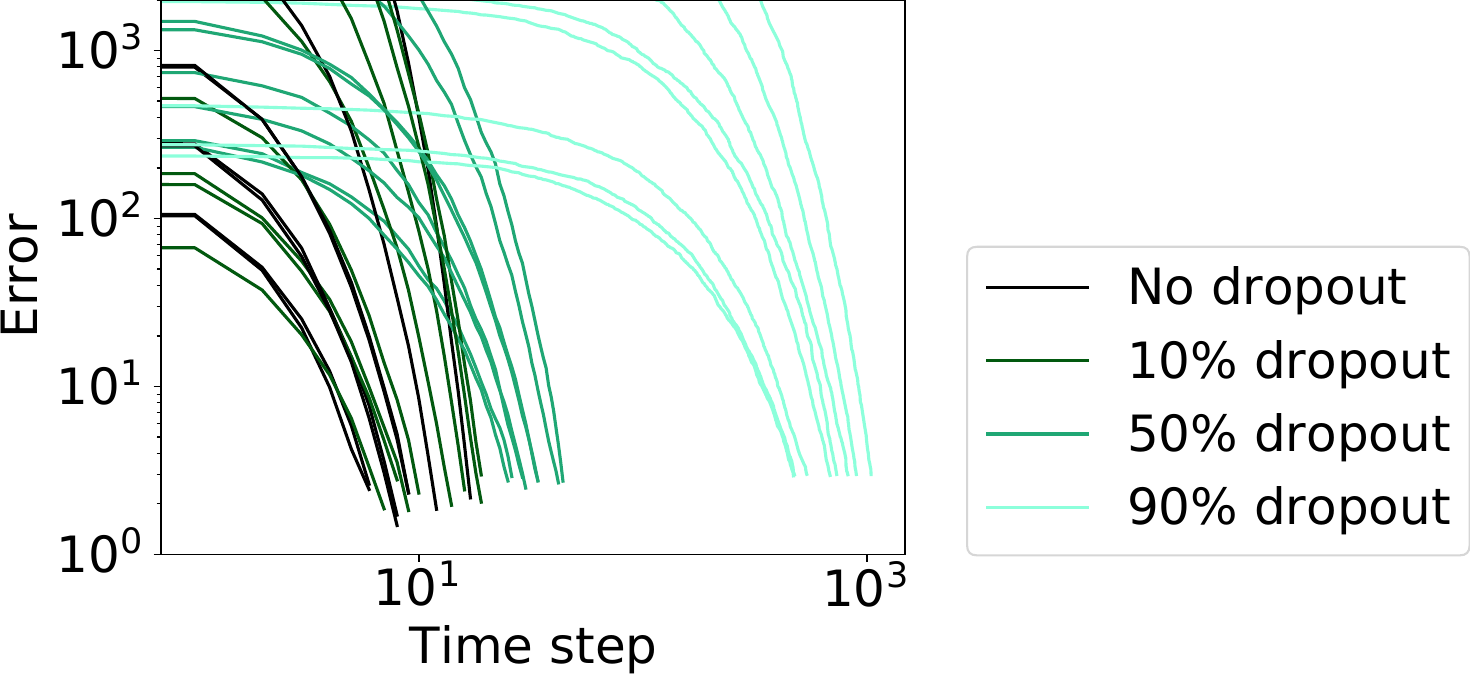}
    \label{fig:errordropour}
}

\caption{Simulation (10 runs) of a dropout scenario (with different levels of dropout) with $n=1000$, where at each gossip step an Erdös-Renyi graph with parameter $p=0.002$ over the set of available users. (a)~Left: Privacy loss of a node across iterations ;(b)~Right: Convergence of the same runs to the mean.
}
\end{figure}

\begin{table}[t]
    \caption{Total Privacy loss in function of dropout}
    \label{table:dropout}
    \centering
    \vspace*{3pt}
    \begin{tabular}{lr}
        \toprule
        \textbf{Dropout} &  Privacy loss \\
        \midrule
        No dropout & $(1.7 \pm 0.6) \times 10^{-2}$\\
        10\% dropout &  $(1.8 \pm 0.6) \times 10^{-2}$\\
        50\% dropout & $(1.5 \pm 0.7) \times 10^{-2}$\\
        90\% dropout &  $(1.4 \pm 0.6) \times 10^{-2}$\\
        \bottomrule
    \end{tabular}
\end{table}

\subsection{Modeling User Dropout using Time-Varying Graphs}

In Theorem~\ref{thm:privacy_general}, we analyzed the very generic case where the gossip matrix is arbitrary at each time-step. Then, for deriving the convergence rate and obtaining closed-form privacy-utility trade-offs and using acceleration, we focused on fixed gossip matrices until the convergence of each gossip averaging step. In this subsection, we show empirically that we can still reach a good privacy-utility trade-offs when the gossip matrix change at each communication. 

In particular, our experiment focuses on modelling user dropouts. Specifically, at each time step, the availability of each node is modeled by an independent Bernouilli random variable and we draw a new Erdos Renyi graph over the set of available nodes. We assume that at each step, each node has the same given probability to be active so as to ensure convergence to the mean of the values. One could design more sophisticated dropout models, as long as the contributions of nodes remain balanced.

We vary the expected level of available nodes at each step from 10 to 90\%. Note that the number of iterations needed for convergence becomes stochastic: we thus set an arbitrary number of iterations experimentally, and average several runs. For simplicity, we do not perform gossip acceleration. We report several runs at each dropout level in Figure \ref{fig:privacydropout}, and report the privacy loss and its standard deviation at each dropout level in Table~\ref{table:dropout}. As expected, the convergence time increases with the proportion of inactive users, but the achievable privacy-utility trade-off is not significantly impacted. Therefore, our approach scales gracefully difficulty to the situations where there is dropout.

\section{Broader Impact Assessment}

Our work promotes increased privacy in federated ML. The potential longer-term benefits of our
work in this respect are a wider adoption of privacy-preserving and fully decentralized ML solutions
by service providers thanks to the improved privacy-utility guarantees, as well as better confidence of users
and the general public in the ability of decentralized ML systems to avoid catastrophic data leaks. In particular, our work shows that the advantages of decentralized methods over centralized ones have been overlooked, due to the lack of privacy analysis able to capture the benefits of decentralized algorithms.

Conversely, there are potential risks of accidental or deliberate misuse of our work in the sense that it
could give a false sense of privacy to users if weak privacy parameters are used in deployment. This
applies to all work on differential privacy. More applied research is needed towards developing a
methodology to choose appropriate privacy parameters in a data-driven manner and to reliably assess
the provided protection in practical use-cases.

Our work specifically proposes a varying privacy budget that depends on the relation between users, which might be misused for giving smaller privacy guarantees than the ones that would be designed otherwise. Modularity in privacy guarantees has however already been studied, for instance in \cite{chatzikokolakis:hal-00767210} where the privacy budget is a function of a metric on the input space. Informally, defining what is an acceptable privacy budget based on the context in which some information is revealed is in line with the idea of contextual integrity \cite{a2921ab0e3bc4e2f890006101e85a15f}. According to Helen Nissenbaum's theory, the privacy expectations should take into account five elements: the sender, the receiver, the message, the medium of transmission and the purpose. Mathematically, adapting the privacy guarantee to the receiver and promoting peer-to-peer communications for building a global model thus naturally fits this view. In particular, contextual integrity emphasizes that the privacy budget should not only depend on the information being transmitted, but also on who receives it.

\end{document}